\crefname{ineq}{inequality}{inequalities}
\crefname{fact}{fact}{facts}
\crefname{claim}{claim}{claims}
\crefname{equation}{equation}{equations}
\crefname{algorithm}{algorithm}{algorithm}
\crefname{remark}{remark}{remarks}
\crefname{conjecture}{conjecture}{conjectures}
\declaretheorem[style=plain,numberwithin=section]{theorem}
\declaretheorem[style=plain,numberlike=theorem]{lemma,proposition,corollary}
\declaretheorem[style=remark,numberlike=theorem]{remark}
\declaretheorem[style=plain,numberlike=theorem]{definition,conjecture,fact,claim}
\numberwithin{equation}{section}
\newcommand{\BQP}{\textnormal{\textsf{BQP}}\xspace}
\newcommand{\PreciseBQP}{\textnormal{\textsf{PreciseBQP}}\xspace}
\newcommand{\SZK}{\textnormal{\textsf{SZK}}\xspace}
\newcommand{\QSZK}{\textnormal{\textsf{QSZK}}\xspace}
\newcommand{\NP}{\textnormal{\textsf{NP}}\xspace}
\newcommand{\QMA}{\textnormal{\textsf{QMA}}\xspace}
\newcommand{\PreciseQMA}{\textnormal{\textsf{PreciseQMA}}\xspace}
\newcommand{\QIPii}{\textnormal{\textsf{QIP(2)}}\xspace}
\newcommand{\QIP}{\textnormal{\textsf{QIP}}\xspace}
\newcommand{\PP}{\textnormal{\textsf{PP}}\xspace}
\newcommand{\AMcapcoAM}{\mathsf{AM \cap coAM}}
\newcommand{\StoqMA}{\textnormal{\textsf{StoqMA}}\xspace}
\newcommand{\NQP}{\textnormal{\textsf{NQP}}\xspace}
\newcommand{\coCeP}{\mathsf{coC_{=}P}}
\newcommand{\Class}{\textnormal{\textsf{C}}\xspace}
\renewcommand{\bra}[1]{\langle{#1}|}
\renewcommand{\ket}[1]{|{#1}\rangle}
\newcommand{\ketbra}[2]{\left| #1 \right\rangle \left\langle #2 \right|}
\newcommand{\innerprod}[2]{\left\langle #1 | #2 \right\rangle}
\newcommand{\Tr}{\mathrm{Tr}}
\newcommand{\rank}{\mathrm{rank}}
\newcommand{\sign}{\mathrm{sgn}}
\newcommand{\SWAP}{\mathrm{SWAP}}
\newcommand{\td}{\mathrm{T}}
\newcommand{\binH}{\mathrm{H_2}}
\newcommand{\D}{\mathrm{D}}
\newcommand{\textoverline}[1]{$\overline{\mbox{#1}}$}
\newcommand{\SD}{\mathrm{SD}}
\newcommand{\TD}{\mathrm{TD}}
\newcommand{\Hsquare}{\mathrm{H}^2}
\renewcommand{\H}{\mathrm{H}}
\newcommand{\HSsquare}{\mathrm{HS}^2}
\newcommand{\HS}{\mathrm{HS}}
\newcommand{\Bsquare}{\mathrm{B}^2}
\newcommand{\B}{\mathrm{B}}
\newcommand{\Col}{\mathrm{Col}}
\newcommand{\F}{\mathrm{F}}
\newcommand{\JS}{\mathrm{JS}}
\renewcommand{\S}{\mathrm{S}}
\newcommand{\QJS}{\textnormal{\textrm{QJS}}\xspace}
\newcommand{\measQJS}{\mathrm{QJS}^{\rm meas}}
\newcommand{\QHD}{\mathrm{QH^2}}
\newcommand{\QTD}{\textnormal{\textrm{QTD}}\xspace}
\newcommand{\measQTD}{\mathrm{QTD}^{\rm meas}}
\newcommand{\SDP}{\textnormal{\textsc{SDP}}\xspace}
\newcommand{\coSDP}{\textnormal{\textoverline{\textsc{SDP}}}\xspace}
\newcommand{\TDP}{\textnormal{\textsc{TDP}}\xspace}
\newcommand{\JSP}{\textnormal{\textsc{JSP}}\xspace}
\newcommand{\QJSP}{\textnormal{\textsc{QJSP}}\xspace}
\newcommand{\QEDP}{\textnormal{\textsc{QEDP}}\xspace}
\newcommand{\QTDP}{\textnormal{\textsc{QTDP}}\xspace}
\newcommand{\QSDP}{\textnormal{\textsc{QSDP}}\xspace}
\newcommand{\coQSDP}{\texorpdfstring{\textnormal{\textoverline{\textsc{QSDP}}}}\xspace}
\newcommand{\measQTDP}{\textnormal{\textsc{measQTDP}}\xspace}
\newcommand{\prob}{\textnormal{\textsc{Prob}}\xspace}
\newcommand{\GapQSD}{\textnormal{\textsc{GapQSD}}\xspace}
\renewcommand{\Pr}[1]{\mathrm{Pr}\left[#1\right]}
\newcommand{\binset}{\{0,1\}}
\newcommand{\supp}[1]{\mathrm{supp}\left( #1 \right)}
\newcommand{\poly}{\mathrm{poly}}
\newcommand{\diag}{\mathrm{diag}}
\newcommand{\trho}{\tilde{\rho}}
\newcommand{\bbC}{\mathbb{C}}
\newcommand{\bbE}{\mathbb{E}}
\newcommand{\bbR}{\mathbb{R}}
\newcommand{\bbN}{\mathbb{N}}
\newcommand{\calA}{\mathcal{A}}
\newcommand{\calB}{\mathcal{B}}
\newcommand{\calE}{\mathcal{E}}
\newcommand{\calH}{\mathcal{H}}
\newcommand{\calM}{\mathcal{M}}
\newcommand{\calU}{\mathcal{U}}
\newcommand{\calX}{\mathcal{X}}
\newcommand{\sfA}{\mathsf{A}}
\newcommand{\sfB}{\mathsf{B}}
\newcommand{\sfC}{\mathsf{C}}
\newcommand{\sfF}{\mathsf{F}}
\newcommand{\sfS}{\mathsf{S}}
\newcommand{\sfY}{\mathsf{Y}}
\newcommand{\sfZ}{\mathsf{Z}}
\DeclarePairedDelimiter\rbra{\lparen}{\rparen}
\DeclarePairedDelimiter\sbra{\lbrack}{\rbrack}
\DeclarePairedDelimiter\cbra{\{}{\}}
\DeclarePairedDelimiter\abs{\lvert}{\rvert}
\DeclarePairedDelimiter\norm{\lVert}{\rVert}
\DeclarePairedDelimiter\ceil{\lceil}{\rceil}
\begin{document}

\setlength{\abovedisplayskip}{6pt}
\setlength{\belowdisplayskip}{6pt}

\title{Quantum state testing beyond the polarizing regime and \\quantum triangular discrimination}
\author{Yupan Liu\thanks{Email: yupan.liu.e6@math.nagoya-u.ac.jp}}
\affil{Graduate School of Mathematics, Nagoya University}
\date{}
\maketitle

\pagenumbering{roman}
\thispagestyle{empty}

\begin{abstract}
The complexity class Quantum Statistical Zero-Knowledge (\QSZK{}) captures computational difficulties of the time-bounded quantum state testing problem with respect to the trace distance, deciding whether $\td(\rho_0,\rho_1)$ is at least $\alpha$ or at most $\beta$, known as the Quantum State Distinguishability Problem (\QSDP{}) introduced by \hyperlink{cite.Wat02}{Watrous (FOCS 2002)}. 
However, $\QSDP[\alpha,\beta]$ is in \QSZK{} only within the constant polarizing regime, where $\alpha$ and $\beta$ are constants satisfying $\alpha^2 > \beta$ (rather than $\alpha > \beta$), similar to its classical counterpart shown by \hyperlink{cite.SV97}{Sahai and Vadhan (JACM 2003)} due to the polarization lemma (error reduction for \SDP{}). 

Recently, \hyperlink{cite.BDRV19}{Berman, Degwekar, Rothblum, and Vasudevan (TCC 2019)} extended the \SZK{} containment of \SDP{} beyond the polarizing regime via the time-bounded distribution testing problems with respect to the triangular discrimination and the Jensen-Shannon divergence.  
Our work introduces \textit{proper} quantum analogs for these problems by defining quantum counterparts for triangular discrimination. We investigate whether the quantum analogs behave similarly to their classical counterparts and examine the limitations of existing approaches to polarization regarding quantum distances. These new \QSZK{}-complete problems improve \QSZK{} containments of \QSDP{} beyond the polarizing regime and establish a simple \QSZK{}-hardness for the quantum entropy difference problem (\QEDP{}) defined by \hyperlink{cite.BASTS10}{Ben-Aroya, Schwartz, and Ta-Shma (ToC 2010)}. 
Furthermore, we prove that \QSDP{} with some exponentially small errors is in \PP{}, while the same problem without error is in \NQP{}. 
\end{abstract}

\newpage
\tableofcontents
\thispagestyle{empty}
\newpage
\pagenumbering{arabic}


\section{Introduction}

The quantum state testing problem is generally about telling whether one quantum (mixed) state is close to the other with respect to a chosen distance-like measure, which is an instance of the emerging field of quantum property testing~\cite{MdW16}. This problem generalizes the classical question of testing whether two probability distributions are close, known as distribution testing~\cite{Canonne20}. 
This problem typically focuses on the number of samples (sample complexity) needed to distinguish the states of interest. In this paper, however, we concentrate on understanding the computational complexity of the \textit{time-bounded quantum state testing} problem. 
In particular, we consider the case where the two states of interest are prepared by polynomial-size quantum circuits and subsequently tracing out non-output qubits. 

The \textsc{Quantum State Distinguishability Problem} (\QSDP{}), defined by Watrous~\cite{Wat02}, is a time-bounded state testing problem with respect to the trace distance, serving as the quantum analog of the \textsc{Statistical Difference Problem} introduced by Sahai and Vadhan~\cite{SV97}.
This promise problem is essential in quantum complexity theory and quantum cryptography, closely linked to quantum statistical zero-knowledge (\QSZK{}). 
The input to this problem consists of the description of two polynomial-size quantum circuits $Q_0$ and $Q_1$ that prepare (the purification of) corresponding quantum (mixed) states $\rho_0$ and $\rho_1$, respectively. 
In \QSDP{}, \textit{yes} instances are those in which the trace distance between these states is at least $\alpha$, while \textit{no} instances are those in which the distance is at most $\beta$, where $0 \leq \beta < \alpha \leq 1$. 
Any input circuits that do not fit into either of these categories are considered outside the promise. 
In this paper, we generalize the parameters $\alpha$ and $\beta$ from constant to efficiently computable functions, denoting this version as $\QSDP[\alpha,\beta]$. 

Error reduction for \SDP{}, known as the \textit{polarization lemma}~\cite{SV97}, polarizes the statistical distance between two probability distributions. Put it differently, for any constants $\alpha$ and $\beta$ such that $\alpha^2 > \beta$, the lemma constructs new distributions such that either very far apart (approaching $1$) for \textit{yes} instances or very close (approaching $0$) for \textit{no} instances, thereby reducing errors on both sides. 
By employing the polarization lemma, the \SZK{} containment of $\SDP[\alpha,\beta]$ when $\alpha^2>\beta$, denoted as the \textit{constant polarizing regime}, is established in~\cite{SV97}. Furthermore, an analog of the direct product lemma for the Hellinger affinity leads to error reduction for \StoqMA{} when the error for \textit{yes} instances is negligible~\cite{Liu21}. 

Sahai and Vadhan left an open problem about reducing error parameters $\alpha$ and $\beta$ beyond the constant polarizing regime, specifically considering the \textit{non-polarizing regime} where $\alpha > \beta > \alpha^2$. 
This challenge also extends to the quantum counterpart \QSDP{}. 
Recently, Berman, Degwekar, Rothblum, and Vasudevan~\cite{BDRV19} made significant progress in addressing this problem by examining the limitations of existing polarization approaches. As a result, they extended the \SZK{} containment of \SDP{} beyond the constant polarizing regime:\footnote{As indicated in~\cite{BDRV19}, \SDP{} is in \SZK{} for $\alpha^2(n)-\beta(n) \geq 1/O(\log n)$ by inspecting constructions in~\cite{SV97}. }
\begin{theorem}[Informal of~\cite{BDRV19}]
    \label{thm:informal-SDP-in-SZK}
    $\SDP[\alpha,\beta]$ is contained in \SZK{} under the following parameter regimes for $\alpha$ and $\beta$\emph{:} 
    \begin{enumerate}[label={\upshape(\roman*)}, topsep=0.33em, itemsep=0.33em, parsep=0.33em]
        \item \label{thmitem:SDPinSZK-polarizing} $\alpha^2(n)-\beta(n) \geq 1/\poly(n)$; 
        \item $\alpha(n)-\beta(n)\geq 1/\poly(n)$, and the $\SDP[\alpha,\beta]$ instance satisfies an additional condition involving the statistical distance $(\SD)$ and the triangular discrimination $(\TD)$.\footnote{More precisely, in the \emph{yes} case, the $\SDP[\alpha,\beta]$ instance is denoted by a pair of polynomial-size Boolean circuits $(C_0,C_1)$ that induce the distributions $(p_0,p_1)$. In the \emph{no} case, the instance is denoted by a  different pair $(C'_0,C'_1)$ inducing the distributions $(p'_0,p'_1)$. These distributions satisfy the following conditions: $\SD(p_0,p_1) > \SD(p'_0,p'_1) > \SD^2(p_0,p_1)$ and $\TD(p_0,p_1) > \TD(p'_0,p'_1)$.}
    \end{enumerate}
\end{theorem}

The proof of \Cref{thm:informal-SDP-in-SZK} entails a series of clever reductions to two time-bounded distribution testing problems: the Jensen-Shannon divergence problem (\JSP{}) and the triangular discrimination problem (\TDP{}).\footnote{See \Cref{table:informal-defs-distances} for informal definitions of the Jensen-Shannon divergence and the triangular discrimination.} These distances are a focal point because they capture the limitation of two known approaches to polarization. In particular, the original polarization lemma~\cite{SV97} focuses on reducing errors alternately for \textit{yes} instances (direct product lemma) and \textit{no} instances (XOR lemma). The triangular discrimination not only implies the latter by definition but also improves the former compared to the statistical distance scenario.\footnote{More precisely, given distributions $p_0$ and $p_1$, one can efficiently construct distributions $p'_0$ and $p'_1$ such that $\TD(p'_0,p'_1)=\TD(p_0,p_1)^k$, which is known as the XOR lemma. One can also efficiently construct distributions $p''_0$ and $p''_1$ such that $1-\exp\rbra*{-\delta k/2} \leq \TD\rbra*{p_0^{\otimes k}, p_1^{\otimes k}} \leq 2k\delta$, where $\TD(p_0,p_1) = \delta$. This is referred to as the direct product lemma. Notably, the dependence on $\delta$ in the lower bound improves from $\delta^2$ to $\delta$ compared to the case of the statistical difference. For further details, see~\cite[Section 4.2.2]{BDRV19}.} Additionally, the entropy extraction approach~\cite{GV99} is essentially based on the Jensen-Shannon divergence,\footnote{This connection arises from the fact that the Jensen-Shannon divergence can be interpreted as the (conditional) entropy difference, as indicated implicitly in Salil Vadhan's PhD thesis~\cite{Vad99}.} which can be viewed as a distance version of entropy difference. 

This work addresses a similar challenge in the quantum world. While classical distances often have several quantum counterparts, the trace distance remains the \textit{sole} quantum analog of the statistical distance. Consequently, the polarization lemma almost directly applies to the trace distance within the constant polarizing regime, as noted in~\cite{Wat02}.
In contrast, quantum counterparts of the Jensen-Shannon divergence and the triangular discrimination -- key tools for examining the limitations of existing techniques in polarizing quantum distances -- either have several choices or have not been defined yet. Defining \textit{proper} quantum analogs for \JSP{} and \TDP{} is therefore a nontrivial task, as these analogs may exhibit behavior distinct from their classical counterparts. 

\paragraph{Why do parameter regimes matter?} Beyond our particular motivation to understand the polarization lemma for quantum distances, we would like to emphasize the general importance of the parameter regime. In computational complexity theory, we typically use \textit{worst-case hardness}, where \textit{a few} \Class{}-hard instances are sufficient to identify that a computational problem \prob{} is hard for the class \Class{}. However, to demonstrate a \Class{} containment of \prob{}, we need to show this containment holds for \textit{all} instances of \prob{} with completeness $c$ and soundness $s$ (the acceptance probability for \textit{yes} instances and \textit{no} instances, respectively), such that $c-s$ is at least the designated (usually polynomial) precision. 
 
Otherwise, we may risk having \textit{a somewhat ``fake'' complete problem}. For instance, if a promise problem \prob{} is proven to be \QSZK{}-hard and is contained in \QSZK{} for a certain parameter regime, then we cannot rule out the possibility that parameter regimes not yet known to be in \QSZK{} may be inherently \QIPii{}-hard, where \QIPii{} denotes the class of promise problems that admit two-message quantum interactive proof systems and contains \QSZK{}.
This possibility suggests that \prob{} is not \QSZK{}-complete unless $\QSZK{}=\QIPii{}$. Unfortunately, such parameter regime issues have appeared in many previous works, such as \cite{Wat02,GHMW15}. To the best of our knowledge, the only \QSZK{}-complete problem with a natural parameter regime, such as $\alpha(n)-\beta(n) \geq 1/\poly(n)$, for which the containment holds, is the \textsc{Quantum Entropy Difference Problem} introduced in~\cite{BASTS10}; see also \Cref{subsubsec:QEDP}.

Resolving such parameter regime issues is often \textit{technically challenging}. A specific example is the low-rank variant of \QSDP{}, where the rank of states $\rho_0$ and $\rho_1$ is at most polynomial in $n$. By leveraging rank-dependent inequalities between the trace distance and the Hilbert-Schmidt distance, such as~\cite[Equation 1.31]{AS17} or~\cite[Equation 6]{CCC19}, we can show that this low-rank variant of \QSDP{} is in \BQP{} for certain parameter regime (with polynomial precision). This is achieved through a clever use of the SWAP test~\cite{BCWdW01}, similar to \Cref{subsec:negl-coQSDP-in-PP}. However, a \BQP{} containment of this problem under the natural regime, as established in~\cite{WZ23}, requires more sophisticated techniques. 
 
\subsection{Main results}

\paragraph{Quantum state testing beyond the constant polarizing regime.}
We introduce two time-bounded state testing problems: the \textsc{Quantum Jensen-Shannon Divergence Problem} (\QJSP{}) and the \textsc{Measured Quantum Triangular Discrimination Problem} (\measQTDP{}). \QJSP{} corresponds to the quantum Jensen-Shannon divergence ($\QJS_2$) defined in~\cite{MLP05},\footnote{The notation $\QJS_2$ denotes the quantum Jensen-Shannon divergence defined using the base-$2$ logarithm.} while \measQTDP{} involves a quantum analog of the triangular discrimination ($\measQTD$) to be explained later. The \QSZK{} containments of these problems, as stated in \Cref{thm:QSZK-containments-informal}, also result in improved \QSZK{} containments of \QSDP{}:\footnote{The reader may feel confused with~\cite[Theorem 5.4]{VW16} on the \QSZK{} containment of \QSDP{} which builds upon adapting techniques in~\cite{SV97}. However, it was claimed in~\cite{GV11} that the proof in~\cite{SV97} does extend to the parameter regime of $\alpha^2(n)-\beta(n) \geq 1/\poly(n)$, but this claim was later retracted, see \cite{Goldreich19}.} 

\begin{theorem}[Improved \QSZK{} containments of \QSDP{}, informal]
\label{thm:QSZK-containments-informal}
For time-bounded state testing problems with respect to the quantum Jensen-Shannon divergence and the measured triangular discrimination problem, specifically \QJSP{} and \measQTDP{}, the following holds, where $n$ denotes the number of qubits used by the states $\rho_0$ and $\rho_1$\emph{:} 
\begin{enumerate}[label={\upshape(\roman*)}, topsep=0.33em, itemsep=0.33em, parsep=0.33em]
    \item \label{thmitem:QSDPinQSZK-polarizing} $\QJSP[\alpha,\beta]$ is in \QSZK{} if $\alpha(n)-\beta(n) \geq 1/\poly(n)$. 
    
    Consequently, $\QSDP[\alpha,\beta]$ is in \QSZK{} if $\alpha^2(n)-\sqrt{2\ln{2}} \beta(n) \geq 1/\poly(n)$. 
    \item \label{thmitem:QSDPinQSZK-nonpolarizing} $\measQTDP[\alpha,\beta]$ is in \QSZK{} if $\alpha(n)-\beta(n) \geq 1/O(\log{n})$. 
    
    This containment further implies that $\QSDP[\alpha,\beta]$ is in \QSZK{} if $\alpha(n)-\beta(n)\geq 1/O(\log{n})$ and the $\QSDP[\alpha,\beta]$ instance satisfies an additional condition: let $(\rho_0,\rho_1)$ and $(\rho'_0,\rho'_1)$ denote the two pairs of quantum states prepared by the respective polynomial-size quantum circuits -- that is, the $\QSDP[\alpha,\beta]$ instance -- in the \emph{yes} and \emph{no} cases. Then, these quantum states must satisfy the following condition\emph{:}\footnote{If $\frac{\rho_0+\rho_1}{2}$ is diagonal of full rank (see \Cref{footnote:measQTD-explicit-formula}), it is fairly effortless to find examples by numerical simulations. For instance, $(\rho_0,\rho_1)$ where $\rho_0=\frac{1}{2}(I+\frac{\sigma_X}{7}+\frac{\sigma_Y}{3}+\frac{\sigma_Z}{4})$ and $\rho_1=\frac{1}{2}(I-\frac{\sigma_X}{7}-\frac{\sigma_Y}{3}-\frac{\sigma_Z}{4})$, together with $(\rho'_0,\rho'_1)$ where $\rho'_0=\frac{1}{2}(I-\frac{\sigma_X}{7}-\frac{\sigma_Y}{5}-\frac{\sigma_Z}{6})$ and $\rho'_1=\frac{1}{2}(I+\frac{\sigma_X}{7}+\frac{\sigma_Y}{5}-\frac{\sigma_Z}{6})$.} 
    \[\td(\rho_0,\rho_1) > \td(\rho'_0,\rho'_1) > \td^2(\rho_0,\rho_1) \text{ and } \measQTD(\rho_0,\rho_1) > \measQTD(\rho'_0, \rho'_1).\] 
\end{enumerate}
\end{theorem}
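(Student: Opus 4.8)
The plan is to prove the two \QSZK{} memberships---$\QJSP{}\in\QSZK{}$ and $\measQTDP{}\in\QSZK{}$---and then read off the two \QSDP{} consequences as corollaries through distance inequalities, following the classical template of \cite{SV97,BDRV19} with the trace distance, the quantum Jensen--Shannon divergence $\QJS_2$, and the measured quantum triangular discrimination $\measQTD$ playing the roles of $\SD$, $\JS$, and $\TD$.

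For \ref{thmitem:QSDPinQSZK-polarizing}, the key observation is that $\QJS_2(\rho_0,\rho_1)=S\!\left(\frac{\rho_0+\rho_1}{2}\right)-\frac12\!\left(S(\rho_0)+S(\rho_1)\right)$ is the difference of the von Neumann entropies of two states that are efficiently preparable from the input circuits $Q_0,Q_1$: the classical--quantum state $\tau=\frac12\ketbra{0}{0}\otimes\rho_0+\frac12\ketbra{1}{1}\otimes\rho_1$ has $S(\tau)=1+\frac12\!\left(S(\rho_0)+S(\rho_1)\right)$, whereas $\frac{I}{2}\otimes\frac{\rho_0+\rho_1}{2}$ has entropy $1+S\!\left(\frac{\rho_0+\rho_1}{2}\right)$, so the two entropies differ by exactly $\QJS_2(\rho_0,\rho_1)$. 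Hence $\QJSP[\alpha,\beta]$ many-one reduces (after shifting one state's entropy to center the gap) to the quantum entropy difference problem \QEDP{}, which lies in \QSZK{} with a $1/\poly$ gap \cite{BASTS10}; this gives the first containment once $\alpha(n)-\beta(n)\ge 1/\poly(n)$. For the \QSDP{} consequence I would sandwich $\QJS_2$ between functions of the trace distance: the quantum Pinsker inequality applied to $S\!\left(\rho_b\,\middle\|\,\frac{\rho_0+\rho_1}{2}\right)$ for $b\in\{0,1\}$, together with $\td\!\left(\rho_b,\frac{\rho_0+\rho_1}{2}\right)=\frac12\td(\rho_0,\rho_1)$, yields a lower bound $\QJS_2(\rho_0,\rho_1)\ge\frac{1}{2\ln 2}\td^2(\rho_0,\rho_1)$, while a standard argument yields a matching upper bound linear in $\td(\rho_0,\rho_1)$; feeding these into the reduction sends a yes-instance ($\td\ge\alpha$) and a no-instance ($\td\le\beta$) of \QSDP{} to a \QJSP{} instance whose completeness--soundness gap is positive exactly when $\alpha^2(n)-\sqrt{2\ln 2}\,\beta(n)\ge 1/\poly(n)$ (the precise constant being bookkeeping once the sharp Pinsker constant is tracked).

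For \ref{thmitem:QSDPinQSZK-nonpolarizing}, I would first establish $\measQTDP[\alpha,\beta]\in\QSZK{}$ by reducing it to \QSDP{} in the polarizing regime, which lies in \QSZK{} by the polarization lemma for the trace distance \cite{Wat02}. Using the \emph{measured} triangular discrimination is exactly what makes this work: like the trace distance ($=$ measured statistical distance), $\measQTD$ is obtained by optimizing a classical $f$-divergence over POVMs, so the pointwise comparison $\SD^2\lesssim\TD\lesssim\SD$ lifts to $\td^2(\rho_0,\rho_1)\lesssim\measQTD(\rho_0,\rho_1)\lesssim\td(\rho_0,\rho_1)$, and the triangular-discrimination--tuned direct-product and XOR manipulations of \cite{BDRV19}---which already only lose a logarithmic factor for \TDP{}---carry over, turning an $\measQTDP$ instance with gap $\ge 1/O(\log n)$ into a polarizing-regime \QSDP{} instance, hence into \QSZK{}. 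The \QSDP{} consequence in the non-polarizing regime $\alpha^2\le\beta\le\alpha$ then follows the structure of \cite[Theorem~1.1(2)]{BDRV19}: given the two pairs $(\rho_0,\rho_1)$ and $(\rho_0',\rho_1')$ with the promised inequalities $\td(\rho_0,\rho_1)>\td(\rho_0',\rho_1')>\td^2(\rho_0,\rho_1)$ and $\measQTD(\rho_0,\rho_1)>\measQTD(\rho_0',\rho_1')$, one combines the two pairs through a product/XOR-type construction into a single pair whose $\measQTD$ is $\Omega(1/\log n)$ on yes-instances and close to $0$ on no-instances, and then invokes $\measQTDP\in\QSZK{}$.

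The step I expect to be the main obstacle is pinning down the ``right'' quantum analog of the triangular discrimination. A naive non-commutative definition---for instance $\Tr\!\left[(\rho_0-\rho_1)(\rho_0+\rho_1)^{-1}(\rho_0-\rho_1)\right]$, obtained by substituting an operator inverse---need not obey the two-sided inequalities with $\td$ on which the reduction relies and behaves badly under the tensor products that drive polarization, whereas the measured variant $\measQTD$ is engineered to keep both the comparison with $\td$ and the combinatorics of \cite{BDRV19} intact. Even so, one must still verify that the product/XOR constructions preserve the required $\measQTD$-inequalities, since measured divergences are not exactly multiplicative under tensor products and only super-/sub-multiplicativity estimates are available; carrying this out, together with the analogous control of the constants in the $\QJS_2$-versus-$\td$ sandwich, is where the bulk of the technical work lies.
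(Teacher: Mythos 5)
Your proposal is correct and follows essentially the same route as the paper: \QJSP{} is placed in \QSZK{} via the exact entropy identity comparing the classical--quantum state $\frac{1}{2}\ketbra{0}{0}\otimes\rho_0+\frac{1}{2}\ketbra{1}{1}\otimes\rho_1$ with a (suitably biased) coin tensored with $\frac{\rho_0+\rho_1}{2}$ and a reduction to \QEDP{}, the \QSDP{} consequence comes from sandwiching $\QJS_2$ between $\td$ and a constant multiple of $\td^2$ (your Pinsker derivation of the lower bound gives the same $\td^2/(2\ln 2)$ the paper obtains from the Holevo bound and the classical series), and \measQTDP{} is handled by XOR/direct-product polarization followed by $\td^2\leq\measQTD\leq\td$. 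The one step you flag without resolving---control of $\measQTD$ under tensor products in the direct-product lemma---is settled in the paper exactly as in the classical Hellinger argument, via $\frac{1}{2}\Bsquare\leq\measQTD\leq\Bsquare$ together with multiplicativity of the fidelity, while the XOR step is shown to be exactly multiplicative for $\measQTD$.
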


It is noteworthy that both \QJSP{} and \measQTDP{} are \QSZK{}-complete, where $\QJSP[\alpha,\beta]$ and $\measQTDP[\alpha,\beta]$ are \QSZK{}-hard if $\alpha(n) \leq 1-2^{-n^{1/2-\epsilon}}$ and $\beta(n) \geq 2^{-n^{1/2-\epsilon}}$ for sufficiently large $n$ and some constant $\epsilon\in(0,1/2)$. See \Cref{subsec:QSZK-hardness} for formal statements.

Importantly, our definitions of \measQTDP{} and \QJSP{} serve as \textit{proper} quantum analogs of \TDP{} and \JSP{}, respectively. 
The measured quantum triangular discrimination ($\measQTD{}$) exposes the limitation of the original polarization lemma approach~\cite{SV97,Wat02}, specifically achieving a quadratic improvement in the direct product lemma (\Cref{lemma:measQTD-yes-error-reduction}) with a natural inverse-logarithmic promise gap in its \QSZK{} containment. Notably, the \textsc{Quantum Triangular Discrimination Problem} (\QTDP{}), which is defined using a different quantum analog of the triangular discrimination (\QTD{}) to be explained later, \textit{does not} achieve a similar result.\footnote{The promise problem \QTDP{} suffers from the same parameter regime issue as in the \QSDP{} case: $\QTDP[\alpha,\beta]$ is in \QSZK{} only if $\alpha^2(n)-\beta(n) \geq 1/O(\log{n})$. See \Cref{thm:measQTDP-is-QSZK-complete}\ref{thmitem:QTDP-in-QSZK} for the formal statement.}

Our reductions for proving that \QJSP{} is \QSZK{}-complete also yield a simple \QSZK-hardness proof for the \textsc{Quantum Entropy Difference Problem} (\QEDP{}) introduced in~\cite{BASTS10}, as stated in \Cref{corr:simple-QSZKhardness-QEDP}. Consequently, the quantum Jensen-Shannon divergence captures the limitation of the quantum entropy extraction approach to polarization~\cite{BASTS10}. 
Notably, the quantum Jensen–Shannon divergence arises in the well-known Holevo bound~\cite{Holevo73}, and has long been used in the study of quantum communication complexity~\cite{CvDNT13,NS02}.\footnote{See \Cref{remark:QJS-applications} for further details.} 
However, our implication, namely \Cref{thm:QSZK-containments-informal}\ref{thmitem:QSDPinQSZK-polarizing}, is slightly weaker than the classical counterpart in \Cref{thm:informal-SDP-in-SZK}\ref{thmitem:SDPinSZK-polarizing}. This is because quantum analogs of the triangular discrimination exhibit \textit{distinct behavior} from the classical equivalent. 

\paragraph{Easy regimes for the class \QSZK{}.} 
For $\SDP[1-\epsilon,\epsilon]$, when the error parameter $\epsilon$ is at most some inverse-exponential, then this problem is in \PP{}. 
The existence of an oracle separating \SZK{} from \PP{}, as provided in~\cite{BCHTV19}, highlights the difficulty of establishing \SZK{}-hardness for \SDP{} instances that are contained in \PP{} (referred to as the \textit{easy regime}).\footnote{This challenge is due to the need for non-black-box techniques.}
Let \coQSDP{} and \coSDP{} denote the complement of \QSDP{} and \SDP{}, respectively. 
We establish a similar result for $\QSDP[1-\epsilon,\epsilon]$, where these instances become even easier to solve when error-free: 
\begin{theorem}[Easy regimes for \QSZK{}, informal]
\label{thm:easy-instances-QSZK-informal}
    Let $\epsilon(n)$ be an error parameter satisfying $\epsilon(n) \leq 2^{-n/2-1}$. Then, it holds that $\coQSDP[1-\epsilon,\epsilon]$ is in \PP{}.
    Furthermore, $\coQSDP[1,0]$ is in \NQP{} when there is no error.     
\end{theorem}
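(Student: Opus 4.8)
The plan is to prove the two statements about $\coQSDP[1-\epsilon,\epsilon]$ and $\coQSDP[1,0]$ separately, since they require quite different machinery despite the superficial similarity.

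\textbf{The $\PP$ containment.} First I would fix input circuits $Q_0, Q_1$ producing $\rho_0, \rho_1$ on $n$ output qubits. A \textit{no} instance of $\QSDP[1-\epsilon,\epsilon]$ (equivalently a \textit{yes} instance of $\coQSDP$) has $\td(\rho_0,\rho_1) \le \epsilon$, while a \textit{yes} instance has $\td(\rho_0,\rho_1) \ge 1-\epsilon$. The idea is to use a SWAP-test-style estimator to access a quantity that polynomially separates the two cases, then invoke the fact that $\PP$ can decide the sign of a difference of two acceptance probabilities computed by polynomial-size quantum circuits (this is the $\PP = \PostBQP$-type closure, or more directly that estimating $\Tr(\rho_0\rho_1)$, $\Tr(\rho_0^2)$, $\Tr(\rho_1^2)$ to exponential precision and comparing is a $\PP$ computation). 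Concretely, the SWAP test on two copies gives access to the Hilbert--Schmidt inner products; the purity-like combination $\Tr(\rho_0^2) + \Tr(\rho_1^2) - 2\Tr(\rho_0\rho_1) = \|\rho_0-\rho_1\|_{\HS}^2$ is small when $\td$ is small. For the separation I would use the rank-dependent inequality relating trace distance and Hilbert--Schmidt distance: $\td(\rho_0,\rho_1)^2 / \rank \le \tfrac14\|\rho_0-\rho_1\|_{\HS}^2 \le \td(\rho_0,\rho_1)^2$ (cf.\ the inequalities cited in the introduction, \cite{AS17,CCC19}), with $\rank \le 2^n$. So in the \textit{yes} case $\|\rho_0-\rho_1\|_{\HS}^2 \ge (1-\epsilon)^2/2^n \ge (1/2)^2/2^n$, whereas in the \textit{no} case $\|\rho_0-\rho_1\|_{\HS}^2 \le \epsilon^2 \le 2^{-n-2}$; with $\epsilon \le 2^{-n/2-1}$ these windows are separated by roughly $2^{-n-2}$. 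Since $\|\rho_0-\rho_1\|_{\HS}^2$ is a fixed linear combination of quantities each of which equals the acceptance probability of an explicit polynomial-size circuit (a SWAP test on the appropriate registers of $Q_0, Q_1$), a $\PP$ machine can compare it against the threshold $2^{-n-2}$; this is exactly the kind of exponentially small additive comparison $\PP$ handles via its $\#\P$-style counting power.

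\textbf{The $\NQP$ containment for the error-free case.} For $\coQSDP[1,0]$, a \textit{yes} instance has $\td(\rho_0,\rho_1)=0$, i.e.\ $\rho_0 = \rho_1$ exactly, while a \textit{no} instance has $\td(\rho_0,\rho_1)=1$. The plan is to observe that $\rho_0 = \rho_1$ iff $\|\rho_0-\rho_1\|_{\HS}^2 = 0$ iff $\Tr(\rho_0^2)+\Tr(\rho_1^2)-2\Tr(\rho_0\rho_1)=0$, and each of these three traces is (a rescaling of) a quantum circuit's acceptance probability, hence a $\#\P$/$\GapP$ quantity divided by a power of two. Thus $\|\rho_0-\rho_1\|_{\HS}^2 = 0$ is equivalent to a certain $\GapP$ function vanishing, which is precisely a $\coC_=\P$ condition; and since $\NQP = \coC_=\P$ by the Fenner--Fortnow--Kurtz--Li / Fortnow--Rogers characterization, we get $\coQSDP[1,0] \in \NQP$. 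Equivalently and more directly in the spirit of $\NQP$: build a quantum circuit whose acceptance probability is a polynomial multiple of $\|\rho_0-\rho_1\|_{\HS}^2$, so the circuit accepts with probability exactly zero in the \textit{yes} case and with positive probability in the \textit{no} case; $\coNQP$-type reasoning (acceptance probability zero) then places the complement correctly --- care is needed here to match the $\NQP$ convention (nonzero acceptance probability) versus its complement, so I would phrase it as: the \textit{no} instances are accepted with nonzero probability by an $\NQP$-type circuit, which is the reason the claim is stated for $\coQSDP$ rather than $\QSDP$.

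\textbf{Main obstacle.} The routine part is the SWAP-test bookkeeping and the rank inequality; the delicate part is making sure the parameter arithmetic in the $\PP$ case genuinely yields a \emph{fixed polynomial} (here just $2^{-n-2}$, an exponentially small but efficiently specified) threshold that separates the two HS-distance windows under the hypothesis $\epsilon \le 2^{-n/2-1}$, and that the comparison of an exponentially small additive quantity against that threshold is legitimately inside $\PP$ --- this requires invoking the precise statement that $\PP$ is closed under such comparisons of differences of $\mathsf{BQP}$-acceptance probabilities (via $\GapP$ closure properties), rather than a generic ``$\PP$ can estimate amplitudes'' slogan. The second subtle point, already flagged above, is getting the $\NQP$ versus $\coNQP$ direction exactly right, i.e.\ verifying that it is the \emph{complement} problem that lands in $\NQP$ because $\NQP$'s acceptance criterion is ``nonzero probability,'' which matches ``$\rho_0\ne\rho_1$'' and hence the \textit{no} instances of $\QSDP[1,0]$.
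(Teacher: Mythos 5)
Your \PP{} argument is essentially the paper's: a SWAP-test estimator for $\HSsquare(\rho_0,\rho_1)=\Tr(\rho_0^2)+\Tr(\rho_1^2)-2\Tr(\rho_0\rho_1)$, the rank-dependent inequalities between $\td$ and $\HS$, and $\PreciseBQP\subseteq\PP$ to resolve the exponentially small gap. One slip: after relaxing $(1-\epsilon)^2$ to $(1/2)^2$ your two windows become ``$\geq 2^{-n-2}$'' and ``$\leq 2^{-n-2}$,'' which touch rather than being ``separated by roughly $2^{-n-2}$.'' Keep $(1-\epsilon)^2\geq 1-2^{-n/2}$ (or use the sharper bound $\HSsquare\leq 2\,\td^2$, valid because $\rho_0-\rho_1$ is traceless, as the paper does) and the far window sits a constant factor above the close one, giving the $\Omega(2^{-n})$ acceptance-probability gap the argument needs; this is repairable bookkeeping, not a conceptual problem.

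The \NQP{} part, however, proves the wrong containment. Your circuit accepts with probability proportional to $\HSsquare(\rho_0,\rho_1)$, which is \emph{zero} on the \textit{yes} instances of $\coQSDP[1,0]$ (where $\rho_0=\rho_1$) and \emph{nonzero} on its \textit{no} instances (where $\td(\rho_0,\rho_1)=1$). An \NQP{} machine must accept \textit{yes} instances with nonzero probability and \textit{no} instances with probability exactly zero, so your construction is an \NQP{} algorithm for $\QSDP[1,0]$, i.e., it places $\coQSDP[1,0]$ in $\mathsf{C_{=}P}=\mathsf{coNQP}$ --- and $\NQP=\coCeP$ is not known (or believed) to be closed under complement, so this does not yield the theorem. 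Your own resolution of the ``subtle point,'' namely that the \textit{no} instances are the ones accepted with nonzero probability, is exactly backwards. The quantity you need is the overlap $\Tr(\rho_0\rho_1)$, not the Hilbert--Schmidt distance, together with the purity lower bound: when $\rho_0=\rho_1$ we have $\Tr(\rho_0\rho_1)=\Tr(\rho_0^2)\geq 2^{-n}>0$, whereas orthogonal supports force $\Tr(\rho_0\rho_1)=0$ exactly. This is the quantum analog of the classical folklore witness, where a collision certifies \emph{equality} of the two distributions. The paper realizes it by running the SWAP test, whose acceptance probability $\tfrac{1}{2}(1+\Tr(\rho_0\rho_1))$ equals exactly $1/2$ on \textit{no} instances, and then applying one round of exact amplitude amplification to rotate ``exactly $1/2$'' into ``exactly $0$''; your $\mathsf{GapP}$ route would also work if applied to $\Tr(\rho_0\rho_1)=p_{\rm acc}-p_{\rm rej}$ of the SWAP test, since ``$\Tr(\rho_0\rho_1)\neq 0$'' is then a genuine $\coCeP=\NQP$ condition.
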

 
We notice that \NQP{} (defined in~\cite{ADH97,YY99}) serves as a precise variant of \BQP{} with perfect soundness, specifically having an \textit{exact zero} acceptance probability for \textit{no} instances. Furthermore, researchers initially regarded \NQP{} as a quantum analog of \NP{}.\footnote{\NQP{} is incomparable to \QMA{} due to its equivalence to \PreciseQMA{} with perfect soundness~\cite{KMY09}. Two main distinctions between these classes are: (1) \NQP{} allows an exponentially small gap between acceptance probabilities for \textit{yes} and \textit{no} instances, while \QMA{} permits only an inverse-polynomial gap; and (2) \NQP{} guarantees rejection for \textit{no} instances, whereas \QMA{} allows any reasonable choice. \label{footnote:NQP-vs-QMA}} Prior works~\cite{FGHP99,YY99} have established the relationships $\NQP = \coCeP \subseteq \PP$.

\begin{table}[!htp]
\centering
\begin{tabular}{ccc}
    \toprule
    Parameter regimes & $\coSDP[1-\epsilon,\epsilon]$ & $\coQSDP[1-\epsilon,\epsilon]$ \\
    \midrule
    \multirow{2}{*}{$\epsilon=0$} & in \NP{} & in \NQP{}\\
    & \footnotesize{Folklore} & \footnotesize{This work (\Cref{thm:easy-regimes-for-QSZK}\ref{thmitem:QSZK-easy-regimes-NQP})}\\
    \midrule
    \multirow{2}{*}{$\epsilon(n) \leq 2^{-n/2-1}$} & in \PP{} & in \PP{}\\
    & \footnotesize{Theorem 7.1 in~\cite{BCHTV19}} & \footnotesize{This work (\Cref{thm:easy-regimes-for-QSZK}\ref{thmitem:QSZK-easy-regimes-PP})}\\
    \midrule
    \multirow{2}{*}{$\epsilon(n) \geq 2^{-n^{1/2-\gamma}}$ for $\gamma \in (0,1/2)$} & \SZK{}-hard & \QSZK{}-hard\\
    & \footnotesize{Implicitly stated in~\cite{SV97}} & \footnotesize{Implicitly stated in~\cite{Wat02}}\\
    \bottomrule
\end{tabular}
\caption{Easy and hard regimes for \SZK{} and \QSZK{}.}
\label{table:easy-regimes-QSZK}
\end{table}

We list our results and compare them with the counterpart \SZK{} results in \Cref{table:easy-regimes-QSZK}. 
The improved \SZK{}-hardness and \QSZK{}-hardness follow from skillfully applying the polarization lemma for the relevant distance, as in~\cite[Theorem 3.14]{BDRV19}. 
To demonstrate the \PP{} containment, we first observe $\frac{1}{2}\HS^2(\rho_0,\rho_1)=\frac{1}{2}(\Tr(\rho_0^2)+\Tr(\rho_1^2))-\Tr(\rho_0\rho_1)$. 
The remaining results are mainly derived from a hybrid algorithm based on the SWAP test~\cite{BCWdW01}, namely tossing two random coins and performing the SWAP test on the corresponding states. 

In essence, the phenomenon that parameter regimes with some negligible errors are easier to solve is not unique to \QSZK{}. Analogous phenomena can also be observed in other  complexity classes, such as $\mathsf{QMA(2)}$~\cite{KMY09} and $\mathsf{StoqMA}$~\cite{AGL20}. Nevertheless, it is worth noting that these similar results in other classes do not always necessitate the dimension-preserving property. In particular, polarization lemma for some quantum distance is considered \textit{dimension-preserving} if the resulting quantum states use the same number of qubits as the original quantum states.\footnote{Current techniques for polarizing quantum distances, such as~\cite[Section 4.1]{Wat02} or \Cref{lemma:measQTD-polarization,lemma:QTD-polarization} in this work, increase the number of qubits in the resulting states from $n$ to $\poly(n)$, where $n$ is the number of qubits in the original states, thereby increasing the dimension from $2^n$ to $2^{\poly(n)}$.} 
Since \SZK{} is a subclass of \QSZK{}, \Cref{thm:easy-instances-QSZK-informal} suggests that \QSDP{} may not remain \QSZK{}-hard when the acceptance probability deviates \textit{tinily} from $0$ or $1$.

\subsection{Proof techniques}
\label{sec:intro-proof-techniques}

The \QSZK{} completeness of \QJSP{} and \measQTDP{} crucially relies on inequalities between quantum analogs of common classical $f$-divergences.\footnote{An $f$-divergence is a function $\D_f(p_0\|p_1)$ that measures the difference between two probability distributions $p_0$ and $p_1$, and is defined as $\D_f(p_0\|p_1) \coloneqq \bbE_{x\sim p_1} f(p_0(x)/p_1(x))$. } We start by reviewing and defining these quantum analogs. The most widely used quantum distances are the trace distance ($\td$) and the Bures distance ($\B$, essentially the fidelity), which are quantum counterparts of the statistical distance ($\SD$) and the Hellinger distance ($\H$), respectively. Other commonly used $f$-divergences are the KL divergence (also known as the relative entropy) and the $\chi^2$-divergence, which are unbounded, so we instead focus on their symmetrized versions: the Jensen-Shannon divergence ($\JS$) and the triangular discrimination ($\TD$), respectively. 

The relationship between two quantum analogs of the Jensen-Shannon divergence is a specific instance of Holevo's bound: the measured quantum Jensen-Shannon divergence ($\measQJS$) does not exceed the quantum Jensen-Shannon divergence (\QJS{}). For clarity, we provide informal definitions of these classical and quantum distances in \Cref{table:informal-defs-distances}.\footnote{For formal definitions and additional properties of these classical and quantum distances, we refer the reader to \Cref{subsec:prelim-classical-distances,subsec:prelim-quantum-distances}, respectively.}
\begin{table}[H]
\centering
\adjustbox{max width=\textwidth}{
\begin{tabular}{ccc}
    \toprule
    & Classical (distributions $p_0$ and $p_1$) & Quantum (states $\rho_0$ and $\rho_1$)\\
    \midrule
    Statistical distance & $\SD(p_0,p_1) = \frac{1}{2} \sum_x \abs*{p_0(x)-p_1(x)}$ & $\td(\rho_0,\rho_1) = \frac{1}{2}\Tr\abs*{\rho_0-\rho_1}$ \\
    \midrule
    Hellinger distance & $\Hsquare(p_0,p_1) = 1 - \sum_x \sqrt{p_0(x)}\sqrt{p_1(x)}$ & $\Bsquare(\rho_0,\rho_1) = 2\rbra*{1- \Tr\abs*{\sqrt{\rho_0}\sqrt{\rho_1}}}$ \\
    \midrule
    \rule{0em}{1.25em}Jensen-Shannon & \multirow{2}{*}{$\JS(p_0,p_1) = \H\rbra*{\frac{p_0+p_1}{2}} - \frac{\H(p_0)+\H(p_1)}{2}$} & $\QJS(\rho_0,\rho_1) = \S\rbra*{\frac{\rho_0+\rho_1}{2}} - \frac{\S(\rho_0)+\S(\rho_1)}{2}$ \\
    \rule{0em}{1.25em}divergence & & $\measQJS(\rho_0,\rho_1) = \sup\limits_{{\rm POVM}~\calE} \cbra*{ \JS\rbra*{p_0^{(\calE)}, p_1^{(\calE)}}}$ \\
    \midrule
    \rule{0em}{0.8em}Triangular & \multirow{2}{*}{$\TD(p_0,p_1) = \frac{1}{2}\sum_x \frac{\rbra*{p_0(x)-p_1(x)}^2}{p_0(x)+p_1(x)}$} & \multirow{2}{*}{Not previously known}\\
   \rule{0em}{0.8em} discrimination & & \\
    \bottomrule
\end{tabular}
}
\begin{tablenotes}
    \footnotesize
    \item $\H(p)$ and $\S(\rho)$ denote the Shannon entropy and the von Neumann entropy, respectively.
\end{tablenotes}
\caption{Informal definitions of known classical and quantum distances.}
\label{table:informal-defs-distances}
\end{table}

To the best of our knowledge, there is no known quantum analog of the triangular discrimination. Motivated by its connection to the $\chi^2$-divergence and the family of quantum $\chi^2$-divergence introduced by~\cite{TKRWV10}, we propose the following definitions of the quantum triangular discrimination (\QTD{}) and the measured quantum triangular discrimination ($\measQTD$):
\begin{align*}
    \QTD(\rho_0,\rho_1) &= \frac{1}{2} \Tr\rbra*{(\rho_0-\rho_1) (\rho_0+\rho_1)^{-1/2}(\rho_0-\rho_1)(\rho_0+\rho_1)^{-1/2}},\\
    \measQTD(\rho_0,\rho_1) &= \sup_{{\rm POVM}~\calE} \cbra*{ \TD\rbra*{p_0^{(\calE)}, p_1^{(\calE)}}}.
\end{align*}

We then examine how these quantities relate to the other quantum distances and divergences introduced above:
\begin{theorem}[Inequalities on quantum analogs of the triangular discrimination, informal]
    \label{thm:QTD-inequalities-informal}
    For any quantum states $\rho_0$ and $\rho_1$, we have the following: 
    \begin{enumerate}[label={\upshape(\roman*)}, topsep=0.33em, itemsep=0.33em, parsep=0.33em]
        \item $\td^2 (\rho_0, \rho_1) \leq \measQTD(\rho_0, \rho_1) \leq \QTD(\rho_0, \rho_1) \leq \td(\rho_0, \rho_1)$; 
        \item \label{thmitem:QTDinequalities-QTD-vs-QJS} $\frac{1}{2}\QTD^2(\rho_0,\rho_1) \leq \QJS(\rho_0,\rho_1) \leq \QTD(\rho_0,\rho_1)$; 
        \item $\frac{1}{2}\Bsquare(\rho_0,\rho_1) \leq \measQTD(\rho_0,\rho_1) \leq \Bsquare(\rho_0,\rho_1)$ and $\frac{1}{2}\Bsquare(\rho_0,\rho_1) \leq \QTD(\rho_0,\rho_1) \leq \B(\rho_0,\rho_1)$. 
    \end{enumerate}
\end{theorem}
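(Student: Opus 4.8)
The plan is to first reduce everything to the measured setting via a variational definition. The key fact is that both $\measQTD$ and $\Bsquare$ (equivalently, the fidelity $\F(\rho_0,\rho_1)=\norm{\sqrt{\rho_0}\sqrt{\rho_1}}_1^2$) admit characterizations as optimal classical $f$-divergences over measurements, while $\QTD$ is defined directly through the matrix expression $\QTD(\rho_0,\rho_1) \coloneqq \Tr\big((\rho_0-\rho_1)(\rho_0+\rho_1)^{-1}(\rho_0-\rho_1)\big)$ built from the quantum $\chi^2$-divergence of \cite{TKRWV10}. So I would organize the proof around three comparisons: (i) $\measQTD$ versus $\QTD$ and both versus $\td$ (part 1); (ii) $\QTD$ versus $\QJS$ (part 2); (iii) $\measQTD$ and $\QTD$ versus $\Bsquare$ (part 3).

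For part (1), the chain $\td^2 \leq \measQTD \leq \QTD \leq \td$ breaks into four inequalities. The inequality $\measQTD \leq \QTD$ should follow from the data-processing/monotonicity principle: measuring can only decrease the relevant quantum $\chi^2$-type quantity, so the optimal measured value is at most the matrix value. For $\td^2 \leq \measQTD$, I would pick the \emph{optimal} two-outcome measurement for the trace distance (the Holevo--Helstrom projector onto the positive part of $\rho_0-\rho_1$) and plug it into the classical triangular discrimination of the resulting two-point distributions; since classically $\SD^2(p,q) \leq \TD(p,q)$, and the measured distributions realize $\SD = \td(\rho_0,\rho_1)$, this yields the bound. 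For $\QTD \leq \td$, I would use the classical pointwise inequality $\TD(p,q) \leq \SD(p,q)$ lifted through the spectral decomposition of $(\rho_0+\rho_1)^{-1/2}(\rho_0-\rho_1)(\rho_0+\rho_1)^{-1/2}$ together with $\td(\rho_0,\rho_1) = \frac12\norm{\rho_0-\rho_1}_1$; concretely, writing $X = (\rho_0+\rho_1)^{-1/2}(\rho_0-\rho_1)(\rho_0+\rho_1)^{-1/2}$ one has $\QTD = \Tr(X^2(\rho_0+\rho_1))/\cdots$ — I expect the cleanest route is to diagonalize $\rho_0+\rho_1$ in a basis and bound entrywise, or to invoke the quantum $\chi^2$ data-processing inequality directly.

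For part (2), the bounds $\tfrac12 \QTD^2 \leq \QJS \leq \QTD$ should mirror the classical two-sided relation between $\JS$ and $\TD$. The upper bound $\QJS \leq \QTD$: use the operator convexity of $t \mapsto t\log t$ and the integral representation of the relative entropy to bound $\QJS$ by a $\chi^2$-type quantity, matching $\QTD$. The lower bound $\tfrac12 \QTD^2 \leq \QJS$: this is a Pinsker-type inequality; I would combine $\measQTD \le \QTD$-style relations with the Holevo bound mentioned in the text ($\measQJS \leq \QJS$) and the classical fact $\tfrac12\TD^2 \leq \JS$, or alternatively use the known operator inequality $\tfrac12\td^2 \leq \QJS$ together with $\QTD \leq \td$ — but note the latter only gives $\tfrac12\QTD^2 \leq \tfrac12\td^2 \leq \QJS$, which suffices. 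I'd double-check which argument actually closes, since the Pinsker direction is where constants are easy to lose.

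For part (3), the bounds relating the triangular-discrimination analogs to $\Bsquare = 2(1-\sqrt{\F})$ should come from the classical two-sided bound $\tfrac12\H^2 \leq \TD \leq \H^2$ (where $\H^2(p,q) = 2(1-\sum\sqrt{p(x)q(x)})$) applied to the measurement-optimal distributions. For $\measQTD$: the measured fidelity equals $\F(\rho_0,\rho_1)$ by Fuchs--Caves, and the measured triangular discrimination over the \emph{same} optimal measurement gives a value sandwiched by $\tfrac12\Hsquare$ and $\Hsquare$ of the induced distributions; taking suprema carefully gives the claim, though one must be attentive that the optimizing measurement for $\measQTD$ need not be the Fuchs--Caves measurement, which is exactly why only the two-sided bound (not equality) survives. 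For $\QTD$: use part (1)'s $\measQTD \leq \QTD \leq \td$ together with the just-proved $\tfrac12\Bsquare \leq \measQTD$ for the lower bound, and $\td \leq \B$ (the standard Fuchs--van de Graaf inequality, giving $\td \le \sqrt{1-\F} \cdot(\cdots)$ — more precisely $\td(\rho_0,\rho_1) \le \sqrt{1-\F(\rho_0,\rho_1)}$, hence $\td \le \B$ up to the normalization convention in use) for the upper bound $\QTD \leq \B$.

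The main obstacle I anticipate is part (1)'s $\measQTD \leq \QTD$ and $\QTD \leq \td$: unlike the measured quantities, $\QTD$ is not manifestly monotone or variational, so I expect to need either a clean data-processing inequality for the specific quantum $\chi^2$-divergence of \cite{TKRWV10} (which is the "maximal" $\chi^2$ in the Petz sense and does satisfy data processing) or an explicit spectral computation in a basis diagonalizing $\rho_0+\rho_1$. Getting the constant $\tfrac12$ exactly right in the Pinsker-type and Hellinger-type lower bounds (part 2 and part 3) is the secondary delicate point, since naive operator-inequality manipulations tend to lose a factor of $2$.
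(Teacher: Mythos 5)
The overall architecture you propose is the paper's: part (1) splits into data processing for $\measQTD \leq \QTD$, an optimal-measurement argument for $\td^2 \leq \measQTD$, and a separate argument for $\QTD \leq \td$; part (2)'s lower bound goes through $\tfrac12\td^2 \leq \QJS$ combined with $\QTD \leq \td$ (your second alternative is exactly what the paper does); and part (3) is obtained exactly as you describe, from the Le Cam sandwich applied to measured distributions plus Fuchs--Caves for $\measQTD$, and from $\measQTD \leq \QTD \leq \td \leq \B$ for $\QTD$. However, there are two genuine gaps at precisely the points you flag as obstacles, and one error upstream of them. First, the definition of $\QTD$ you work with, $\Tr\big((\rho_0-\rho_1)(\rho_0+\rho_1)^{-1}(\rho_0-\rho_1)\big)$, is not the paper's: the paper takes the $\alpha=1/2$ member of the quantum $\chi^2_\alpha$ family, $\QTD(\rho_0,\rho_1)=\tfrac12\Tr\big((\rho_0-\rho_1)(\rho_0+\rho_1)^{-1/2}(\rho_0-\rho_1)(\rho_0+\rho_1)^{-1/2}\big)$, and \Cref{remark:QTD-vs-traceDist} exhibits single-qubit states for which $\QTD_{\alpha}>\td$ whenever $\alpha\neq 1/2$. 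Your expression corresponds to $\alpha\in\{0,1\}$, so the inequality $\QTD\leq\td$ as you have set it up is false; the choice $\alpha=1/2$ is not cosmetic but load-bearing.

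Second, even with the right definition, neither of your suggested routes closes $\QTD\leq\td$: data processing bounds $\QTD$ from \emph{below} by its measured version, and ``diagonalize $\rho_0+\rho_1$ and bound entrywise'' founders on the cross terms because $\rho_0-\rho_1$ need not commute with $\rho_0+\rho_1$ (your intermediate formula for $\QTD$ in terms of $X$ is also not correct for the same reason). The paper's proof (\Cref{lemma:QTD-leq-QSD}) instead applies the matrix H\"older inequality $\Tr(AB)\leq\norm{A}_1\norm{B}_\infty$ with $A=\rho_0-\rho_1$ and $B=(\rho_0+\rho_1)^{-1/2}(\rho_0-\rho_1)(\rho_0+\rho_1)^{-1/2}$, and then shows $\norm{B}_\infty\leq 1$ by writing $B=I-\rho^{-1/2}\rho_1\rho^{-1/2}$ with $\rho=\tfrac12(\rho_0+\rho_1)$ and invoking Weyl's inequalities. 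Similarly, the upper bound $\QJS\leq\QTD$ is only gestured at in your sketch; the paper's proof (\Cref{lemma:QJS-leq-QTD}) needs the Ruskai--Stillinger bound $\D(\rho\|\sigma)\leq\frac{1}{\gamma}[\Tr(\rho^{1+\gamma}\sigma^{-\gamma})-1]$ at $\gamma=1/2$ followed by the trace inequality $\Tr(\rho_z^{3/2}\rho^{-1/2})-1\leq\tfrac12[\Tr(\rho_z\rho^{-1/2}\rho_z\rho^{-1/2})-1]$, which comes from expanding $\Tr[(\rho_z^{1/2}\rho^{-1/2}\rho_z^{1/2}-\rho_z^{1/2})^\dagger(\rho_z^{1/2}\rho^{-1/2}\rho_z^{1/2}-\rho_z^{1/2})]\geq 0$. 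These two steps are the substance of the theorem and are missing from the proposal.
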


We summarize our new results and known inequalities in \Cref{table:comparing-distances}, as well as how we utilize these inequalities in our proof. In addition, we highlight that the quantum triangular discrimination behaves differently from its classical counterpart since the triangular discrimination is a constant multiplicative error approximation of the Jensen-Shannon divergence. This difference breaks down the quantum equivalent of the ingenious reduction from \TDP{} to \JSP{} presented in~\cite{BDRV19}, leading to a slightly worse parameter in the improved \QSZK{} containment of \QSDP{}, as stated in \Cref{thm:QSZK-containments-informal}\ref{thmitem:QSDPinQSZK-polarizing}. 

\begin{table}[H]
\centering
\adjustbox{max width=\textwidth}{
\begin{tabular}{cccc}
    \toprule
     & Classical & Quantum & Usages related to \QSZK{}\\
    \midrule
    \multirow{2}{*}{$\SD \text{ vs. } \Hsquare$} & $\Hsquare \leq \SD \leq \sqrt{2}\H$ & $\frac{1}{2}\B^2 \leq \td \leq \B$ & A polarization lemma for\\
    & \footnotesize{\cite{Kailath67}} & \footnotesize{\cite{FvdG99}} & the trace distance \footnotesize{\cite{Wat02}}\\
    \midrule
    \multirow{2}{*}{$\SD \text{ vs. } \JS$} & $1\!-\!\binH\big(\frac{1-\SD}{2}\big) \leq \JS_2 \leq \SD$ & $1\!-\!\binH\big(\frac{1-\td}{2}\big) \leq \QJS_2 \leq \td$ & \QJSP{} is \QSZK{}-hard\\
    & \footnotesize{\cite{FvdG99,Top00}} & \footnotesize{\cite{BH09,FvdG99}} & \footnotesize{This work (\Cref{lemma:QJSP-is-QSZK-hard})}\\
    \midrule
    \multirow{3}{*}{$\SD \text{ vs. } \TD$} & \multirow{2}{*}{$\SD^2 \leq \TD \leq \SD$} & \multirow{2}{*}{$\td^2 \leq \measQTD \leq \QTD \leq \td$} & \measQTDP{} and \QTDP{}\\
    & & & are \QSZK{}-hard\\
    & \footnotesize{\cite{Top00}} & \footnotesize{This work (\Cref{thm:QTD-vs-td})} & \footnotesize{This work (\Cref{lemma:measQTDP-is-QSZK-hard})}\\
    \midrule
    \multirow{2}{*}{$\JS \text{ vs. } \TD$} &  $\frac{1}{2} \TD \leq \JS \leq \ln{2}\cdot \TD$ & $\frac{1}{2}\QTD^2 \leq \QJS \leq \QTD$ & \multirow{2}{*}{None}\\
    & \footnotesize{\cite{Top00}} & \footnotesize{This work (\Cref{thm:QTD-vs-QJS})} & \\
    \midrule
    \multirow{3}{*}{$\TD \text{ vs. } \Hsquare$} & \multirow{2}{*}{$\Hsquare \leq \TD \leq 2\Hsquare$} & $\frac{1}{2}\Bsquare \leq \measQTD \leq \Bsquare$ & Polarization lemmas \\
    & & $\frac{1}{2} \Bsquare \leq \QTD \leq \B$ & for $\measQTD$ and \QTD{}\\
    & \footnotesize{\cite{LeCam86}} & \footnotesize{This work (\Cref{thm:QTD-vs-B})} & \footnotesize{This work (\Cref{lemma:measQTD-polarization,lemma:QTD-polarization})}\\
    \bottomrule
\end{tabular}
}
\caption{A comparison between classical and quantum distances with usages related to \QSZK{}.}
\label{table:comparing-distances}
\end{table}

Leveraging inequalities in \Cref{table:comparing-distances}, we establish that \QJSP{}, \measQTDP{}, and \QTDP{} are \QSZK{}-complete. 
The \QSZK{} containments of \measQTDP{} and \QTDP{} are achieved through \textit{new} polarization lemmas for the measured quantum triangular discrimination ($\measQTD$) and the quantum triangular discrimination (\QTD{}), while we establish the \QSZK{} containment of \QJSP{} via a reduction to \QEDP{}~\cite{BASTS10} using the joint entropy theorem on classical-quantum states. We thus explore the limitations of current techniques to quantum polarize quantum distances. Additionally, the \QSZK{}-hardness of these problems is directly analogous to their classical counterparts~\cite{BDRV19} because of the corresponding inequalities in \Cref{table:comparing-distances}. 

\subsection{Discussion and open problems} 

\paragraph{Better upper bounds for \textsc{GapQSD} and \QSZK{}.} The best known upper bound for the promise problem \textsc{GapQSD}, specifically $\QSDP[\alpha,\beta]$ when $\alpha(n)-\beta(n) \geq 1/\poly(n)$, is \QIPii{}, as shown implicitly in~\cite{Wat02,JUW09}. More recently, this upper bound was slightly improved to \QIPii{} with a quantum single-exponential-time and linear-space honest prover in~\cite{LGLW23}, which appeared after the release of our work. Since the classical counterpart \textsc{GapSD} is contained in $\AMcapcoAM$~\cite{BL13}, it is natural to ask whether the upper bound for \GapQSD{} (or \QSZK{}) can be improved further, perhaps to subclasses of \QIPii{} in which the verifier's message has some particular form, as introduced in~\cite{MW05,KLGN19}?

\paragraph{Applications of quantum analogs of the triangular discrimination.} Is there any other application of the (measured) quantum triangular discrimination besides its usage on \QSZK{}? For instance, Yehudayoff~\cite{Yehudayoff20} utilized triangular discrimination to obtain a sharper communication complexity lower bound of the point chasing problem. Can we expect a similar implication in the quantum world?
Moreover, we note that $\measQTD$ is a symmetric version of the measured Bures $\chi^2$-divergence and the latter is used for the nonzero testing of quantum mutual information~\cite{FO23}. Might $\measQTD$ also play a role in quantum property testing? 

\paragraph{Improved inequalities on the quantum triangular discrimination. }
We observe that \Cref{thm:QTD-inequalities-informal}\ref{thmitem:QTDinequalities-QTD-vs-QJS} is not \textit{tight}. Numerical simulations indicate that the tight bound is $\QTD^2(\rho_0,\rho_1) \leq \QJS_2(\rho_0,\rho_1) \leq \QTD(\rho_0,\rho_1)$ for any states $\rho_0$ and $\rho_1$. 
This bound can be saturated by choosing states $\rho_0$ and $\rho_1$ with orthogonal support, which suffices to make $\QJS_2(\rho_0,\rho_1)$ and $\QTD(\rho_0,\rho_1)$ equal to $1$. 
Furthermore, numerical simulations also suggest that the triangle inequality holds for the square root of \QTD{}, namely $\sqrt{\QTD(\rho_0,\rho_1)} + \sqrt{\QTD(\rho_1,\rho_2)} \geq \sqrt{\QTD(\rho_0,\rho_2)}$ for any states $\rho_0,\rho_1$, and $\rho_2$. 
This indicates that $\sqrt{\QTD}$ is a \textit{metric}, with the same property also holding for triangular discrimination~\cite{LeCam86}. 

\subsection{Related works and recent developments}
Beyond the trace distance and quantum analogs of triangular discrimination, approaches similar to the original polarization lemma have been extended to other quantum settings. One example is quantum channel testing (equivalently, distinguishing mixed-state quantum circuits) with respect to the diamond norm distance, as introduced in~\cite{RW05}, where the associated promise problem, \textsc{Quantum Circuit Distinguishability} (\textsc{QCD}), has been shown to be \QIP{}-complete. The diamond norm distance between two quantum channels $\Phi_0$ and $\Phi_1$ is defined as $\norm{\Phi_0-\Phi_1}_{\diamond} \coloneqq \sup_{\rho} 2 \cdot \td\rbra*{\rbra*{\Phi_0\otimes I}(\rho), \rbra*{\Phi_1\otimes I}(\rho)}$. 
More recently (after our work was released), quantum state testing with respect to the quantum $\ell_{\alpha}$ distance for $1 < \alpha(n) \leq 1+1/n$, as considered in~\cite{LW25Lalpha}, a generalization of the trace distance ($\alpha=1$) through the Schatten $\alpha$-norm $\norm*{A}_{\alpha} \coloneqq \Tr\rbra*{|A|^{\alpha}}^{1/\alpha}$, has been proven to be \QSZK{}-complete. Notably, the \QSZK{} containment also holds throughout the polarizing regime. 

Additionally, the entropy extraction approach to polarizing quantum distances has been used to establish a complete problem for the class \textsf{qq-QAM}, introduced in~\cite{KLGN19}. This class, which is a subclass of \QIPii{}, consists of promise problems that admit two-message quantum interactive proof systems where the verifier's message is restricted to halves of EPR pairs. This complete problem, \textsc{Maximum Output Quantum Entropy Approximation} (\textsc{MaxOutQEA}), is defined in terms of the maximum output von Neumann entropy of a quantum channel $\Phi$, given by $\S_{\max}(\Phi) \coloneqq \max_{\rho} \S(\Phi(\rho))$, where $\S(\cdot)$ denotes the von Neumann entropy. 

Very recently, building on the strategy of our simple \QSZK{}-hardness proof for \QEDP{} (\Cref{corr:simple-QSZKhardness-QEDP}), the \textsc{Quantum $q$-Tsallis Entropy Approximation Problem} (\textsc{TsallisQEA}$_q$) and its entropy difference version were shown to be \BQP{}-complete for constant $q > 1$ in~\cite{LW25entropy}, using a generalized notion of the quantum Jensen-Shannon divergence. The quantum $q$-Tsallis entropy is defined as $\S_q(\rho) \coloneqq \frac{1-\Tr\rbra*{\rho^q}}{q-1}$.\footnote{The quantum $q$-Tsallis entropy $\S_q(\rho)$ converges to the von Neumann entropy $\S(\rho)$ as $q$ approaches $1$.} When $q=2$, this result implies the \BQP{}-hardness of the purity testing problem (\textsc{Purity}), which asks whether $\Tr(\rho^2)$ is at least $2/3$ or at most $1/3$. Although \BQP{} containment of \textsc{Purity} has been known for over two decades via the SWAP test~\cite{BCWdW01}, proving \BQP{}-hardness remained open until this recent work. 


\section{Preliminaries}
\label{sec:preliminary}

\subsection{Distances and divergences for classical probability distributions}
\label{subsec:prelim-classical-distances}

In this subsection, we will review several commonly used classical distances and divergences. 
We begin by defining the statistical distance and the triangular discrimination.  

\begin{definition}[Statistical distance]
	\label{def:statDist}
	The statistical distance between two probability distributions $p_0$ and $p_1$ on $\calX$ is defined by $\SD(p_0,p_1)  \coloneqq  \frac{1}{2}\|p_0-p_1\|_1 = \frac{1}{2}\sum_{x\in\calX} |p_0(x)-p_1(x)|$. 
\end{definition}

\begin{definition}[Triangular discrimination]
	\label{def:triDist}
	The triangular discrimination, also known as the Le Cam divergence, between two probability distributions $p_0$ and $p_1$ on $\calX$ is defined by 
    \[\TD(p_0,p_1)  \coloneqq  \frac{1}{2}\sum_{x \in \calX} \frac{(p_0(x)-p_1(x))^2}{p_0(x)+p_1(x)}.\] 
\end{definition}
It is noteworthy that $\TD$ is a symmetrized version of the $\chi^2$ divergence, namely $\TD(p_0,p_1) = \chi^2\!\left(p_z \big\| \frac{p_0+p_1}{2}\right)$ for $z \in \binset$. We also know that $\SD^2(p_0,p_1) \leq \TD(p_0,p_1) \leq \SD(p_0,p_1)$, as presented in \cite{Top00}.
Next, we proceed by defining the Jensen-Shannon divergence ($\JS$).  

\begin{definition}[Jensen-Shannon divergence]
	\label{def:JSdiver}
	The Jensen-Shannon divergence between two probability distributions $p_0$ and $p_1$ is defined by
	$\JS_2(p_0,p_1) \coloneqq \H\!\left(\frac{p_0+p_1}{2}\right) - \frac{1}{2}\rbra*{\H(p_0)+\H(p_1)}$,
	where the Shannon entropy $\H(p) \coloneqq -\sum_{x}p(x) \log_2 p(x)$.  	
\end{definition}

The Jensen-Shannon divergence serves as a symmetrized version of the Kullback–Leibler divergence (also known as relative entropy), namely, $\JS_2(p,q) = \frac{1}{2} {\rm KL}\left(p \big\Vert \frac{p+q}{2} \right) + \frac{1}{2} {\rm KL}\left(q \big\Vert \frac{p+q}{2} \right)$, which follows from a straightforward calculation. 

\begin{proposition}[Adapted from \cite{FvdG99,Top00}]
	\label{prop:JS-vs-SD}
    For any two probability distributions $p_0$ and $p_1$, the following inequalities hold\emph{:}
	\[\sum_{v=1}^{\infty} \frac{\SD(p_0,p_1)^{2v}}{\ln{2}\cdot 2v(2v-1)}=1-\binH\left(\frac{1-\SD(p_0,p_1)}{2}\right) \leq \JS_2(p_0,p_1) \leq \SD(p_0,p_1),\] 
	where the binary entropy function $\binH(x) \coloneqq -x\log_2(x)-(1-x)\log_2(1-x)$. 
\end{proposition}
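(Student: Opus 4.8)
The plan is to reduce the entire statement to a \emph{pointwise} (per-letter) analysis of the summands defining $\JS_2$, which also makes the claimed series identity fall out for free. Write $m \coloneqq \tfrac12(p_0+p_1)$ and expand, with the convention $0\log_2 0 = 0$,
\[
\JS_2(p_0,p_1) = \sum_{x\in\calX}\Big(\tfrac12 p_0(x)\log_2 p_0(x) + \tfrac12 p_1(x)\log_2 p_1(x) - m(x)\log_2 m(x)\Big).
\]
The first step is the elementary identity that, for reals $a,b\ge 0$ with $a+b>0$ and $u \coloneqq |a-b|/(a+b)\in[0,1]$,
\[
\tfrac12 a\log_2 a + \tfrac12 b\log_2 b - \tfrac{a+b}{2}\log_2\tfrac{a+b}{2} \;=\; \tfrac{a+b}{2}\cdot g(u),\qquad g(u)\coloneqq \tfrac{1+u}{2}\log_2(1+u) + \tfrac{1-u}{2}\log_2(1-u),
\]
which follows by rewriting the left side as $\tfrac a2\log_2\tfrac am + \tfrac b2\log_2\tfrac bm$, observing that $\{a/m,b/m\}=\{1+u,1-u\}$ and $\{a,b\}=\{m(1+u),m(1-u)\}$, and treating $b=0$ (so $u=1$) as a limiting case. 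A one-line computation with $\binH(p)=-p\log_2 p-(1-p)\log_2(1-p)$ gives $g(u)=1-\binH\!\big(\tfrac{1-u}{2}\big)$.

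Next I would record the Maclaurin expansion of $g$: writing each logarithm via $\ln(1\pm u)$ and expanding, the odd powers of $u$ cancel and the even ones combine to
\[
g(u)\;=\;\sum_{v=1}^{\infty}\frac{u^{2v}}{\ln 2\cdot 2v(2v-1)},\qquad u\in[0,1].
\]
Evaluating at $u=\delta\coloneqq\SD(p_0,p_1)$ is precisely the leftmost equality in the proposition. I will use two consequences. First, differentiating term by term, $g''(u)=\tfrac1{\ln 2}\sum_{v\ge1}u^{2v-2}=\tfrac1{\ln2\,(1-u^2)}\ge0$, so $g$ is convex on $[0,1]$. Second, since $\sum_{v\ge1}\tfrac1{2v(2v-1)}=1-\tfrac12+\tfrac13-\tfrac14+\cdots=\ln 2$, bounding $u^{2v-2}\le1$ yields $g(u)\le u^2\le u$ for every $u\in[0,1]$.

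Finally I would assemble the two bounds. For $x\in\calX$ put $w_x\coloneqq m(x)>0$ and $u_x\coloneqq |p_0(x)-p_1(x)|/(p_0(x)+p_1(x))\in[0,1]$; then $\sum_x w_x=1$ and $\sum_x w_x u_x=\tfrac12\sum_x|p_0(x)-p_1(x)|=\SD(p_0,p_1)=\delta$, while summing the pointwise identity over $\calX$ gives $\JS_2(p_0,p_1)=\sum_x w_x\,g(u_x)$. The upper bound is then immediate from $g(u_x)\le u_x$: $\JS_2(p_0,p_1)=\sum_x w_x g(u_x)\le\sum_x w_x u_x=\delta=\SD(p_0,p_1)$. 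The lower bound is Jensen's inequality for the convex function $g$ against the probability weights $(w_x)_{x\in\calX}$: $\JS_2(p_0,p_1)=\sum_x w_x g(u_x)\ge g\!\big(\sum_x w_x u_x\big)=g(\delta)=1-\binH\!\big(\tfrac{1-\delta}{2}\big)$.

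I do not expect a genuine obstacle. The only points requiring mild care are the degenerate letters where $p_0(x)$ or $p_1(x)$ vanishes (handled by $0\log_2 0=0$ and the $u=1$ limit in the pointwise identity) and the term-by-term differentiation used for $g''$, which is legitimate since all series involved have nonnegative terms and converge locally uniformly on $[0,1)$. Alternatively, one may cite \cite{FvdG99,Top00} for the two inequalities and present only the series computation, since that closed form is the single ingredient those references do not state explicitly.
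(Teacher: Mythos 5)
Your proof is correct. Note that the paper does not actually prove this proposition at all: it is stated as ``adapted from [FvdG99, Top00]'' and both inequalities are simply imported from those references, so there is no in-paper argument to compare against step by step. What you supply is a complete, self-contained derivation, and it is a nice one: the pointwise identity $\tfrac12 a\log_2 a+\tfrac12 b\log_2 b-\tfrac{a+b}{2}\log_2\tfrac{a+b}{2}=\tfrac{a+b}{2}\,g(u)$ with $g(u)=1-\binH\rbra[\big]{\tfrac{1-u}{2}}$, the Maclaurin series for $g$, convexity of $g$ plus Jensen for the lower bound, and $g(u)\le u^2\le u$ for the upper bound all check out, including the degenerate letters via $0\log_2 0=0$ and $g(1)=1$. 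Two small remarks. First, your intermediate step $g(u)\le u^2$ actually proves the stronger chain $\JS_2(p_0,p_1)\le\sum_x w_x u_x^2=\TD(p_0,p_1)\le\SD(p_0,p_1)$, i.e.\ you recover the Tops{\o}e-type bound $\JS_2\le\TD$ that the paper quotes separately in its comparison table; it would be worth stating this explicitly since it strengthens the conclusion at no cost. Second, the quantity $\sum_x w_x\,g(u_x)$ with weights $w_x=m(x)$ is exactly the conditional-entropy/mutual-information reading of $\JS$ that the paper invokes later (its \Cref{fact:mutualInfo-JSD}), so your decomposition is the ``unpacked'' version of the same structure; the Jensen step is the only genuinely new ingredient relative to what the paper already uses elsewhere, and it is applied correctly.
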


Finally, we define the Hellinger distance and the inner product $\langle P|Q \rangle$ between normalized non-negative vectors, also known as \textit{Hellinger affinity} (or \textit{Bhattacharyya coefficient}). 
\begin{definition}[Squared Hellinger distance]
	\label{def:HelDist}
	The squared Hellinger distance between two probability distributions $p_0$ and $p_1$ on $\calX$ is defined by
	$\Hsquare(p_0,p_1)  \coloneqq  \frac{1}{2} \sum_{x \in \calX} (\sqrt{p_0(x)}-\sqrt{p_1(x)})^2 = 1-\innerprod{P_0}{P_1}$, where $\ket{P_0} \coloneqq \sum_{x}\sqrt{p_0(x)}\ket{x}$ and $\ket{P_1} \coloneqq \sum_{x} \sqrt{p_1(x)}\ket{x}$. 
\end{definition}

Additionally, a simple observation~\cite[Page 48]{LeCam86} indicates the squared Hellinger distance is very close to the triangular discrimination, namely $\Hsquare(p_0,p_1) \leq \TD(p_0,p_1) \leq 2\Hsquare(p_0,p_1)$. 

\subsection{Distances and divergences for quantum states}
\label{subsec:prelim-quantum-distances}

Now we will review relevant quantum distances and divergences. 
We say that a square matrix $\rho$ is a quantum state if $\rho$ is a positive semi-definite and has trace one.
Classical distances and divergences often have corresponding quantum versions, and sometimes even \textit{multiple options}. 
These distances usually reduce to the classical counterpart when quantum states $\rho_0=\diag(p_0)$ and $\rho_1=\diag(p_1)$ are diagonal. We recommend~\cite[Section 3.1]{BOW19} for a nice survey. 

For any classical $f$-divergence, a quantum analog can be defined in one of two ways: either by converting arithmetic operations in the classical divergence to their matrix-theoretic counterparts, or by considering the probability distributions obtained by applying the same positive operator-valued measure (POVM) to both quantum states, a quantity referred to as the \textit{measured quantum $f$-divergence}. For a comprehensive overview of other quantum analogs of $f$-divergence and their relationships, we refer to \cite{Hiai21}. Given a classical $f$-divergence $\mathrm{d}_f(\cdot,\cdot)$, the corresponding measured quantum $f$-divergence $\mathrm{D}_f(\cdot,\cdot)$ is defined as follows: 
\begin{equation}
\label{eq:measured-f-divergences}
\mathrm{D}_f(\rho_0,\rho_1) = \sup_{\mathrm{POVM}~\calE} \left\{\mathrm{d}_f\left( p_0^{(\calE)}, p_1^{(\calE)} \right)\right\}, \text{ where } p_z^{(\calE)} \coloneqq \rbra*{\Tr(\rho_z E_1), \cdots, \Tr(\rho_z E_N)}.
\end{equation}
Here, $z\in\binset$ and $N$ denotes the dimension of the quantum states $\rho_0$ and $\rho_1$. Put it differently, this quantum divergence can be viewed as the maximum classical divergence that is achievable when the same POVM is applied to both quantum states. 

\paragraph{Quantum analogs of statistical distance and Hellinger distance. }
\label{subsec:trace-distance-and-Bures}
We start with the trace distance, which is a metric. This distance has a maximum value of $1$ occurring when $\rho_0$ and $\rho_1$ have orthogonal supports. Moreover, the trace distance is a measured version of the statistical distance in terms of \Cref{eq:measured-f-divergences}, as stated in, e.g.,~\cite[Theorem 9.1]{NC10}. 
\begin{definition}[Trace distance]
    The trace distance between two quantum states $\rho_0$ and $\rho_1$ is defined by  
    $\td(\rho_0,\rho_1) \coloneqq \frac{1}{2}\Tr|\rho_0-\rho_1| = \frac{1}{2}\Tr(((\rho_0-\rho_1)^\dagger(\rho_0-\rho_1))^{1/2}).$
\end{definition}

Although the squared Hellinger distance is closely related to the inner product, there are several quantum analogs because of the non-commuting nature of matrices. 
Based on the (Uhlmann) fidelity, we proceed with the \textit{squared Bures distance}, which is the first quantum analog of \Cref{def:HelDist} since it is precisely the \textit{measured squared Hellinger distance}~\cite{FC94}: 
\begin{definition}[Squared Bures distance]
    \label{def:BuresDist}
    The squared Bures distance between two quantum states $\rho_0$ and $\rho_1$ is defined as $\Bsquare(\rho_0,\rho_1) \coloneqq 2(1-\F(\rho_0,\rho_1))$, where $\F(\rho_0,\rho_1) \coloneqq \Tr|\sqrt{\rho_0}\sqrt{\rho_1}|$ denotes the fidelity between $\rho_0$ and $\rho_1$.
\end{definition}

We also provide inequalities between the trace distance and the Bures distance. 
\begin{proposition}[Adapted from~\cite{FvdG99}]
\label{prop:traceDist-vs-B}
For any quantum states $\rho_0$ and $\rho_1$, \[\frac{1}{2}\Bsquare(\rho_0,\rho_1) \leq \td(\rho_0,\rho_1) \leq \B(\rho_0,\rho_1).\]
\end{proposition}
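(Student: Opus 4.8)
## Proof proposal for Proposition~\ref{prop:traceDist-vs-B}

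The plan is to prove both inequalities $\tfrac{1}{2}\Bsquare(\rho_0,\rho_1) \leq \td(\rho_0,\rho_1) \leq \B(\rho_0,\rho_1)$ by passing through the fidelity $\F(\rho_0,\rho_1)$, since by definition $\Bsquare = 2(1-\F)$, so it suffices to establish the chain $1 - \F \leq \td \leq \sqrt{2(1-\F)}$. I would derive both ends from the standard Fuchs--van de Graaf relations, which are themselves most cleanly obtained by reduction to the pure-state case via Uhlmann's theorem: choose purifications $\ket{\psi_0}, \ket{\psi_1}$ of $\rho_0, \rho_1$ in a common enlarged space such that $\F(\rho_0,\rho_1) = |\innerprod{\psi_0}{\psi_1}|$, and recall that tracing out a subsystem is a quantum channel, hence does not increase trace distance: $\td(\rho_0,\rho_1) \leq \td(\ketbra{\psi_0}{\psi_0},\ketbra{\psi_1}{\psi_1})$.

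For the \emph{lower bound} $1 - \F \leq \td$: first handle pure states. For $\ket{\psi_0},\ket{\psi_1}$, a direct computation of the two eigenvalues of the rank-$\leq 2$ operator $\ketbra{\psi_0}{\psi_0} - \ketbra{\psi_1}{\psi_1}$ gives $\td(\ketbra{\psi_0}{\psi_0},\ketbra{\psi_1}{\psi_1}) = \sqrt{1 - |\innerprod{\psi_0}{\psi_1}|^2}$. Then I would lift this to mixed states: since the fidelity is achieved by an optimal pair of purifications (Uhlmann), and trace distance only decreases under partial trace, we get $\td(\rho_0,\rho_1) \geq ?$ — wait, the direction must be handled carefully. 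The cleaner route for the lower bound is to use the variational characterization $\td(\rho_0,\rho_1) = \max_{0\leq \Pi \leq I}\Tr(\Pi(\rho_0-\rho_1))$ together with the fact that the measured fidelity equals the true fidelity (the infimum over POVMs of the classical fidelity is attained), reducing to the classical inequality $1 - \sum_x\sqrt{p_0(x)p_1(x)} \leq \SD(p_0,p_1)$, which follows from $(\sqrt{a}-\sqrt{b})^2 = |a-b| \cdot \frac{|\sqrt a - \sqrt b|}{\sqrt a + \sqrt b} \leq |a-b|$ summed over $x$. For the \emph{upper bound} $\td \leq \sqrt{1-\F^2} \leq \sqrt{2(1-\F)} = \B$: the pure-state identity above gives $\td = \sqrt{1-\F^2}$ in that case with equality, and for mixed states monotonicity of trace distance under partial trace applied to Uhlmann-optimal purifications yields $\td(\rho_0,\rho_1) \leq \sqrt{1 - \F(\rho_0,\rho_1)^2}$; then $1 - \F^2 = (1-\F)(1+\F) \leq 2(1-\F)$ since $\F \leq 1$.

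The main obstacle is getting the \emph{direction} of each reduction right: partial trace is contractive for trace distance (good for the upper bound, where we want to bound $\td(\rho_0,\rho_1)$ from above by the pure-state value), but for the lower bound one cannot simply invoke monotonicity in the wrong direction, so there I would instead route through the measured-fidelity identity or, equivalently, cite the inequality directly from~\cite{FvdG99} as the proposition statement already indicates (``Adapted from~\cite{FvdG99}''). In the write-up I expect the cleanest presentation is: (i) state the pure-state identity $\td = \sqrt{1-\F^2}$ with a one-line eigenvalue computation; (ii) for the upper bound apply monotonicity under partial trace plus $1-\F^2 \leq 2(1-\F)$; (iii) for the lower bound invoke the operational/measured characterization of $\F$ and reduce to the elementary classical estimate. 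Each ingredient is short; the only real care needed is bookkeeping the contractivity direction and the fidelity-vs-squared-fidelity passage.
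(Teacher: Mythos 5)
Your proposal is correct: the paper offers no proof of this proposition at all, importing it directly from Fuchs and van de Graaf~\cite{FvdG99} (the only ``adaptation'' being the rewriting of $1-\F \leq \td \leq \sqrt{1-\F^2}$ via $\Bsquare = 2(1-\F)$ and $1-\F^2 \leq 2(1-\F)$), so your reconstruction of the standard argument is exactly the proof the citation points to. Your handling of the contractivity direction is also right --- measured fidelity for the lower bound, Uhlmann purifications plus monotonicity under partial trace for the upper bound --- so there is no gap.
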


Notice that the matrices $(ABA)^{1/2}$ and $A^{1/2}B^{1/2}A^{1/2}$ are not generally equal. This fact suggests that the Uhlmann fidelity differs from the quantum Hellinger affinity $\mathrm{Q}_{1/2}(\rho_0,\rho_1) \coloneqq \Tr(\sqrt{\rho_0}\sqrt{\rho_1})$. The latter gives rise to the second quantum analog of \Cref{def:HelDist}:

\begin{definition}[Quantum squared Hellinger distance]
    For any states $\rho_0$ and $\rho_1$, quantum squared Hellinger distance is defined by $\QHD(\rho_0,\rho_1) \coloneqq \frac{1}{2}\Tr(\sqrt{\rho_0}-\sqrt{\rho_1})^2=1-\mathrm{Q}_{1/2}(\rho_0,\rho_1).$
\end{definition}

Additionally, it is noteworthy that $\mathrm{F}(\rho_0,\rho_1) \geq \mathrm{Q}_{1/2}(\rho_0,\rho_1)$. We recommend two comprehensive reviews \cite{CS20,BGJ19} for summarizing different variants of the fidelity. 

\paragraph{Quantum analogs of Jensen-Shannon Divergence.}
\label{subsec:quantum-JS}
We will encounter various quantum analogs of the Jensen-Shannon divergence. 
The study of quantum analogs of the Jensen-Shannon Divergence (also known as Shannon Distinguishability) can be traced back to the well-known Holevo bound~\cite{Holevo73}. We begin with the definition given in~\cite{MLP05} and note that $\QJS_2$ is at most $1$,\footnote{Whereas quantum relative entropy is unbounded, by properties of von Neumann entropy, such as~\cite[Theorem 11.8]{NC10}, we know that $\QJS_2(\rho_0,\rho_1) \leq \binH(1/2)=1$. Then the equality holds if and only if $\rho_0$ and $\rho_1$ have support on orthogonal subspaces.} where the subscript of $2$ indicates that it is defined using the base-$2$ logarithm:\footnote{The same subscript $2$ convention also applies to both the Jensen-Shannon divergence ($\JS_2$) and the measured quantum Jensen-Shannon divergence ($\QJS_2$), which will be defined later.}

\begin{definition}[Quantum Jensen-Shannon Divergence, adapted from~{\cite[Section III]{MLP05}}]
	\label{def:QJS}
	The quantum Jensen-Shannon divergence between two quantum states $\rho_0$ and $\rho_1$ is defined by 
    \[\QJS(\rho_0,\rho_1)  \coloneqq  \S\rbra*{\frac{\rho_0+\rho_1}{2}} -\frac{\S(\rho_0)+\S(\rho_1)}{2}=\frac{1}{2}\sbra*{ \D\rbra[\bigg]{ \rho_0\bigg\| \frac{\rho_0+\rho_1}{2} } + \D\rbra[\bigg]{ \rho_1\bigg\| \frac{\rho_0+\rho_1}{2} } }.\]
	Here, $\S(\rho) \coloneqq -{\rm Tr}(\rho \ln \rho)$ denotes the von Neumann entropy of the quantum state $\rho$, and $\D(\rho_0 \| \rho_1)$ denotes the quantum relative entropy between $\rho_0$ and $\rho_1$. 
\end{definition} 

It is worth noting that the square root of the quantum Jensen-Shannon divergence was recently proven to be a metric~\cite{Virosztek21,Sra21}, and thus satisfies the triangle inequality. 
In addition, the measured variant of the Jensen-Shannon divergence between quantum states $\rho_0$ and $\rho_1$, which aligns with \Cref{eq:measured-f-divergences} and is also known as \textit{quantum Shannon distinguishability}, was studied by Fuchs and van de Graaf~\cite{FvdG99}. This quantity, referred to as the \textit{Measured Quantum Jensen-Shannon Divergence} and denoted by $\measQJS(\rho_0,\rho_1)$ in this work, does not have an explicit formula, as it serves as a solution to some transcendental equation~\cite{FC94}.

\begin{remark}[Applications of the quantum Jensen-Shannon divergence]
    \label{remark:QJS-applications}
    The quantum Jensen-Shannon divergence (\QJS{}) coincides with a special case of the right-hand side of the well-known Holevo bound~\cite{Holevo73}, such as \cite[Theorem 12.1]{NC10}, specifically the Holevo $\chi$ quantity for size-$2$ ensembles with a uniform distribution. Furthermore, because the Holevo bound can also be used to bound the amount of (quantum) communication between two parties who may share entanglements~\cite{NS02,CvDNT13},\footnote{See also~\cite[Section 15.2]{deWolf19} for a pedagogical overview.} the quantum Jensen-Shannon divergence has implicitly played a role in the study of quantum communication complexity. 
\end{remark}

Moreover, the quantum Jensen-Shannon divergence is upper-bounded by the trace distance, as shown in \Cref{lemma:QJS-leq-traceDist}. The proof of this lemma essentially adapts the construction used to establish an analogous bound for classical distributions, such as~\cite[Claim 4.4.2]{Vad99}.

\begin{lemma}[Adapted from~{\cite[Theorem 14]{BH09}}]
    \label{lemma:QJS-leq-traceDist}
    For any quantum states $\rho_0$ and $\rho_1$, 
    \[\QJS(\rho_0,\rho_1) \leq \ln{2}\cdot \td(\rho_0,\rho_1).\]
\end{lemma}

\begin{proof}
We begin with the construction in~\cite[Theorem 14]{BH09}. Consider a single qutrit register $B$ with basis vectors $\ket{0},\ket{1},\ket{2}$. 
Define $\tilde{\rho}_0$ and $\tilde{\rho}_1$ on $\calH\otimes \calB$ as below, where $\calB=\bbC^3$ is the Hilbert space corresponding to the register $B$:
\begin{align*}
    \tilde{\rho}_0  \coloneqq & \frac{\rho_0+\rho_1-|\rho_0-\rho_1|}{2} \otimes \ketbra{2}{2} + \frac{\rho_0-\rho_1+|\rho_0-\rho_1|}{2} \otimes \ket{0}\bra{0}  \coloneqq \sigma_2\otimes\ketbra{2}{2} + \sigma_0\otimes\ket{0}\bra{0},\\
    \tilde{\rho}_1  \coloneqq & \frac{\rho_0+\rho_1-|\rho_0-\rho_1|}{2} \otimes \ketbra{2}{2} + \frac{\rho_1-\rho_0+|\rho_0-\rho_1|}{2} \otimes \ket{1}\bra{1}  \coloneqq \sigma_2\otimes\ketbra{2}{2} + \sigma_1\otimes\ket{1}\bra{1}.
\end{align*}

Here, $\sigma_0$ corresponds to the regime that $\rho_0$ is ``larger than'' $\rho_1$ (where $\rho_0$ and $\rho_1$ are ``distinguishable'') and so does $\sigma_1$, whereas $\sigma_2$ corresponds to the regime that $\rho_0$ is ``indistinguishable'' from $\rho_1$. 
One can see this construction generalizes the proof of the classical counterparts, namely~\cite[Claim 4.4.2]{Vad99}, to quantum distances. 

Then it is left to show $\QJS(\rho_0,\rho_1) \leq \QJS(\tilde{\rho}_0,\tilde{\rho}_1) = \td(\rho_0,\rho_1)$. 
Using the data-processing inequality of the quantum relative entropy, such as~\cite[Theorem 3.9]{Petz07}, we obtain
\begin{equation}
    \label{eq:QJS-traceDist-RHS}
    \begin{aligned}
    \QJS(\rho_0,\rho_1) &= \QJS(\Tr_B(\tilde{\rho_0}),\Tr_B(\tilde{\rho_1}))\\
    &\leq \QJS(\tilde{\rho}_0,\tilde{\rho}_1)\\
    &= -\Tr\rbra*{ \frac{\tilde{\rho}_0+\tilde{\rho}_1}{2} \ln\frac{\tilde{\rho}_0+\tilde{\rho}_1}{2} } + \frac{1}{2}\left( \Tr\left(\tilde{\rho}_0 \ln \tilde{\rho}_0\right) + \Tr\left(\tilde{\rho}_1 \ln \tilde{\rho}_1\right) \right).
    \end{aligned}
\end{equation}

Here, the first line is because of $\Tr_B(\tilde{\rho}_k)=\rho_k$ for $k\in\binset$, and the third line owes to $\QJS(\rho_0,\rho_1)=\S(\frac{\rho_0+\rho_1}{2})-\frac{1}{2}(\S(\rho_0)+\S(\rho_1))$ for any quantum states $\rho_0$ and $\rho_1$.  
Noting that $\sigma_0 \otimes \ket{0}\bra{0}$, $\sigma_1 \otimes \ket{1}\bra{1}$, and $\sigma_2 \otimes \ket{2}\bra{2}$ are orthogonal to each other, and $\ln(A+B)=\ln(A)+\ln(B)$ when $A$ and $B$ are orthogonal (i.e., $AB=BA=0$), we have derived that
\begin{equation}
    \label{eq:QJS-traceDist-RHS-terms}
    \begin{aligned}
    \Tr\left(\frac{\tilde{\rho}_0+\tilde{\rho}_1}{2} \ln\frac{\tilde{\rho}_0+\tilde{\rho}_1}{2}\right) 
    &= \Tr(\sigma_2\ln \sigma_2) + \sum_{k\in\binset}\Tr\left(\frac{\sigma_0}{2}\ln\frac{\sigma_0}{2}\right),\\
    \forall k\in\binset,~\Tr\left( \tilde{\rho}_k\ln\tilde{\rho}_k \right) &= \Tr(\sigma_2 \ln \sigma_2) + \Tr(\sigma_k \ln \sigma_k). 
    \end{aligned}
\end{equation}

Combining \Cref{eq:QJS-traceDist-RHS-terms,eq:QJS-traceDist-RHS}, we finish the proof: 
    \[\QJS(\rho_0,\rho_1) 
    \leq \Tr\left[\frac{\sigma_0}{2} \left(\ln \sigma_0 \!-\! \ln\frac{\sigma_0}{2}\right) \right] + \Tr\left[\frac{\sigma_1}{2} \left(\ln \sigma_1 \!-\! \ln\frac{\sigma_1}{2}\right) \right]
    = \frac{\ln{2}}{2}\cdot \Tr(\sigma_0+\sigma_1)
    = \ln{2}\cdot\td(\rho_0,\rho_1),\] 
    where the third equality is due to $\sigma_0+\sigma_1=|\rho_0-\rho_1|$. 
\end{proof}

Owing to the Holevo bound, we know that the quantum Jensen-Shannon divergence is at least its measured variant ($\measQJS$): 

\begin{proposition}[Quantum Jensen-Shannon divergence is at least its measured variant]
    \label{prop:measQJS-leq-QJS}
    For any quantum states $\rho_0$ and $\rho_1$, $\measQJS(\rho_0,\rho_1) \leq \QJS(\rho_0,\rho_1)$.
\end{proposition}

\begin{proof}
We begin by stating an equivalent characterization of the classical Jensen-Shannon divergence, building upon its fundamental property, such as~\cite[Proposition 4.1]{BDRV19}: 
\begin{proposition}[Mutual information interpretation of Jensen-Shannon divergence]
    \label{fact:mutualInfo-JSD}
    For any distributions $p_0$ and $p_1$, let $T$ be a binary indicator variable that chooses the value of $x$ according to $p_i$ if $T=i$ where $i\in\binset$, as well as let $X$ be a random variable associated with a uniform mixture distribution between $p_0$ and $p_1$. Then, we obtain 
    \[\JS(p_0,p_1)=I(T;X)=\H(T)-\H(T|X)=1-\H(T|X).\]
\end{proposition}

Following the observation in \Cref{remark:QJS-applications}, $\QJS(\rho_0,\rho_1)$ corresponds to the Holevo $\chi$ quantity for the ensemble $\{1/2,\rho_0; 1/2,\rho_1\}$. We can also observe that $\measQJS(\rho_0,\rho_1)$ equals the accessible information of the same ensemble. Therefore, this equivalence allows the proof to follow directly from the Holevo bound. 
\end{proof}

Utilizing \Cref{prop:measQJS-leq-QJS}, we obtain a lower bound of \QJS{} in terms of the trace distance:

\begin{lemma}[Adapted from~\cite{Holevo73,FvdG99}]
    \label{lemma:traceDist-leq-QJS}
    For any quantum states $\rho_0$ and $\rho_1$, 
    \[\QJS_2(\rho_0,\rho_1) \geq \measQJS_2(\rho_0,\rho_1) \geq 1-\binH\left(\frac{1-\td(\rho_0,\rho_1)}{2}\right) = \sum\nolimits_{v=1}^{\infty}\frac{\td(\rho_0,\rho_1)^{2v}}{\ln{2}\cdot 2v(2v-1)},\]
    where the binary entropy $\binH(p) \coloneqq -p\log_2(p)-(1-p)\log_2(1-p)$.
\end{lemma}

\begin{proof}
We first fix some POVM measurement 
$\calE=\{E_x\}_{x\in \calU}$ where  $\calU=\supp{\rho_0}\cup\supp{\rho_1}$.
And let $p_z^{(\calE)}$ be the induced distribution with respect to the POVM $\calE$ of $\rho_z$ for $z\in\binset$.  
By utilizing the left-hand side inequality in \Cref{prop:JS-vs-SD}, we have
\begin{equation}
\label{eq:measured-JS-vs-SD}
\measQJS_{\calE^*}(\rho_0,\rho_1) \geq \measQJS_{\calE}(\rho_0,\rho_1) = \JS(p_0^{(\calE)}, p_1^{(\calE)})
\geq \sum_{v=1}^{\infty} \frac{\SD\big(p_0^{(\calE)},p_1^{(\calE)}\big)^{2v}}{2v(2v-1)},
\end{equation}
where $\calE^*$ is an optimal measurement of $\measQJS(\rho_0,\rho_1)$. 
Let $g(x) \coloneqq \sum_{v=1}^{\infty} \frac{x^{2v}}{2v(2v-1)}$, then $g(x)$ is monotonically increasing on $0 \leq x \leq 1$. 
Since \Cref{eq:measured-JS-vs-SD} holds for arbitrary POVM $\calE$, as well as the trace distance serves as the measured version of the statistical distance, we complete the proof by choosing the one that maximizes $\td(\rho_0,\rho_1)$.
\end{proof}

\subsection{Quantum state testing in the trace distance and beyond}

In this subsection, we will provide definitions of the (time-bounded) quantum state testing problem with respect to different distance-like measures, along with several useful results concerning these computational problems. 
We start with a formal definition of the \textsc{Quantum State Distinguishability Problem}, denoted as $\QSDP[\alpha,\beta]$. 

\begin{definition}[Quantum State Distinguishability Problem, $\QSDP{[\alpha,\beta]}$, adapted from~{\cite[Section 3.3]{Wat02}}]
	\label{def:QSDP}
    Let $Q_0$ and $Q_1$ be polynomial-size quantum circuits that act on $m(n)$ qubits and having $n$ specified output qubits, where $m$ is polynomial in $n$. For $i\in\binset$, let $\rho_i$ denote the quantum state obtained by running $Q_i$ on state $\ket{0^n}$ and tracing out the non-output qubits. Let $\alpha$ and $\beta$ denote efficiently computable functions. Then promise that one of the following cases will occur: 
	\begin{itemize}[topsep=0.33em, itemsep=0.33em, parsep=0.33em]
		\item \emph{Yes:} A pair of quantum circuits $(Q_0,Q_1)$ such that $\td(\rho_0,\rho_1) \geq \alpha(n)$; 
		\item \emph{No:} A pair of quantum circuits $(Q_0,Q_1)$ such that $\td(\rho_0,\rho_1) \leq \beta(n)$.
	\end{itemize}
\end{definition}

\begin{remark}[The choice of $n$ in \QSDP{}]
    \label{remark:n-choices}
    The definition of \QSDP{} in \Cref{def:QSDP} matches the counterpart classical promise problem,  particularly \SDP{}, from~\cite[Section 2.2]{SV97}, but it is slightly more restrictive than the version in~\cite[Section 3.3]{Wat02}. In particular, \Cref{def:QSDP} assumes that the input length $m$ and the output length $n$ are \textit{polynomially related}, while the version in~\cite{Wat02} allows the output length to be \textit{much smaller}. Such cases may not remain \QSZK{}-hard, e.g., the variant of \QSDP{} with output length $1$ is \BQP{}-complete, as observed in~\cite[Theorem 9]{Kobayashi03}.     
\end{remark}

In analogy with \Cref{def:QSDP}, we can define the \textsc{Quantum Jensen-Shannon Divergence Problem} (\QJSP{}), the \textsc{Measured Quantum Triangular Discrimination Problem} (\measQTDP{}), and the \textsc{Quantum Triangular Discrimination Problem} (\QTDP{}) by replacing the underlying closeness measure as follows:
\begin{itemize}
    \item $\QJSP[\alpha,\beta]$: Decide whether $\QJS_2(\rho_0,\rho_1)$ is at least $\alpha(n)$ or at most $\beta(n)$; 
    \item $\measQTDP[\alpha,\beta]$: Decide whether $\measQTD(\rho_0,\rho_1)$ is at least $\alpha(n)$ or at most $\beta(n)$; 
    \item $\QTDP[\alpha,\beta]$: Decide whether $\QTD(\rho_0,\rho_1)$ is at least $\alpha(n)$ or at most $\beta(n)$. 
\end{itemize}
Here, the definitions of $\measQTD$ and $\QTD$ are provided in \Cref{sec:triangular-discrimination}. 

Similar to the polarization lemma for the statistical distance in~\cite{SV97}, Watrous established a polarization lemma for the trace distance~\cite{Wat02}, implying that $\QSDP[\alpha,\beta]$ is in $\QSZK$ for the constant polarizing regime -- namely, constants $\alpha$ and $\beta$ such that $0 \leq \beta < \alpha^2 \leq 1$. 
This work further stated that $\QSDP[\alpha,\beta]$ with any constants $\alpha$ and $\beta$ in this parameter regime is \QSZK{}-complete.\footnote{We do not distinguish \QSZK{} from the honest-verifier variant $\QSZK_{\sf HV}$, since they are equivalent~\cite{Wat09}.} 
In addition, as stated in~\cite[Theorem 3.14]{BDRV19}, the polarization lemma for the statistical distance also implies an improved \SZK{}-hardness for \SDP{}. Consequently, we derive the counterpart improved \QSZK{}-hardness for \QSDP{} (\Cref{thm:QSD-is-QSZKhard}) and omit the detailed proof: 
\begin{theorem}[Improved \QSZK{}-hardness for \QSDP{}]
\label{thm:QSD-is-QSZKhard}
Let $\alpha(n)$ and $\beta(n)$ be efficiently computable functions satisfying $\alpha^2(n)-\beta(n) \geq 1/O(\log{n})$. For any constant $\epsilon \in (0,1/2)$, $\QSDP[\alpha,\beta]$ is \QSZK{}-hard when $\alpha(n) \leq 1-2^{-n^{1/2-\epsilon}}$ and $\beta(n) \geq 2^{-n^{1/2-\epsilon}}$ for every $n\in\bbN$.
\end{theorem}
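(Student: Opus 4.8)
The plan is to carry out a polynomial-time many–one reduction from the canonical \QSZK{}-hard problem to $\QSDP[\alpha,\beta]$, driven by the polarization lemma for the trace distance. As the starting point, recall that Watrous~\cite{Wat02} showed $\QSDP[\alpha_0,\beta_0]$ is \QSZK{}-hard (indeed \QSZK{}-complete) for fixed constants $0\le\beta_0<\alpha_0^2\le1$, say $\alpha_0=2/3$ and $\beta_0=1/3$. It thus suffices to map, in time $\poly(n)$, an instance $(Q_0,Q_1)$ on $n$ qubits with $\td(\rho_0,\rho_1)\ge\alpha_0$ (resp.\ $\le\beta_0$) to an instance $(Q_0',Q_1')$ on $N$ qubits with $\td(\rho_0',\rho_1')\ge\alpha(N)$ (resp.\ $\le\beta(N)$) whose parameters lie in the claimed regime.

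First I would apply the trace-distance polarization lemma of~\cite{Wat02}. Since the trace distance is the measured statistical distance, the two classical ingredients of the \cite{SV97} construction lift to the purifying circuits: the direct-product step by tensoring $Q_0^{\otimes m}$ against $Q_1^{\otimes m}$ (amplifying \textit{yes} instances, with \Cref{prop:traceDist-vs-B} playing the role of the Hellinger/Bures direct-product inequality), and the XOR step by the usual parity gadget (amplifying \textit{no} instances). For any parameter $t=t(n)$ this produces circuits $(Q_0',Q_1')$ on $N=N(n,t)$ qubits with $\td\ge1-2^{-t}$ for \textit{yes} instances and $\td\le2^{-t}$ for \textit{no} instances, where $N(n,t)=\poly(n,t)$; since the reduction need not be dimension-preserving, $N>n$ causes no trouble here.

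Next I would tune $t$. Given the target $\epsilon\in(0,1/2)$ together with functions satisfying $\beta(N)\ge2^{-N^{1/2-\epsilon}}$ and $\alpha(N)\le1-2^{-N^{1/2-\epsilon}}$, it is enough to choose $t=t(n)$ with $t\ge N(n,t)^{1/2-\epsilon}$, for then $1-2^{-t}\ge1-2^{-N^{1/2-\epsilon}}\ge\alpha(N)$ and $2^{-t}\le2^{-N^{1/2-\epsilon}}\le\beta(N)$, while $\alpha^2(N)-\beta(N)\ge1-o(1)\ge1/O(\log N)$, so the image is a genuine instance of $\QSDP[\alpha,\beta]$ in the stated regime. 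Because $N(n,t)=\poly(n,t)$, such a $t=\poly(n)$ exists precisely when the size blow-up of the polarization is, up to lower-order factors, quadratic in $t$: with $N(n,t)=O\!\big((nt)^2\big)$ one has $N^{1/2-\epsilon}=O\!\big((nt)^{1-2\epsilon}\big)$, so any $t\ge c\cdot n^{(1-2\epsilon)/(2\epsilon)}$ works for every constant $\epsilon>0$. The exponent $1/2$ is essentially sharp for this method -- as $\epsilon\to0$ the required degree of $t$ is unbounded -- which is also consistent with $\QSDP$ at error $2^{-\Omega(n)}$ being easy (\Cref{thm:easy-instances-QSZK-informal}). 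Composing the reduction with the \QSZK{}-hardness of $\QSDP[\alpha_0,\beta_0]$ completes the argument.

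The main obstacle, and the reason the detailed proof is omitted in favour of the pointer to~\cite[Theorem 3.14]{BDRV19}, is exactly this quantitative control of the polarization: one must organize it so that reaching error $2^{-t}$ costs only $\poly(n,t)$ additional qubits rather than $2^{\Omega(t)}$. Concretely, this means running an inverse-logarithmic-gap \cite{SV97}-style polarization first to reach a gap of $\Theta(1/t)$ at size $\poly(n,t)$, and only then applying an $O(t)$-fold XOR followed by an $O(t)$-fold direct product; checking that the cumulative dependence on $t$ stays essentially quadratic, and that every step behaves for the trace distance exactly as its classical analog behaves for the statistical distance, is the technical heart and closely mirrors the classical argument of~\cite{BDRV19}.
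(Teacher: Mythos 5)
Your proposal is correct and follows exactly the route the paper intends: the paper omits the details and defers to the classical argument of~\cite[Theorem 3.14]{BDRV19}, which is precisely this polarization-based reduction from constant-gap $\QSDP$, transplanted to the trace distance via the Fuchs--van de Graaf inequalities (\Cref{prop:traceDist-vs-B}) and the XOR/direct-product gadgets of~\cite{Wat02}. Your quantitative accounting of the size blow-up (keeping it essentially quadratic in $t$ so that every constant $\epsilon\in(0,1/2)$ is attainable with $t=\poly(n)$) is the same bookkeeping that underlies the cited classical theorem.
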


Furthermore, let \coQSDP{} denote the complement of \QSDP{}. Noting that \QSZK{} is closed under the complement~\cite{Wat02,Wat09}, \coQSDP{} is thus also \QSZK{}-complete. 

\subsubsection{Quantum entropy difference problem}
\label{subsubsec:QEDP}
The definition of the \textsc{Quantum Entropy Difference Problem}, denoted as $\QEDP[g]$, slightly differs from the flavor of \Cref{def:QSDP}: 
\begin{definition}[Quantum Entropy Difference Problem, ${\QEDP[g]}$, adapted from~{\cite[Section 1.2]{BASTS10}}]
	\label{def:QEDP}
    Let $Q_0$ and $Q_1$ be quantum circuits that act on $m(n)$ qubits and having $n$ specified output qubits, where $m$ is polynomial in $n$. For $i \in \binset$, let $\rho_i$ be the state obtained by running $Q_i$ in $\ket{0^n}$ and tracing out the non-output qubits. Let $g:\bbN\rightarrow\bbR^+$ be an efficiently computable function. Then promise that one of the following cases will occur: 
\begin{itemize}[topsep=0.33em, itemsep=0.33em, parsep=0.33em]
	\item \emph{Yes:} A pair of quantum circuits $(Q_0,Q_1)$ such that $\S(\rho_0)-\S(\rho_1) \geq g(n)$;
	\item \emph{No:} A pair of quantum circuits $(Q_0,Q_1)$ such that $\S(\rho_1)-\S(\rho_0) \geq g(n)$.
\end{itemize}
\end{definition}

As implicitly demonstrated in~\cite{BASTS10}, the \QSZK{} containment of $\QEDP[g]$ holds even when $g(n)$ is polynomially small: 
\begin{theorem}[Implicitly in~\cite{BASTS10}]
	\label{thm:QEDP-is-SZK-complete}
	For any efficiently computable function $g(n)$ satisfying $g(n) \geq 1/\poly(n)$, it holds that $\QEDP[g(n)]$ is in $\QSZK$. 
\end{theorem}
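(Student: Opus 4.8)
The plan is to reduce $\QEDP[g(n)]$ to the \QSZK{}-complete problem $\QSDP$ (or, more conveniently, to an intermediate entropy-gap version that is already known to be in \QSZK{}), following the classical blueprint of Ben-Aroya, Schwartz and Ta-Shma but keeping track of the polynomial precision. First I would recall the standard ``flattening'' / polarization toolkit for von Neumann entropy: given a circuit $Q$ producing $\rho$ on $k$ output qubits, taking $m = \poly(n)$ independent copies produces $\rho^{\otimes m}$ with entropy $m\cdot S(\rho)$, and $\rho^{\otimes m}$ is, up to exponentially small trace distance, close to a ``flat'' state (a state whose eigenvalues are all roughly equal on its support). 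This is the quantum analogue of the flattening lemma for distributions, and it is exactly the step that converts a polynomially small entropy gap $g(n)$ on a single copy into a gap of $m\cdot g(n)$, which can be amplified to be as large as we like relative to the error terms introduced.

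Next I would run the entropy-comparison-to-state-distinguishability reduction. The key identity is that for a flat state of entropy $s$ on a space of dimension $D$, the state is close to the maximally mixed state on a subspace of dimension $\approx 2^s$, so comparing $S(\rho_0)$ against $S(\rho_1)$ becomes comparing an (approximately) flat state against a reference maximally mixed state of a known dimension. Concretely, for the \emph{yes} case ($S(\rho_0) - S(\rho_1) \ge g$) one arranges that $\rho_0^{\otimes m}$ tensored with an appropriate ancilla is far in trace distance from $\rho_1^{\otimes m}$ tensored with the complementary ancilla, while in the \emph{no} case ($S(\rho_1) - S(\rho_0) \ge g$) the roles reverse and they are close; here one uses that trace distance between (nearly) flat states of noticeably different entropies is close to $1$, whereas flat states of the same entropy on nested supports are close. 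After this reduction one lands in $\QSDP[\alpha,\beta]$ with $\alpha - \beta \ge 1/\poly(n)$ — indeed with $\alpha$ close to $1$ and $\beta$ close to $0$ after one more round of polarization for the trace distance (available since $\QSZK$ contains $\QSDP$ in the constant polarizing regime and polarization for $\td$ is dimension-preserving, per \Cref{prop:traceDist-vs-B} and the discussion after \Cref{def:QSDP}). Hence $\QEDP[g(n)] \in \QSZK$.

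The step I expect to be the main obstacle is making the quantum flattening quantitative with the right error budget: one must show that $\rho^{\otimes m}$ is within exponentially small trace distance of a genuinely flat state while the ``flatness defect'' (the spread of $\log$-eigenvalues, controlled by a quantum AEP / typical-subspace argument with a second-moment or Chernoff-type tail bound on $-\log \lambda_i$) shrinks like $1/\sqrt{m}$ times a quantity depending only on $\max_i |\log \lambda_i|$, which for circuit-produced states is at most $\poly(n)$. Choosing $m$ a sufficiently large polynomial then dominates all additive losses. A secondary subtlety is that the output of a polynomial-size circuit need not have full rank, so one works on the support and must argue the relevant dimensions are efficiently known (they are bounded by $2^k$), and that purifications of the tensored/ancilla-padded states are again producible by polynomial-size circuits — routine but needs to be stated. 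Given these pieces, the theorem follows; since the proof is essentially the quantum transcription of \cite{BASTS10} with the precision bookkeeping already implicit there, I would present it compactly and defer the flattening estimates to a lemma.
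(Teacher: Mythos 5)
Your first step already contains the paper's entire proof: the paper simply tensors $p(n)$ copies of each state, uses additivity of von Neumann entropy to turn the gap $g(n)$ into $p(n)g(n)\geq 1/2$, and thereby reduces $\QEDP[g(n)]$ to $\QEDP[1/2]$, whose \QSZK{} containment is taken as a black box from~\cite{BASTS10}. You identify exactly this amplification (``taking $m$ copies converts a gap of $g(n)$ into $m\cdot g(n)$'') and you even note the shortcut of reducing ``to an intermediate entropy-gap version that is already known to be in \QSZK{}'' --- had you stopped there, the proposal would coincide with the paper.

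Instead you commit to re-deriving the constant-gap containment itself via flattening and a reduction to \QSDP{}, and that is where the proposal is under-specified. The step ``comparing $S(\rho_0)$ against $S(\rho_1)$ becomes comparing a flat state against a reference maximally mixed state of a known dimension'' hides the central difficulty: in \QEDP{} neither entropy is known, only the promise on their difference, so one cannot simply pad with ancillas of ``the complementary'' dimension. The actual argument in~\cite{BASTS10} (mirroring Goldreich--Vadhan classically) requires applying quantum extractors/expanders to map the higher-entropy state close to maximally mixed while leaving the lower-entropy one detectably far, and constructing such expanders is the main technical content of that paper. Your sketch of ``tensor with an appropriate ancilla and the trace distance becomes large/small'' would not go through as stated for non-commuting, non-nested supports without that machinery. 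None of this is needed for the theorem at hand, since the constant-gap case may be cited; but as written, the route you actually propose has a genuine gap at precisely the step you flag as ``routine but needs to be stated.''
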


\begin{proof}
    It suffices to show a promise gap amplification that reduces $\QEDP[g]$ to $\QEDP[1/2]$.
    Consider new states $\trho_0$ and $\trho_1$ where $\trho_k=\rho_k^{\otimes p(n)}$ for $k\in\binset$ and $p(n)$ is a polynomial of $n$ such that $p(n)g(n)\geq 1/2$. Noting that von Neumann entropy is additive for independent systems, for \textit{yes} instances, we obtain that $\S\!\left(\trho_0\right) - \S\!\left(\trho_1\right) 
    = p(n) \cdot \big( \S(\rho_0)-\S(\rho_1) \big) \geq p(n)g(n) \geq 1/2$. 
    Likewise, we deduce that $\S(\trho_1)-\S(\trho_0) \geq 1/2$ for \textit{no} instances, as desired. 
\end{proof}


\section{Quantum analogs of the triangular discrimination}
\label{sec:triangular-discrimination}

In this section, we introduce \textit{two quantum analogs of the triangular discrimination} and demonstrate their relationships with several commonly used distances, such as trace distance, Bures distance (closely related to the fidelity), and quantum Jensen-Shannon divergence. 

To the best of our knowledge, there is no known quantum analog of triangular discrimination (also known as Vincent-Le Cam divergence). Since triangular discrimination is a symmetrized version of $\chi^2$ divergence, $\TD(p_0,p_1)=\chi^2(p_0\big\|\frac{p_0+p_1}{2})=\chi^2(p_1\big\|\frac{p_0+p_1}{2})$, we present the first quantum analog which is derived from the quantum $\chi^2$ divergence in~\cite{TKRWV10}. 

\begin{definition}[Quantum Triangular Discrimination]
    \label{def:QTD}
    The quantum triangular discrimination between two quantum states $\rho_0$ and $\rho_1$ is defined as 
    \[\QTD(\rho_0,\rho_1)  \coloneqq  \frac{1}{2}\Tr\left( (\rho_0-\rho_1)(\rho_0+\rho_1)^{-1/2}(\rho_0-\rho_1)(\rho_0+\rho_1)^{-1/2} \right).\]
    
    \noindent Furthermore, if $\rho_0+\rho_1$ is not full-rank, then the inverse is defined only on its support. 
\end{definition}

It is noteworthy that this quantum analog of triangular discrimination can be generally defined as $\QTD_{\alpha}(\rho_0,\rho_1)=\chi^2_{\alpha}\left(\rho_z\big\|\frac{\rho_0+\rho_1}{2}\right)$ for $z\in\binset$, following the approach presented in~\cite{TKRWV10}. However, $\QTD_{\alpha}$ is only upper-bounded by the trace distance for $\alpha=1/2$.\footnote{See  \Cref{remark:QTD-vs-traceDist} for the details.} Therefore, we use $\QTD_{\alpha=1/2}(\rho_0,\rho_1)$ for defining \QTD{} in this paper. 

In addition,  we establish another quantum analog of triangular discrimination, denoted by the \textit{Measured Quantum Triangular Discrimination} ($\measQTD$), based on distributions induced by quantum measurements in terms of \Cref{eq:measured-f-divergences}. By utilizing~\cite[Lemma 5]{TV15}, we can derive an explicit formula for $\measQTD$.\footnote{Given $\TD(p_0,p_1)=\chi^2(p_z\big\|\frac{p_0+p_1}{2})$ for $z\in\binset$, an explicit formula for $\measQTD$ follows~\cite[Lemma 5]{TV15}: $\measQTD(\rho_0,\rho_1)=\Tr\big(\frac{\rho_0-\rho_1}{2} \Omega_{\rho_{+}}\!\big( \frac{\rho_0-\rho_1}{2} \big)\big)$ where  $\rho_{+}\! \coloneqq \!\frac{\rho_0+\rho_1}{2}$ and the linear operator $\Omega_{\rho}$ satisfies $\Omega_{\rho}^{-1}(A)=(\rho A+A\rho)/2$. In particular, following the observation in \cite[Section 3.1.2]{BOW19}, if $\rho_{+}=(\beta_1,\cdots,\beta_d)$ is diagonal of full rank, then $\measQTD(\rho_0,\rho_1) = \sum_{i,j=1}^d \frac{2}{\beta_i+\beta_j} |(\rho_{-})_{ij}|^2$ where $\rho_{-}  \coloneqq  \frac{\rho_0-\rho_1}{2}$. \label{footnote:measQTD-explicit-formula}}
As is typical, \QTD{} is lower-bounded by its measured variant $\measQTD$, following from a data-processing inequality for the quantum $\chi^2$-divergence~\cite[Proposition 6]{TKRWV10}: 

\begin{proposition}
\label{prop:measQTD-leq-QTD}
For any quantum states $\rho_0$ and $\rho_1$, $\QTD(\rho_0,\rho_1) \geq \QTD^{\rm meas}(\rho_0,\rho_1)$. 
\end{proposition}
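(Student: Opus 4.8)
The plan is to derive \Cref{prop:measQTD-leq-QTD} as a direct consequence of the data-processing inequality (DPI) for the quantum $\chi^2_{\alpha}$-divergence established in~\cite[Proposition 6]{TKRWV10}, invoked at the parameter $\alpha = 1/2$ that is used throughout this section. Recall from \Cref{def:QTD} and the surrounding discussion that $\QTD(\rho_0,\rho_1) = \chi^2_{1/2}\big(\rho_z \big\| \tfrac{\rho_0+\rho_1}{2}\big)$ for either choice of $z \in \binset$, and that $\measQTD(\rho_0,\rho_1) = \sup_{\mathrm{POVM}~\calE} \TD\big(p_0^{(\calE)}, p_1^{(\calE)}\big)$, where, by the classical identity $\TD(p_0,p_1) = \chi^2\big(p_z \big\| \tfrac{p_0+p_1}{2}\big)$, each term in the supremum equals $\chi^2\big(p_z^{(\calE)} \big\| \tfrac{p_0^{(\calE)}+p_1^{(\calE)}}{2}\big)$.

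First I would fix an arbitrary POVM $\calE = \{E_x\}_{x \in \calU}$ and let $\Phi_\calE$ denote the associated quantum-to-classical channel $\Phi_\calE(\sigma) = \sum_x \Tr(\sigma E_x)\ketbra{x}{x}$. Then $\Phi_\calE(\rho_z) = p_z^{(\calE)}$ and, by linearity, $\Phi_\calE\big(\tfrac{\rho_0+\rho_1}{2}\big) = \tfrac{p_0^{(\calE)}+p_1^{(\calE)}}{2}$. Applying the DPI for $\chi^2_{1/2}$ to the channel $\Phi_\calE$ with input states $\rho_z$ and $\tfrac{\rho_0+\rho_1}{2}$ yields
\[
\chi^2\Big(p_z^{(\calE)} \,\Big\|\, \tfrac{p_0^{(\calE)}+p_1^{(\calE)}}{2}\Big) = \chi^2_{1/2}\Big(\Phi_\calE(\rho_z) \,\Big\|\, \Phi_\calE\big(\tfrac{\rho_0+\rho_1}{2}\big)\Big) \leq \chi^2_{1/2}\Big(\rho_z \,\Big\|\, \tfrac{\rho_0+\rho_1}{2}\Big) = \QTD(\rho_0,\rho_1),
\]
where the first equality uses the fact that the quantum $\chi^2_{1/2}$-divergence reduces to the classical $\chi^2$-divergence on commuting (here, diagonal) states, which holds for classical-quantum outputs of $\Phi_\calE$. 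Since this bound holds for every POVM $\calE$, taking the supremum over $\calE$ on the left-hand side gives $\measQTD(\rho_0,\rho_1) \leq \QTD(\rho_0,\rho_1)$, as claimed.

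The one point that requires a little care — and which I expect to be the only real obstacle — is checking that the hypotheses of~\cite[Proposition 6]{TKRWV10} are met at $\alpha = 1/2$ even when $\tfrac{\rho_0+\rho_1}{2}$ is not full rank; this is why \Cref{def:QTD} restricts the inverse to the support of $\rho_0+\rho_1$. One resolves this exactly as in the classical case: the supremum defining $\measQTD$ is unchanged if one restricts to POVMs supported on $\supp(\rho_0)\cup\supp(\rho_1) = \supp(\rho_0+\rho_1)$, so all relevant states live on a common subspace and the $\chi^2_{1/2}$-divergence is finite and the DPI applies verbatim on that subspace. Everything else is a formal substitution, so the proof should be short.
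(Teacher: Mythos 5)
Your proposal is correct and follows essentially the same route as the paper: both invoke the data-processing inequality for the quantum $\chi^2_{\alpha=1/2}$-divergence from~\cite[Proposition 6]{TKRWV10} applied to the measurement channel, with the only cosmetic difference being that you fix an arbitrary POVM and take the supremum at the end, whereas the paper directly plugs in the optimal POVM. Your additional remark about restricting to the support of $\rho_0+\rho_1$ is a reasonable extra precaution but does not change the argument.
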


\begin{proof}
According to~\cite[Proposition 6]{TKRWV10}, a data-processing inequality for the quantum $\chi^2_{\alpha=1/2}$-divergence, we have: for any quantum states $\rho_0$ and $\rho_1$, 
\begin{align*}
    \QTD(\rho_0,\rho_1) &=\chi^2_{\alpha=1/2}\rbra*{\rho_0 \bigg\| \frac{\rho_0+\rho_1}{2}}\\
    &\geq \chi^2_{\alpha=1/2}\rbra*{\calM(\rho_0) \bigg\| \calM(\frac{\rho_0+\rho_1}{2})} \\
    &=\tilde{\chi}^2_{\alpha=1/2}\rbra*{\rho_0 \bigg\| \frac{\rho_0+\rho_1}{2}} \\
    &=\QTD^{\rm meas}(\rho_0,\rho_1)
\end{align*}
Here, we denote the measured $\chi^2$-divergence as $\tilde{\chi}^2_{\alpha}(\cdot,\cdot)$, which is defined in terms of \Cref{eq:measured-f-divergences}. Additionally, we choose the quantum channel $\calM$ that corresponds to the optimal POVM in $\tilde{\chi}^2_{\alpha}\left(\rho_0 \big\| \frac{\rho_0+\rho_1}{2}\right)$. 
\end{proof}

\vspace{1em}
We now present three theorems that examine the relationships between the quantum triangular discrimination (\QTD{}) and other commonly used quantum distances and divergences. 
\Cref{thm:QTD-vs-td} compares \QTD{} with the trace distance ($\td$) and is established through a combination of \Cref{lemma:QTD-leq-QSD,lemma:QSDsquare-leq-QTD} in \Cref{subsec:QTD-vs-td}. The latter relies on the trace distance being also a measured version of the statistical distance. 
\begin{theorem}[\QTD{} vs.\ trace distance]
\label{thm:QTD-vs-td}
For any quantum states $\rho_0$ and $\rho_1$, it holds that
\[\td^2(\rho_0,\rho_1) \leq \measQTD(\rho_0,\rho_1) \leq \QTD(\rho_0,\rho_1) \leq \td(\rho_0,\rho_1).\]
\end{theorem}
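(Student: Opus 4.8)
The statement chains three inequalities, and the plan is to isolate the two outer ones as separate lemmas, $\QTD(\rho_0,\rho_1)\leq\td(\rho_0,\rho_1)$ (\Cref{lemma:QTD-leq-QSD}) and $\td^2(\rho_0,\rho_1)\leq\measQTD(\rho_0,\rho_1)$ (\Cref{lemma:QSDsquare-leq-QTD}), while the middle inequality $\measQTD(\rho_0,\rho_1)\leq\QTD(\rho_0,\rho_1)$ is already in hand as \Cref{prop:measQTD-leq-QTD}, i.e.\ the data-processing inequality for the quantum $\chi^2_{1/2}$-divergence applied to an optimal measurement. Concatenating the three then gives the theorem.

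For the upper bound $\QTD\leq\td$, I would recognize $\QTD$ as a single trace pairing against an operator of norm at most one. Write $\Delta\coloneqq\rho_0-\rho_1$ and $\Sigma\coloneqq\rho_0+\rho_1$, with every inverse taken on $\supp{\Sigma}$; since $\supp{\Delta}\subseteq\supp{\Sigma}$ this loses nothing even when $\Sigma$ is singular. Put $T\coloneqq\Sigma^{-1/2}\Delta\,\Sigma^{-1/2}$, which is Hermitian, and note by cyclicity of the trace that
\[
\QTD(\rho_0,\rho_1)=\tfrac{1}{2}\Tr\!\left(\Delta\,\Sigma^{-1/2}\Delta\,\Sigma^{-1/2}\right)=\tfrac{1}{2}\Tr(\Delta T)=\tfrac{1}{2}\Tr\!\left(\big(\Sigma^{-1/4}\Delta\,\Sigma^{-1/4}\big)^{2}\right)\geq0.
\]
The crucial observation is $\norm{T}_\infty\leq1$: from $\Sigma-\Delta=2\rho_1\succeq0$ and $\Sigma+\Delta=2\rho_0\succeq0$ we get $-\Sigma\preceq\Delta\preceq\Sigma$, and conjugating by $\Sigma^{-1/2}$ gives $-P_\Sigma\preceq T\preceq P_\Sigma$, where $P_\Sigma$ is the support projection of $\Sigma$. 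The trace-norm/operator-norm duality $\abs{\Tr(XY)}\leq\norm{X}_1\norm{Y}_\infty$ then closes it:
\[
2\,\QTD(\rho_0,\rho_1)=\Tr(\Delta T)=\abs{\Tr(\Delta T)}\leq\norm{\Delta}_1\,\norm{T}_\infty\leq\Tr\abs{\Delta}=2\,\td(\rho_0,\rho_1).
\]

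For the lower bound $\td^2\leq\measQTD$, I would transfer the classical inequality $\SD^2\leq\TD$ of~\cite{Top00} through the measured characterization in \Cref{eq:measured-f-divergences}. Writing $\measQTD(\rho_0,\rho_1)=\sup_{\calE}\TD\big(p_0^{(\calE)},p_1^{(\calE)}\big)$ over POVMs $\calE$, and using that the trace distance is the measured statistical distance, i.e.\ $\td(\rho_0,\rho_1)=\sup_{\calE}\SD\big(p_0^{(\calE)},p_1^{(\calE)}\big)$ (attained by a two-outcome projective measurement), one gets $\measQTD(\rho_0,\rho_1)\geq\sup_{\calE}\SD^2\big(p_0^{(\calE)},p_1^{(\calE)}\big)=\big(\sup_{\calE}\SD\big(p_0^{(\calE)},p_1^{(\calE)}\big)\big)^{2}=\td^2(\rho_0,\rho_1)$, where the first step uses $\TD\geq\SD^2$ pointwise and the second uses nonnegativity to pull the square outside the supremum.

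The only step that genuinely requires an idea is the reformulation $\QTD=\tfrac{1}{2}\Tr(\Delta T)$ together with $\norm{T}_\infty\leq1$; after that the upper bound is a one-line Hölder estimate, the lower bound is a direct import of a known classical fact, and the middle inequality is already proved. I do not anticipate a real obstacle, but one should keep the bookkeeping on supports in mind so that the pseudo-inverses in the definition of $\QTD$ are consistently read on $\supp{\Sigma}$.
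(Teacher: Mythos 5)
Your proposal is correct and follows essentially the same route as the paper: the theorem is assembled from \Cref{prop:measQTD-leq-QTD} for the middle inequality, a matrix H\"older (trace-norm/operator-norm duality) estimate with the operator-norm bound $\norm{(\rho_0+\rho_1)^{-1/2}(\rho_0-\rho_1)(\rho_0+\rho_1)^{-1/2}}_\infty \leq 1$ for $\QTD \leq \td$, and the measured characterizations of $\measQTD$ and $\td$ combined with the classical fact $\TD \geq \SD^2$ for the lower bound. The only cosmetic difference is that you obtain the norm bound by conjugating $-\Sigma \preceq \Delta \preceq \Sigma$ directly, whereas the paper rewrites the operator as $I - \rho^{-1/2}\rho_1\rho^{-1/2}$ and invokes a corollary of Weyl's inequalities --- the same underlying fact.
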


\Cref{thm:QTD-vs-QJS} demonstrates the relationship between \QTD{} and the quantum Jensen-Shannon divergence (\QJS{}), which is based on a combination of \Cref{lemma:QTDsquare-leq-QJS,lemma:QJS-leq-QTD} in \Cref{subsec:QTD-vs-Bures}. The proof of these lemmas takes advantage of inequalities on the trace distance, thereby linking \QJS{} and \QTD{}. 
\begin{theorem}[\QTD{} vs.\ \QJS{}]
\label{thm:QTD-vs-QJS}
For any quantum states $\rho_0$ and $\rho_1$, it holds that
\[\frac{1}{2}\QTD^2(\rho_0,\rho_1) \leq \QJS(\rho_0,\rho_1) \leq \QTD(\rho_0,\rho_1).\]
\end{theorem}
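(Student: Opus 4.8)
The plan is to prove \Cref{thm:QTD-vs-QJS} by sandwiching both quantities against the trace distance, exactly as the paper hints when it says the lemmas ``take advantage of inequalities on the trace distance.'' For the upper bound $\QJS(\rho_0,\rho_1) \leq \QTD(\rho_0,\rho_1)$, I would first recall from \Cref{lemma:QJS-leq-traceDist} that $\QJS(\rho_0,\rho_1) \leq \ln 2 \cdot \td(\rho_0,\rho_1)$ — but this is in the wrong direction relative to \Cref{thm:QTD-vs-td}, which only gives $\QTD \leq \td$. So instead I expect the right route is a direct comparison: both $\QJS$ and $\QTD$ can be written as symmetrized divergences against the midpoint $\rho_+ := \tfrac{\rho_0+\rho_1}{2}$, namely $\QJS = \tfrac12[\D(\rho_0\|\rho_+) + \D(\rho_1\|\rho_+)]$ and $\QTD_{1/2} = \tfrac12[\chi^2_{1/2}(\rho_0\|\rho_+) + \chi^2_{1/2}(\rho_1\|\rho_+)]$. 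The classical inequality $\mathrm{KL}(p\|q) \leq \ln(1+\chi^2(p\|q)) \leq \chi^2(p\|q)$ suggests that the quantum relative entropy is dominated term-by-term by the appropriate quantum $\chi^2$-divergence; I would look for this in \cite{TKRWV10} or \cite{Hiai21}, since $\chi^2_{1/2}$ is the particular quantum $\chi^2$ that upper-bounds the Umegaki relative entropy. Applying such an inequality to each of the two terms and averaging gives $\QJS \leq \QTD$ directly.

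For the lower bound $\tfrac12 \QTD^2(\rho_0,\rho_1) \leq \QJS(\rho_0,\rho_1)$, I would chain through the trace distance in the other direction. By \Cref{thm:QTD-vs-td} we have $\QTD(\rho_0,\rho_1) \leq \td(\rho_0,\rho_1)$, so $\QTD^2 \leq \td^2$, and hence it suffices to show $\tfrac12 \td^2(\rho_0,\rho_1) \leq \QJS(\rho_0,\rho_1)$. This last inequality should follow from \Cref{lemma:traceDist-leq-QJS}, which gives the series expansion $\QJS_2(\rho_0,\rho_1) \geq \sum_{v\geq 1} \tfrac{\td^{2v}}{\ln 2 \cdot 2v(2v-1)}$; keeping only the $v=1$ term yields $\QJS_2 \geq \tfrac{\td^2}{2\ln 2}$, i.e.\ $\QJS \geq \tfrac12 \td^2$, and combined with $\QTD \leq \td$ this is exactly what we want. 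I should double-check the normalization here — whether the factor $\ln 2$ lands in the right place — but the structure is clear: lower-bound $\QJS$ by a truncated Pinsker-type series in $\td$, then use $\QTD \leq \td$ from \Cref{thm:QTD-vs-td}.

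The main obstacle I anticipate is the upper bound $\QJS \leq \QTD$. The lower bound is essentially a two-line chaining argument through already-proven results, but the upper bound requires a genuine matrix inequality comparing the quantum relative entropy $\D(\rho\|\sigma)$ with the quantum $\chi^2_{1/2}$-divergence $\chi^2_{1/2}(\rho\|\sigma) = \Tr((\rho-\sigma)\sigma^{-1/2}(\rho-\sigma)\sigma^{-1/2})$ when $\sigma$ dominates a convex combination involving $\rho$. In the commuting (classical) case this is the elementary $\ln t \leq t-1$ applied pointwise; in the noncommuting case one must be careful, and the cleanest path is probably to invoke the known ordering of quantum $f$-divergences — the Umegaki relative entropy is the smallest ``standard'' $f$-divergence for $f(t) = t\ln t$, and it is dominated by the maximal $\chi^2$-type divergence. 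I would present it as follows: state the needed inequality $\D(\rho\|\sigma) \leq \chi^2_{1/2}(\rho\|\sigma)$ as a consequence of \cite{TKRWV10} (or prove it via operator convexity / integral representation if no citation is clean), then average over $\rho \in \{\rho_0,\rho_1\}$ against $\sigma = \rho_+$.

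\begin{proof}[Proof sketch of \Cref{thm:QTD-vs-QJS}]
For the lower bound, combine $\QTD(\rho_0,\rho_1) \leq \td(\rho_0,\rho_1)$ from \Cref{thm:QTD-vs-td} with the $v=1$ term of the series in \Cref{lemma:traceDist-leq-QJS}, which gives $\QJS(\rho_0,\rho_1) \geq \tfrac12 \td^2(\rho_0,\rho_1) \geq \tfrac12 \QTD^2(\rho_0,\rho_1)$. For the upper bound, write $\QJS(\rho_0,\rho_1) = \tfrac12 \bigl[\D\bigl(\rho_0 \big\| \tfrac{\rho_0+\rho_1}{2}\bigr) + \D\bigl(\rho_1 \big\| \tfrac{\rho_0+\rho_1}{2}\bigr)\bigr]$ and $\QTD(\rho_0,\rho_1) = \tfrac12\bigl[\chi^2_{1/2}\bigl(\rho_0 \big\| \tfrac{\rho_0+\rho_1}{2}\bigr) + \chi^2_{1/2}\bigl(\rho_1 \big\| \tfrac{\rho_0+\rho_1}{2}\bigr)\bigr]$, and apply the quantum inequality $\D(\rho\|\sigma) \leq \chi^2_{1/2}(\rho\|\sigma)$ (from~\cite{TKRWV10,Hiai21}) to each term.
\end{proof}
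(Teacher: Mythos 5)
Your proposal follows essentially the same route as the paper, and the lower bound is verbatim the paper's argument: \Cref{lemma:QTDsquare-leq-QJS} chains $\QTD \leq \td$ (\Cref{lemma:QTD-leq-QSD}) with the first-order truncation of the series in \Cref{lemma:traceDist-leq-QJS}, and your normalization worry resolves exactly as you hoped, since $\QJS = \ln 2 \cdot \QJS_2$ turns the series into $\QJS \geq \sum_{v\geq 1}\frac{\td^{2v}}{2v(2v-1)} \geq \frac{1}{2}\td^2$. For the upper bound you have correctly isolated the crux, namely the termwise inequality $\D(\rho\|\sigma) \leq \chi^2_{1/2}(\rho\|\sigma)$ applied with $\sigma = \frac{\rho_0+\rho_1}{2}$, but you leave it to a citation, and this is the one place where you would have to do real work: the paper does not take it off the shelf. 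Instead, \Cref{lemma:QJS-leq-QTD} proves it (following the proof strategy of Theorem 8 in~\cite{TKRWV10}, not quoting its statement) by invoking the upper bound on the quantum relative entropy from~\cite{RS90}, $\D(\rho\|\sigma) \leq \frac{1}{\gamma}\big[\Tr\big(\rho^{1+\gamma}\sigma^{-\gamma}\big)-1\big]$ with $\gamma = 1/2$, and then bounding $\Tr\big(\rho^{3/2}\sigma^{-1/2}\big) - 1 \leq \frac{1}{2}\big[\Tr\big(\rho\,\sigma^{-1/2}\rho\,\sigma^{-1/2}\big)-1\big]$ via the positivity of $\Tr\big[\big(\rho^{1/2}\sigma^{-1/2}\rho^{1/2}-\rho^{1/2}\big)^2\big]$, after which summing over $\rho\in\{\rho_0,\rho_1\}$ gives exactly $\QTD$. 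So your fallback instinct (prove the matrix inequality directly rather than cite it) is the right one, and the concrete mechanism is this two-step trace argument; with that filled in, your proof coincides with the paper's.
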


\Cref{thm:QTD-vs-B} explores the relationship between the \QTD{} and the Bures distance. The bounds of $\measQTD$ (\Cref{lemma:B-leq-measQTD-leq-2B}) rely on the Bures distance being the measured version of the Hellinger distance, while the bounds of \QTD{} (\Cref{prop:QTD-leq-B}) are established using inequalities involving the trace distance. The detailed proof can be found in \Cref{subsec:QTD-vs-QJS}. 
\begin{theorem}[\QTD{} vs.\ Bures distance]
\label{thm:QTD-vs-B}
For any quantum states $\rho_0$ and $\rho_1$, it holds that
\[\frac{1}{2}\Bsquare(\rho_0,\rho_1) \leq \measQTD(\rho_0,\rho_1) \leq \Bsquare(\rho_0,\rho_1) \text{ and } \frac{1}{2} \Bsquare(\rho_0,\rho_1) \leq \QTD(\rho_0,\rho_1) \leq \B(\rho_0,\rho_1).\]
\end{theorem}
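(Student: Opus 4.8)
The plan is to establish the four inequalities of \Cref{thm:QTD-vs-B} by reducing two of them to already-proven facts and proving the other two directly. For the $\measQTD$ bounds, I would exploit the fact that both $\measQTD$ and $\Bsquare$ are \emph{measured} $f$-divergences (of $\chi^2$ and of the squared Hellinger distance, respectively) and that, by the classical inequality $\Hsquare(p,q) \leq \TD(p,q) \leq 2\Hsquare(p,q)$ noted after \Cref{def:HelDist}, the underlying classical divergences satisfy the pointwise relation $\Hsquare \leq \TD \leq 2\Hsquare$ for \emph{every} pair of induced distributions. Concretely: fix any POVM $\calE$; applying the classical inequality to $p_0^{(\calE)}, p_1^{(\calE)}$ gives $\Hsquare(p_0^{(\calE)},p_1^{(\calE)}) \leq \TD(p_0^{(\calE)},p_1^{(\calE)}) \leq 2\Hsquare(p_0^{(\calE)},p_1^{(\calE)})$. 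Taking suprema over $\calE$ on each side and using that the trace distance\,/\,Bures distance are the measured statistical\,/\,Hellinger distances, one gets $\frac{1}{2}\Bsquare \leq \measQTD$ from the right inequality (the optimal POVM for $\Bsquare$ is in particular \emph{a} POVM for $\measQTD$, so $\measQTD \geq \TD(p_0^{(\calE^*)},p_1^{(\calE^*)}) \geq \frac12\Hsquare(\cdots) = \frac12\Bsquare$) and $\measQTD \leq \Bsquare$ from the left inequality (for the optimal POVM $\calE'$ of $\measQTD$, $\measQTD = \TD(p_0^{(\calE')},p_1^{(\calE')}) \geq \Hsquare(p_0^{(\calE')},p_1^{(\calE')})$, wait — I need the \emph{upper} bound, so instead use $\measQTD = \TD(p_0^{(\calE')},\cdot) \leq$ hmm). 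Let me restate: the clean route for $\measQTD \leq \Bsquare$ is that for the optimal $\calE'$ of $\measQTD$ one has $\measQTD = \TD(p_0^{(\calE')},p_1^{(\calE')}) \geq \Hsquare(p_0^{(\calE')},p_1^{(\calE')})$, which is the \emph{wrong} direction; so instead one should argue $\Hsquare(p_0^{(\calE')},p_1^{(\calE')}) \leq \Bsquare$ trivially and use $\TD \leq 2\Hsquare$ to get $\measQTD \leq 2\Bsquare$ — which is weaker than claimed. The resolution must use that the \emph{same} optimal measurement attains both, or a sharper structural fact; this is the step I expect to require care, and I would handle it via \Cref{footnote:measQTD-explicit-formula}'s explicit formula in the diagonalizing basis of $\rho_+$, where both $\measQTD$ and $\Bsquare$ become explicit sums and the inequality $\frac12\Bsquare \leq \measQTD \leq \Bsquare$ can be checked entrywise.

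For the $\QTD$ bounds, I would chain the already-established results of \Cref{thm:QTD-vs-td} and \Cref{prop:traceDist-vs-B}. The upper bound $\QTD \leq \B$ follows immediately: $\QTD(\rho_0,\rho_1) \leq \td(\rho_0,\rho_1) \leq \B(\rho_0,\rho_1)$ by \Cref{thm:QTD-vs-td} and the right inequality of \Cref{prop:traceDist-vs-B}. For the lower bound $\frac12\Bsquare \leq \QTD$, I would use $\QTD \geq \measQTD$ (\Cref{prop:measQTD-leq-QTD}) together with the already-derived $\frac12\Bsquare \leq \measQTD$ from the first part, so that $\QTD \geq \measQTD \geq \frac12\Bsquare$. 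Thus, once the $\measQTD$ part is settled, the $\QTD$ part is essentially free, and the whole theorem reduces to establishing the two $\measQTD$ inequalities.

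The main obstacle is therefore the upper bound $\measQTD(\rho_0,\rho_1) \leq \Bsquare(\rho_0,\rho_1)$, since a naive measured-divergence comparison only gives the factor-$2$ loss $\measQTD \leq 2\Bsquare$. I would overcome this by not treating $\measQTD$ and $\Bsquare$ as generic measured divergences but by using concrete formulas. Write $\Bsquare = 2(1-\F) = \min_{\calE}\Hsquare(p_0^{(\calE)},p_1^{(\calE)})$ — wait, it is the \emph{max}; the measured squared Hellinger distance equals $\Bsquare$ with the max attained by the optimal POVM. Then for that same optimal POVM $\calE^\star$ one has $\TD(p_0^{(\calE^\star)},p_1^{(\calE^\star)}) \leq 2\Hsquare(p_0^{(\calE^\star)},p_1^{(\calE^\star)}) = 2\Bsquare$, still factor $2$. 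The honest fix is the explicit-formula computation: in the eigenbasis of $\rho_+ = (\rho_0+\rho_1)/2$ with eigenvalues $\beta_i$, \Cref{footnote:measQTD-explicit-formula} gives $\measQTD = \sum_{i,j} \frac{2}{\beta_i+\beta_j}|(\rho_-)_{ij}|^2$ with $\rho_- = (\rho_0-\rho_1)/2$, and one can bound the kernel $\frac{2}{\beta_i+\beta_j}$ against the corresponding quantity appearing in a matching lower/upper expansion of $\Bsquare = 2 - 2\Tr|\sqrt{\rho_0}\sqrt{\rho_1}|$; since $\|\rho_-\|$ is controlled and $\beta_i+\beta_j \geq$ the smaller eigenvalue, the per-entry comparison yields the sharp constants $\tfrac12$ and $1$. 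I would present the $\measQTD$ inequalities as a standalone lemma (\Cref{lemma:B-leq-measQTD-leq-2B}), prove it via this entrywise estimate, then assemble \Cref{thm:QTD-vs-B} in two lines from that lemma, \Cref{thm:QTD-vs-td}, \Cref{prop:measQTD-leq-QTD}, and \Cref{prop:traceDist-vs-B}.
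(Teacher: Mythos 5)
Your reduction of the $\QTD$ part to the $\measQTD$ part is exactly what the paper does (\Cref{prop:QTD-leq-B}): $\QTD \leq \td \leq \B$ via \Cref{thm:QTD-vs-td} and \Cref{prop:traceDist-vs-B}, and $\QTD \geq \measQTD \geq \frac{1}{2}\Bsquare$ via \Cref{prop:measQTD-leq-QTD}. The lower bound $\frac{1}{2}\Bsquare \leq \measQTD$ is also handled correctly. The genuine gap is the upper bound $\measQTD \leq \Bsquare$: you explicitly concede that your supremum argument only yields $\measQTD \leq 2\Bsquare$, and your fallback --- an entrywise comparison of the kernel $\frac{2}{\beta_i+\beta_j}$ against an ``expansion'' of $2 - 2\Tr\abs{\sqrt{\rho_0}\sqrt{\rho_1}}$ --- is only asserted, not carried out, and it is far from clear how to execute it since $\F$ has no convenient entrywise expression in the eigenbasis of $\rho_+$. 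So as written, the central inequality of the theorem is not proven.

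The obstruction you ran into is illusory, and the culprit is a normalization error. With \Cref{def:HelDist}, $\Hsquare(p,q) = 1 - \sum_x\sqrt{p(x)q(x)}$, and the fidelity is the \emph{minimum} Bhattacharyya coefficient over POVMs, so the measured squared Hellinger distance is
\[
\sup_{\mathrm{POVM}~\calE}\Hsquare\rbra*{p_0^{(\calE)},p_1^{(\calE)}} = 1 - \F(\rho_0,\rho_1) = \tfrac{1}{2}\Bsquare(\rho_0,\rho_1),
\]
i.e.\ \emph{half} of $\Bsquare$ (by \Cref{def:BuresDist}, $\Bsquare = 2(1-\F)$), not $\Bsquare$ itself as your computation assumes. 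Plugging this into your own argument closes the gap with no loss: letting $\calE^*$ be optimal for $\measQTD$, the pointwise Le Cam inequality gives
\[
\measQTD(\rho_0,\rho_1) = \TD\rbra*{p_0^{(\calE^*)},p_1^{(\calE^*)}} \leq 2\,\Hsquare\rbra*{p_0^{(\calE^*)},p_1^{(\calE^*)}} \leq 2\sup_{\calE}\Hsquare\rbra*{p_0^{(\calE)},p_1^{(\calE)}} = \Bsquare(\rho_0,\rho_1),
\]
which is precisely the paper's proof of \Cref{lemma:B-leq-measQTD-leq-2B} (there phrased via the elementary bound $a^2+b^2 \leq (a+b)^2 \leq 2(a^2+b^2)$ applied to $a=\sqrt{p_0(x)}$, $b=\sqrt{p_1(x)}$ in \Cref{eq:measQTD}). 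No explicit-formula or entrywise computation is needed; the factor of $2$ you were trying to kill is already absorbed by the convention $\Bsquare = 2(1-\F)$.
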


\subsection{\QTD{} vs.\ trace distance}
\label{subsec:QTD-vs-td}
We begin by establishing the challenging direction (in \Cref{thm:QTD-vs-td}) that \QTD{} is upper-bounded by the trace distance (\Cref{lemma:QTD-leq-QSD}), as well as highlighting two important subtleties of \QTD{}. The proof of the converse direction will be provided at the end of this subsection. 

\begin{lemma}[$\QTD \leq \td$]
\label{lemma:QTD-leq-QSD}
For any quantum states $\rho_0$ and $\rho_1$, $\QTD(\rho_0,\rho_1) \leq \td(\rho_0,\rho_1).$
\end{lemma}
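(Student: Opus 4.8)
The plan is to realize $\QTD(\rho_0,\rho_1)$ as $\tfrac{1}{2}\Tr\!\big(M(\rho_0-\rho_1)\big)$ for a single Hermitian operator $M$ satisfying $-I \preceq M \preceq I$, and then to quote the standard (Helstrom) variational characterization of the trace distance. Write $\Sigma \coloneqq \rho_0+\rho_1$ and $\Delta \coloneqq \rho_0-\rho_1$. Since $\supp{\rho_z} \subseteq \supp{\Sigma}$ for positive semidefinite matrices, $\Delta$ is supported on $\calH' \coloneqq \supp{\Sigma}$, so every expression below is read on $\calH'$, where $\Sigma^{-1/2}$ is honestly invertible; this is precisely what lets the rank-deficient case of \Cref{def:QTD} go through with no extra work. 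By cyclicity of the trace, $\QTD(\rho_0,\rho_1) = \tfrac{1}{2}\Tr\!\big(\Delta\Sigma^{-1/2}\Delta\Sigma^{-1/2}\big) = \tfrac{1}{2}\Tr(M\Delta)$ with $M \coloneqq \Sigma^{-1/2}\Delta\Sigma^{-1/2}$, which is Hermitian; rewriting $M\Delta$ cyclically as $(\Sigma^{-1/4}\Delta\Sigma^{-1/4})^2$ also re-proves $\QTD(\rho_0,\rho_1)\geq 0$ along the way.

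The crux is the operator inequality $-I \preceq M \preceq I$ on $\calH'$, which I would obtain directly from positivity of $\rho_0$ and $\rho_1$: one has $I \pm M = \Sigma^{-1/2}(\Sigma\pm\Delta)\Sigma^{-1/2}$, equal to $\Sigma^{-1/2}(2\rho_1)\Sigma^{-1/2}$ and $\Sigma^{-1/2}(2\rho_0)\Sigma^{-1/2}$ respectively, both positive semidefinite; extending $M$ by zero outside $\calH'$ keeps $-I \preceq M \preceq I$ on the whole space. Since $\Tr(\Delta)=0$, the substitution $X = 2E-I$ turns Helstrom's formula $\td(\rho_0,\rho_1)=\max_{0\preceq E\preceq I}\Tr(E\Delta)$ into $\td(\rho_0,\rho_1)=\tfrac{1}{2}\max_{-I\preceq X\preceq I}\Tr(X\Delta)$. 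Plugging in the admissible choice $X=M$ gives $\QTD(\rho_0,\rho_1)=\tfrac{1}{2}\Tr(M\Delta)\leq\td(\rho_0,\rho_1)$, which is the claim.

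I do not expect a genuine obstacle here: the content lies entirely in picking the rewriting of $\QTD$ that is witnessed by a single valid measurement operator. It is instructive to see why the naive route stalls: combining the weak bound $\QTD(\rho_0,\rho_1)\le\tfrac12\Tr(\Sigma^{-1}\Delta^2)$ (from $\Tr(ABAB)\le\Tr(A^2B^2)$ with $A=\Sigma^{-1/2}$, $B=\Delta$) with a bound $\Tr(\Sigma^{-1}\Delta^2)\le\Tr|\Delta|$ would require essentially $|\Delta|\preceq\Sigma$, which is false already for $2\times 2$ density matrices (so the intermediate estimate can overshoot $\td$). The same phenomenon is behind one of the two subtleties I would flag around the lemma (cf.\ \Cref{remark:QTD-vs-traceDist}): for the general family $\QTD_\alpha(\rho_0,\rho_1)=\chi^2_\alpha(\rho_z\|\tfrac{\rho_0+\rho_1}{2})$ the relevant operator is no longer squeezed between $\pm I$ unless $\alpha=1/2$, so both the argument and the inequality $\QTD_\alpha\le\td$ break outside $\alpha=1/2$ — which is exactly why \Cref{def:QTD} fixes $\alpha=1/2$.
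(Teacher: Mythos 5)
Your proposal is correct and takes essentially the same route as the paper: both reduce the lemma to the operator inequality $-I \preceq M \preceq I$ for $M \coloneqq (\rho_0+\rho_1)^{-1/2}(\rho_0-\rho_1)(\rho_0+\rho_1)^{-1/2}$, obtained identically from $I \pm M = (\rho_0+\rho_1)^{-1/2}(2\rho_{0})(\rho_0+\rho_1)^{-1/2}$ resp.\ $(\rho_0+\rho_1)^{-1/2}(2\rho_{1})(\rho_0+\rho_1)^{-1/2} \succeq 0$ (your ``respectively'' swaps the two, harmlessly), and then conclude by trace-norm/operator-norm duality. The only cosmetic difference is that you invoke this duality through Helstrom's variational formula for the trace distance, whereas the paper phrases it as the matrix H\"older inequality followed by a Weyl-inequality bound on $\norm{M}_\infty \leq 1$.
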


The first subtlety of \QTD{} lies in the fact that the inequality in \Cref{lemma:QTD-leq-QSD} holds solely for a particular choice of $\alpha=1/2$  for $\QTD_{\alpha}$ (leading to the minimum):

\begin{remark}[$\QTD_{\alpha} \leq \td$ holds only for $\alpha=1/2$]
\label{remark:QTD-vs-traceDist}
As~\cite[Proposition 7]{TKRWV10} implies that $\QTD_{\alpha=1/2} \leq \QTD_{\alpha}$, we may wonder whether \Cref{lemma:QTD-leq-QSD} holds for any $\alpha\in[0,1]$. Here is a counterexample: Consider two single-qubit pure quantum states  $\rho^*_0=\frac{1}{2}\left(I+\frac{6}{7}\sigma_X+\frac{3}{7}\sigma_Y+\frac{2}{7}\sigma_Z\right)$ and $\rho^*_1=\frac{1}{2}\left(I-\frac{3}{7}\sigma_X-\frac{2}{7}\sigma_Y+\frac{6}{7}\sigma_Z\right)$, where $\sigma_X$, $\sigma_Y$ and $\sigma_Z$ are Pauli matrices.
Then we simply have $\QTD_{\alpha=1/2}(\rho^*_0,\rho^*_1) = \td(\rho^*_0,\rho^*_1) < \QTD_{\alpha>1/2}(\rho^*_0,\rho^*_1)$.
\end{remark}

The second subtlety of \QTD{} concerns the notable difference in the equality condition of this inequality (\Cref{prop:conds-QTD-eq-QSD}) compared to its classical counterpart. Specifically, the classical counterpart merely requires \Cref{prop:conds-QTD-eq-QSD}\ref{thmitem:conds-QTD-eq-QSD-classical}.\footnote{In particular, $(p_0(x)-p_1(x))^2=(p_0(x)+p_1(x))^2$ holds for any $x\in \supp{p_0}\cup\supp{p_1}$.} Nevertheless, the inequalities in \Cref{thm:QTD-vs-td} exhibit a similar behavior to the inequalities between the corresponding classical distances, namely triangular discrimination (TD) and statistical difference (SD).

\begin{proposition}[Equality condition for $\QTD\leq\td$]
\label{prop:conds-QTD-eq-QSD}
For any quantum states $\rho_0$ and $\rho_1$, the equality $\QTD(\rho_0,\rho_1)=\td(\rho_0,\rho_1)$ holds if and only if these states satisfy the following conditions: 
\begin{enumerate}[label={\upshape(\roman*)}, topsep=0.33em, itemsep=0.33em, parsep=0.33em]
    \item \label{thmitem:conds-QTD-eq-QSD-classical} $(\rho_0-\rho_1)(\rho_0+\rho_1)^{-1}(\rho_0-\rho_1)=(\rho_0+\rho_1)$; 
    \item $(\rho_0-\rho_1)^{\dagger}(\rho_0-\rho_1)=\frac{\Tr\left[(\rho_0-\rho_1)^{\dagger}(\rho_0-\rho_1)\right]}{|\supp{\rho_0-\rho_1}|}I$;
    \item For any $k\in\supp{\rho_0-\rho_1}$, 
    \[\sign\rbra*{\lambda_k(\rho_0-\rho_1)}=\sign\rbra*{\lambda_k\big( (\rho_0+\rho_1)^{-1/2}(\rho_0-\rho_1)(\rho_0+\rho_1)^{1/2} \big)},\] where $\lambda_k(A)$ is the $k$-th eigenvalue of the matrix $A$. 
\end{enumerate}
\end{proposition}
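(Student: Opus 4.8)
The plan is to read the equality conditions off from a variational proof of $\QTD\le\td$ (\Cref{lemma:QTD-leq-QSD}). Write $\Delta:=\rho_0-\rho_1$ and $\Sigma:=\rho_0+\rho_1$; both are Hermitian, $\Sigma\succeq 0$, and $\supp{\Delta}\subseteq\supp{\Sigma}$, so taking the inverse of $\Sigma$ on its support gives $\Sigma^{1/2}\Sigma^{-1/2}\Delta=\Delta$. Put $U:=\Sigma^{-1/2}\Delta\Sigma^{-1/2}$. I would first record the two facts behind the inequality: (i) $\QTD(\rho_0,\rho_1)=\tfrac12\Tr(\Delta U)$ directly from the definition; and (ii) the constraints $\rho_0\succeq0$, $\rho_1\succeq0$ are exactly $-\Sigma\preceq\Delta\preceq\Sigma$, and conjugating by $\Sigma^{-1/2}$ turns this into $\|U\|_\infty\le1$. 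Combined with the dual formula $\td(\rho_0,\rho_1)=\tfrac12\|\Delta\|_1=\tfrac12\max\{\Tr(\Delta W):W=W^\dagger,\ \|W\|_\infty\le1\}$, these recover the bound $\QTD\le\td$, with equality precisely when $U$ is itself an optimal $W$.

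Next I would pin down that extremal condition. For any Hermitian $W$ with $\|W\|_\infty\le1$, writing $\Delta=\sum_k\lambda_k\ketbra{v_k}{v_k}$ one gets $\Tr(\Delta W)=\sum_k\lambda_k\bra{v_k}W\ket{v_k}\le\sum_k|\lambda_k|$, and equality forces $\bra{v_k}W\ket{v_k}=\sign(\lambda_k)$ for every $k$ with $\lambda_k\neq0$; the Cauchy--Schwarz equality case upgrades this to $W\ket{v_k}=\sign(\lambda_k)\ket{v_k}$. Applied with $W=U$, this yields the clean form of the equality condition: $\QTD(\rho_0,\rho_1)=\td(\rho_0,\rho_1)$ if and only if $U\,\Pi=\sign(\Delta)$, where $\Pi$ is the orthogonal projector onto $\supp{\rho_0-\rho_1}$ --- equivalently, $(\rho_0+\rho_1)^{-1/2}(\rho_0-\rho_1)(\rho_0+\rho_1)^{-1/2}$ restricted to $\supp{\rho_0-\rho_1}$ coincides with $\sign(\rho_0-\rho_1)$ and preserves that subspace.

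The remaining work is to rewrite this single identity as the conjunction of conditions (1), (2), and (3), and conversely. In the forward direction, Hermiticity of $U$ promotes $U\Pi=\sign(\Delta)$ to $\Pi U=\sign(\Delta)$, so $U$ commutes with $\Pi$ and $(U\Pi)^2=\Pi$; substituting $\Delta=\Sigma^{1/2}U\Sigma^{1/2}$ into the identity $(\rho_0-\rho_1)(\rho_0+\rho_1)^{-1}(\rho_0-\rho_1)=\Sigma^{1/2}U^2\Sigma^{1/2}$ and using the involution property collapses the right side to $\rho_0+\rho_1$, giving (1). Condition (2) --- that $\Delta^\dagger\Delta=\Delta^2$ is a scalar multiple of $\Pi$, i.e.\ the nonzero eigenvalues of $\rho_0-\rho_1$ share a common magnitude --- comes from feeding $\sign(\Delta)=\Delta|\Delta|^{-1}$ into $\Sigma^{-1/2}\Delta\Sigma^{-1/2}\Pi=\sign(\Delta)$, which pins $|\Delta|$ proportional to $\Sigma$ along $\supp{\Delta}$. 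Condition (3) is the sign matching between $\rho_0-\rho_1$ and the congruent operator $(\rho_0+\rho_1)^{-1/2}(\rho_0-\rho_1)(\rho_0+\rho_1)^{1/2}$, i.e.\ $\Pi U\Pi=\sign(\Delta)$ read through the similarity $X\mapsto\Sigma^{-1/2}X\Sigma^{1/2}$. The converse reverses these steps: (1) makes $U|_{\supp{\Delta}}$ a reflection, (2) normalizes $\Delta$ to $\mu(\Pi_+-\Pi_-)$ for a single $\mu>0$, and (3) forces that reflection to be exactly $\sign(\Delta)$. The hard part, and where I would take the most care, is the pseudo-inverse and support bookkeeping: one must consistently separate how $U$ acts on $\supp{\rho_0-\rho_1}$ from how it acts on the possibly strictly larger $\supp{\rho_0+\rho_1}$ (the three identities only constrain the former), and track how the Moore--Penrose inverse of $\rho_0+\rho_1$ threads through the nested products, since careless handling is exactly where one would either lose necessity of a condition or pick up a spurious one.
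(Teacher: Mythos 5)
Your route is genuinely different from the paper's and, up to its last step, cleaner. The paper's proof invokes the matrix H\"older inequality $\Tr\abs{AB}\le\norm{A}_1\norm{B}_\infty$ with $A=\frac{\rho_0-\rho_1}{2}$ and $B=\rbra[\big]{\frac{\rho_0+\rho_1}{2}}^{-1/2}\rbra[\big]{\frac{\rho_0-\rho_1}{2}}\rbra[\big]{\frac{\rho_0+\rho_1}{2}}^{-1/2}$, and reads conditions (1)--(3) off Ciosmak's formulation of the equality case, deducing along the way that all singular values of $A$ coincide and all singular values of $B$ equal $1$. You instead use the dual characterization $\td(\rho_0,\rho_1)=\frac12\max\cbra{\Tr(\Delta W):W=W^\dagger,\ \norm{W}_\infty\le1}$ together with the observation that $\rho_0,\rho_1\succeq0$ is equivalent to $-\Sigma\preceq\Delta\preceq\Sigma$, hence $\norm{U}_\infty\le1$ for $U=\Sigma^{-1/2}\Delta\Sigma^{-1/2}$; the extremality analysis then yields the exact equality condition $U\Pi=\sign(\Delta)$, with $\Pi$ the projector onto $\supp{\rho_0-\rho_1}$. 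That characterization is correct, self-contained, and is the real content here.

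The gap is in the final translation, and it is not only the support bookkeeping you flag: condition (2) does not follow from $U\Pi=\sign(\Delta)$. Your intermediate conclusion --- that $\abs{\Delta}$ is pinned to $\Sigma$ on $\supp{\Delta}$ --- is the right one, but it is not the statement $\Delta^\dagger\Delta\propto I$ unless $\Sigma$ happens to be a multiple of the identity there. Concretely, take $\rho_0=\diag(0.7,0.3,0,0)$ and $\rho_1=\diag(0,0,0.5,0.5)$: the supports are orthogonal, so $\td=1$, and $\QTD=\frac12\sum_i(\Delta_{ii})^2/\Sigma_{ii}=1$ as well, yet $\Delta^\dagger\Delta=\diag(0.49,0.09,0.25,0.25)$ is not a multiple of the identity. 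Equality holds while condition (2) fails, so the ``only if'' direction of the proposition is false as stated and no completion of your last step can succeed. (The same example locates the error in the paper's own proof: for the H\"older pair $(1,\infty)$ the equality case does not force the singular values of $A$ to be equal, so the step concluding $\sigma_k(A)=\sigma_{\max}(A)$ for all $k$ is a misapplication of the general $(p,q)$ condition; condition (1) as written likewise forces $\supp{\rho_0-\rho_1}=\supp{\rho_0+\rho_1}$, which equality does not require.) The honest endpoint of your argument is the corrected characterization $\QTD(\rho_0,\rho_1)=\td(\rho_0,\rho_1)$ if and only if $U\Pi=\sign(\rho_0-\rho_1)$, which in the commuting case reduces exactly to the classical criterion of disjoint supports; you should state and prove that rather than force it into (1)--(3).
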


\vspace{1em}

We now outline the proof of \Cref{lemma:QTD-leq-QSD}: Firstly, we establish an upper bound of \QTD{} by the trace distance with an infinite norm (multiplicative) factor using a matrix version of H{\"o}lder inequality. Subsequently, we bound this infinite norm factor by analyzing its largest singular value employing the Weyl's inequalities. The detailed proof follows below. 

\begin{proof}[Proof of \Cref{lemma:QTD-leq-QSD}]
By utilizing a matrix H{\"o}lder inequality, such as~\cite[Corollary IV.2.6]{Bhatia96}, we obtain
\begin{equation}
\begin{aligned}
\label{eq:QTD-leq-QSD-Holder}
\QTD(\rho_0, \rho_1) 
&= \frac{1}{2} \Tr\left( (\rho_0-\rho_1)(\rho_0+\rho_1)^{-1/2}(\rho_0-\rho_1)(\rho_0+\rho_1)^{-1/2} \right) \\
&\leq \frac{1}{2} \|\rho_0-\rho_1\|_1 \cdot \|(\rho_0+\rho_1)^{-1/2} (\rho_0-\rho_1) (\rho_0+\rho_1)^{-1/2}\|_{\infty}\\
\end{aligned}
\end{equation}

It is sufficient to show that 
\[\|(\rho_0+\rho_1)^{-1/2} (\rho_0-\rho_1) (\rho_0+\rho_1)^{-1/2}\|_{\infty} = \sigma_{\max}\left((\rho_0+\rho_1)^{-1/2} (\rho_0-\rho_1) (\rho_0+\rho_1)^{-1/2}\right) \leq 1,\]
where $\sigma_{\max}(A)$ is the largest singular value of $A$. 
Let $\rho \coloneqq \frac{1}{2}(\rho_0+\rho_1)$, then we have
$(\rho_0+\rho_1)^{-1/2} (\rho_0-\rho_1) (\rho_0+\rho_1)^{-1/2} = \rho^{-1/2} (\rho-\rho_1) \rho^{-1/2} = I -  \rho^{-1/2}\rho_1 \rho^{-1/2}.$
Noting that $\rho^{-1/2} \rho_1 \rho^{-1/2}$ is positive semi-definite, and $I-\rho^{-1/2}\rho_1\rho^{-1/2}$ thus is Hermitian.
We then obtain that $|I-\rho^{-1/2}\rho_1\rho^{-1/2}| \preceq I$.\footnote{It suffices to show that $-I \preceq I-\rho^{-1/2}\rho_1\rho^{-1/2} \preceq I$. The right-hand side is evident, while the left-hand side follows from $\rho^{-1/2}\rho_1\rho^{-1/2} \preceq 2I$, which holds by applying $\Phi(\sigma) \coloneqq \rho^{1/2} \sigma \rho^{1/2}$ on both sides.} With the help of~\cite[Corollary 4.3.12]{HJ12}, a corollary of Weyl's inequalities, this inequality implies that:
\begin{equation}
\label{eq:QTD-leq-QSD-eigenvals}
\begin{aligned}
\sigma_{\max}\left(I-\rho^{-1/2}\rho_1\rho^{-1/2}\right) 
&= \lambda_{\max} \left(I-\rho^{-1/2}\rho_1\rho^{-1/2}\right) \\
&\leq \lambda_{\max}\left( \left(I-\rho^{-1/2}\rho_1\rho^{-1/2}\right) + \rho^{-1/2}\rho_1\rho^{-1/2} \right)\\
&\leq 1.
\end{aligned}
\end{equation}

Here, the first line is derived from the fact that the singular values of a Hermitian matrix are equal to the absolute values of the corresponding eigenvalues of the same matrix, and the last line is due to $\lambda_{\max}(I) = 1$.  
\end{proof}

To derive the equality condition of \Cref{lemma:QTD-leq-QSD}, and thereby prove \Cref{prop:conds-QTD-eq-QSD}, a thorough analysis of the equality condition of the matrix H{\"o}lder inequality in~\cite{Ciosmak21} is required. The detailed proof is provided subsequently.

\begin{proof}[Proof of \Cref{prop:conds-QTD-eq-QSD}]
We begin with the equality condition for the matrix H{\"o}lder inequality in~\cite[Theorem 2.11]{Ciosmak21}. Let $A=\frac{\rho_0-\rho_1}{2}$ and $B=\left(\frac{\rho_0+\rho_1}{2}\right)^{-1/2} \left(\frac{\rho_0-\rho_1}{2}\right) \left(\frac{\rho_0+\rho_1}{2}\right)^{-1/2}$, then
\begin{equation}
    \label{eq:Holder-QTD-eq-QSD}
    \frac{A^{\dagger} B}{\Tr|A| \|B\|_{\infty}} = \frac{B^{\dagger} A}{\Tr|A| \|B\|_{\infty}}
    = \frac{|A|}{\Tr|A|} = \frac{|B|^{\infty}}{\Tr\left(|B|^{\infty}\right)}.
\end{equation}
Moreover, $B^{\dagger}A$ is supposed to be symmetric and positive semi-definite. 
Noting that $A$ and $B$ are Hermitian, we obtain $[A,B]=0$ by using the first equality in \Cref{eq:Holder-QTD-eq-QSD}. This equality implies that $B^{\dagger}A$ is indeed symmetric, as well as the singular value decomposition $A=\sum_k \sigma_k(A)\ket{v_k}\bra{v_k}$ and $B=\sum_k \sigma_k(B)\ket{v_k}\bra{v_k}$. Then by \Cref{eq:Holder-QTD-eq-QSD}, we obtain
\begin{align*}
B^{\dagger}A = \sum_k \sigma_k(B)\sigma_{k}(A) \ket{v_{k}}\bra{v_{k}}&=\sigma_{\max}(B)\sum_k\sigma_k(A)\ket{v_k}\bra{v_k}=\|B\|_{\infty}|A|, \\
\frac{|A|}{\Tr|A|}=\sum_k \frac{\sigma_k(A)}{\sum_i\sigma_i(A)} \ket{v_i}\bra{v_i} &= \sum_k \frac{\sigma_k^{\infty}}{\sum_j \sigma_k^{\infty}(B)} \ket{v_k}\bra{v_k}=\frac{|B|^{\infty}}{\Tr\left(|B|^{\infty}\right)}.
\end{align*}
Noting that $\{\ket{v_i}\}_{v_i\in\supp{\rho_0-\rho_1}}$ is an orthonormal basis, by comparing the coefficients, we have
\begin{equation}
    \forall k:~\sigma_k(A) = \sigma_{\max}(A) \text{ and } \sigma_k(B) = \sigma_{\max}(B) = 1. 
\end{equation}
Here, $\sigma_{\max}(B)=1$ due to \Cref{eq:QTD-leq-QSD-eigenvals} with the equality. 
Therefore, we obtain that $B$ is an orthogonal matrix, which is equivalent to $(\rho_0-\rho_1)(\rho_0+\rho_1)^{-1}(\rho_0-\rho_1)=(\rho_0+\rho_1)$. Furthermore, noting that $\Tr(A^{\dagger} A)=\sum_k \sigma^2_k(A)$, this identity implies that $A^{\dagger}A=\frac{\Tr\left(A^{\dagger} A\right)}{|\supp{\rho_0-\rho_1}|}I$ as desired. Finally, to make $B^{\dagger} A$ to be positive semi-definite, we require that  $\sign\rbra*{\lambda_k(A)}=\sign\rbra*{\lambda_k(B)}$ for any $k\in\supp{\rho_0-\rho_1}$, which finishes the proof. 
\end{proof}

\vspace{1em}
Lastly, we present the proof of  \Cref{lemma:QSDsquare-leq-QTD} (the converse direction in \Cref{thm:QTD-vs-td}). In particular, by leveraging \Cref{prop:measQTD-leq-QTD}, we can derive a lower bound for the quantum counterparts of triangular discrimination in terms of the trace distance. 

\begin{lemma}[$\td^2 \leq \QTD$]
\label{lemma:QSDsquare-leq-QTD}
For any quantum states $\rho_0$ and $\rho_1$, 
\[\td(\rho_0,\rho_1)^2 \leq \measQTD(\rho_0,\rho_1) \leq \QTD(\rho_0,\rho_1).\]
\end{lemma}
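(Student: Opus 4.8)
The plan is to prove $\td(\rho_0,\rho_1)^2 \leq \measQTD(\rho_0,\rho_1)$, since the second inequality $\measQTD(\rho_0,\rho_1) \leq \QTD(\rho_0,\rho_1)$ is exactly \Cref{prop:measQTD-leq-QTD}. The key observation is that the trace distance is the \emph{measured} statistical distance: there exists a POVM $\calE^*$ (the two-outcome projective measurement onto the positive and negative parts of $\rho_0-\rho_1$) such that $\SD(p_0^{(\calE^*)}, p_1^{(\calE^*)}) = \td(\rho_0,\rho_1)$, where $p_z^{(\calE^*)}$ is the distribution induced on $\rho_z$. By the classical inequality $\SD^2(p_0,p_1) \leq \TD(p_0,p_1)$ recorded after \Cref{def:triDist} (from \cite{Top00}), applied to this pair of induced distributions, we get $\td(\rho_0,\rho_1)^2 = \SD^2(p_0^{(\calE^*)}, p_1^{(\calE^*)}) \leq \TD(p_0^{(\calE^*)}, p_1^{(\calE^*)})$.

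Next I would invoke the definition of $\measQTD$ as a measured $f$-divergence via \Cref{eq:measured-f-divergences}: since $\measQTD(\rho_0,\rho_1)$ is the supremum of $\TD(p_0^{(\calE)}, p_1^{(\calE)})$ over all POVMs $\calE$, in particular it is at least the value attained by the specific choice $\calE = \calE^*$. Hence
\[
\td(\rho_0,\rho_1)^2 \;=\; \SD^2\big(p_0^{(\calE^*)}, p_1^{(\calE^*)}\big) \;\leq\; \TD\big(p_0^{(\calE^*)}, p_1^{(\calE^*)}\big) \;\leq\; \measQTD(\rho_0,\rho_1).
\]
Chaining this with \Cref{prop:measQTD-leq-QTD} yields $\td(\rho_0,\rho_1)^2 \leq \measQTD(\rho_0,\rho_1) \leq \QTD(\rho_0,\rho_1)$, which is the claim. (The trailing ``for any $\alpha\in[0,1]$'' in the statement refers to the more general family $\QTD_\alpha$; since $\QTD_{\alpha=1/2} \leq \QTD_\alpha$ by \cite[Proposition 7]{TKRWV10}, and $\measQTD$ is defined from the $\alpha$-independent classical $\TD$, the bound $\td^2 \leq \measQTD \leq \QTD_\alpha$ follows uniformly.)

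The main obstacle is simply making the "trace distance is the measured statistical distance" step rigorous and self-contained: one must exhibit the optimal POVM $\calE^*$ and verify that the induced distributions achieve $\SD = \td$. This is the standard Helstrom/Holevo argument — decompose $\rho_0 - \rho_1 = P - N$ into positive and negative parts, take the projector $\Pi$ onto the support of $P$, and observe $\Tr(\Pi\rho_0) - \Tr(\Pi\rho_1) = \Tr(P) = \td(\rho_0,\rho_1)$, so the two-outcome measurement $\{\Pi, I-\Pi\}$ gives $\SD(p_0^{(\calE^*)},p_1^{(\calE^*)}) = \td(\rho_0,\rho_1)$ — and it is already cited in the excerpt (see the remark after \Cref{def:QSDP} and \cite[Theorem 9.1]{NC10}). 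Everything else is a one-line application of a classical inequality and the definition of a measured divergence, so no genuinely new estimate is needed; the proof is essentially a bookkeeping argument that transports the classical bound through the measurement that is tight for the trace distance.
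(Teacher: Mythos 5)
Your proposal is correct and follows essentially the same route as the paper: both arguments lower-bound $\measQTD$ by $\TD$ of the distributions induced by the POVM achieving the trace distance and then invoke the classical bound $\SD^2 \leq \TD$, finishing with \Cref{prop:measQTD-leq-QTD}. The only cosmetic difference is that the paper re-derives $\TD \geq \SD^2$ inline via Jensen's inequality rather than citing \cite{Top00} as a black box.
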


\begin{proof}
Owing to \Cref{prop:measQTD-leq-QTD}, it suffices to show that $\QTD^{\rm meas}(\rho_0,\rho_1) \geq \td(\rho_0,\rho_1)$. Analogous to the approach presented in~\cite{Top00}, we obtain the following for any POVM $\calE$:
\begin{align}
\QTD^{\rm meas}(\rho_0, \rho_1) &\geq \TD\left( p_0^{(\calE)}, p_1^{(\calE)} \right) \notag \\
&= \frac{1}{2} \sum_{x} \frac{\left(p_0^{(\calE)}(x)-p_1^{(\calE)}(x)\right)^2}{p_0^{(\calE)}(x)+p_1^{(\calE)}(x)} \notag \\
&= \sum_x \frac{p_0^{(\calE)}(x)+p_1^{(\calE)}(x)}{2} \cdot \abs*{\frac{p_0^{(\calE)}(x)-p_1^{(\calE)}(x)}{p_0^{(\calE)}(x)+p_1^{(\calE)}(x)}}^2 \notag \\
&\geq \rbra*{ \sum_x \frac{p_0^{(\calE)}(x)+p_1^{(\calE)}(x)}{2} \cdot \abs*{\frac{p_0^{(\calE)}(x)-p_1^{(\calE)}(x)}{p_0^{(\calE)}(x)+p_1^{(\calE)}(x)}} }^2 \notag \\
&=\rbra*{\frac{1}{2}\sum_x \abs*{p_0^{(\calE)}(x)-p_1^{(\calE)}(x)}}^2, \label{eq:QSDsquare-leq-QTD}
\end{align}
where the fourth line is because of $\bbE[X^2]\geq (\bbE[X])^2$ for any random variable $X$. 
We then complete the proof by choosing $\calE$ that maximizes the line of \Cref{eq:QSDsquare-leq-QTD}. 
\end{proof}

\subsection{\QTD{} vs.~(squared) Bures distance}
\label{subsec:QTD-vs-Bures}

We now present inequalities concerning two different quantum analogs of the triangular discrimination (TD), namely \QTD{} and the measured version $\measQTD$, expressed in terms of the Bures distance. Interestingly, these inequalities exhibit divergent behaviors for \QTD{} (\Cref{lemma:B-leq-measQTD-leq-2B}) and $\measQTD$ (\Cref{prop:QTD-leq-B}), and we can identify an example (in \Cref{remark:measQTD-vs-QTD}) that distinguishes between these two quantum analogs of TD. 
These divergent behaviors have implications in quantum complexity theory, particularly in the corresponding polarization lemma and the complexity class \QSZK{}.\footnote{See \Cref{lemma:measQTD-polarization,lemma:QTD-polarization} in \Cref{subsec:polarization-lemma-QTD} for further details.}

\vspace{1em}
We begin by establishing the inequalities between $\measQTD$ and the Bures distance, as stated in \Cref{lemma:B-leq-measQTD-leq-2B}. The proof crucially relies on the fact that the Bures distance corresponds to the measured version of Hellinger distance~\cite{FC94}. 

\begin{lemma}
\label{lemma:B-leq-measQTD-leq-2B}
For any quantum states $\rho_0$ and $\rho_1$, $\frac{1}{2}\Bsquare(\rho_0,\rho_1) \leq \measQTD(\rho_0,\rho_1) \leq \Bsquare(\rho_0,\rho_1)$. 
\end{lemma}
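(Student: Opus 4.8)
The plan is to exploit that both quantities in the statement are \emph{measured} $f$-divergences, so that the bound reduces to a classical pointwise inequality applied to the distributions induced by a common POVM. By \Cref{eq:measured-f-divergences}, $\measQTD(\rho_0,\rho_1)=\sup_{\calE}\TD\big(p_0^{(\calE)},p_1^{(\calE)}\big)$. On the other side, recalling that the fidelity is the minimal Hellinger affinity over measurements~\cite{FC94}, namely $\F(\rho_0,\rho_1)=\inf_{\calE}\sum_x\sqrt{p_0^{(\calE)}(x)\,p_1^{(\calE)}(x)}$, and using $\Bsquare=2(1-\F)$ together with $\Hsquare(p,q)=1-\sum_x\sqrt{p(x)q(x)}$ from \Cref{def:HelDist}, one gets the measured characterization $\Bsquare(\rho_0,\rho_1)=\sup_{\calE}2\,\Hsquare\big(p_0^{(\calE)},p_1^{(\calE)}\big)$: the $\inf$ defining $\F$ turns into a $\sup$, and the factor $2$ records the mismatch between the normalization of \Cref{def:HelDist} and the convention $\Bsquare=2(1-\F)$.

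Granting these two characterizations, the result follows from the classical bound $\Hsquare(p,q)\le\TD(p,q)\le 2\,\Hsquare(p,q)$~\cite[Page~48]{LeCam86}. For the upper bound, fix any POVM $\calE$; then $\TD\big(p_0^{(\calE)},p_1^{(\calE)}\big)\le 2\,\Hsquare\big(p_0^{(\calE)},p_1^{(\calE)}\big)\le\sup_{\calE'}2\,\Hsquare\big(p_0^{(\calE')},p_1^{(\calE')}\big)=\Bsquare(\rho_0,\rho_1)$, and taking the supremum over $\calE$ yields $\measQTD(\rho_0,\rho_1)\le\Bsquare(\rho_0,\rho_1)$. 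For the lower bound, pick a POVM $\calE^\ast$ attaining $\Bsquare(\rho_0,\rho_1)=2\,\Hsquare\big(p_0^{(\calE^\ast)},p_1^{(\calE^\ast)}\big)$ (such a measurement exists in finite dimension, e.g.\ the Fuchs--Caves measurement~\cite{FC94}, or one argues along a maximizing sequence); then $\measQTD(\rho_0,\rho_1)\ge\TD\big(p_0^{(\calE^\ast)},p_1^{(\calE^\ast)}\big)\ge\Hsquare\big(p_0^{(\calE^\ast)},p_1^{(\calE^\ast)}\big)=\tfrac12\Bsquare(\rho_0,\rho_1)$.

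The hard part is not any estimate but pinning down the constants: one must pair the $\inf$ in the measured description of $\F$ with the $\sup$ in those of $\Bsquare$ and $\measQTD$, and carry the factor $2$ so that $\Hsquare\le\TD\le 2\,\Hsquare$ produces precisely $\tfrac12\Bsquare\le\measQTD\le\Bsquare$ rather than a bound off by a factor of $2$; a naive reading ``$\Bsquare=\sup_\calE\Hsquare$'' would only give $\measQTD\le 2\,\Bsquare$. It is also worth noting that the measurement optimal for $\Bsquare$ is in general \emph{not} optimal for $\measQTD$, but this is harmless for the lower bound, where it suffices that $\measQTD$ dominates $\TD$ evaluated at that single measurement.
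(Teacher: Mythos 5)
Your proof is correct and follows essentially the same route as the paper's: both reduce the claim to the classical bound $\Hsquare \leq \TD \leq 2\Hsquare$ applied to measurement-induced distributions, combined with the Fuchs--Caves fact that $\sup_{\calE}\Hsquare\big(p_0^{(\calE)},p_1^{(\calE)}\big)=\tfrac{1}{2}\Bsquare(\rho_0,\rho_1)$, and your bookkeeping of the factor of $2$ matches the paper's normalization. The only cosmetic difference is that the paper re-derives the classical inequality inline via $a^2+b^2\leq(a+b)^2\leq 2(a^2+b^2)$ rather than citing Le Cam, and fixes the optimal measurement for $\measQTD$ up front instead of taking the supremum at the end.
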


\begin{proof} 
    Let $\calE^*$ be the optimal measurement that attains the value of $\measQTD(\rho_0,\rho_1)$. We first notice that 
    \begin{align*}
        \measQTD(\rho_0,\rho_1) &= \frac{1}{2} \sum_x \frac{\left( p_0^{(\calE^*)}(x) - p_1^{(\calE^*)}(x) \right)^2}{p_0^{(\calE^*)}(x) + p_1^{(\calE^*)}(x)}\\
        &= \frac{1}{2} \sum_x \tfrac{\left(\sqrt{p_0^{(\calE^*)}(x)}-\sqrt{p_1^{(\calE^*)}(x)}\right)^2 \left(\sqrt{p_0^{(\calE^*)}(x)}+\sqrt{p_1^{(\calE^*)}(x)}\right)^2}{p_0^{(\calE^*)}(x) + p_1^{(\calE^*)}(x)}.
    \end{align*}

    Noting that $a^2+b^2 \leq (a+b)^2 \leq 2(a^2+b^2)$ for $a,b\geq 0$, we have derived that
    \begin{equation}
    \label{eq:measQTD}
    \frac{1}{2} \sum_x \rbra*{ \sqrt{p_0^{(\calE)}(x)}-\sqrt{p_1^{(\calE)}(x)} }^2 \leq \measQTD(\rho_0,\rho_1) \leq \sum_x \rbra*{ \sqrt{p_0^{(\calE^*)}(x)}-\sqrt{p_1^{(\calE^*)}(x)} }^2. 
    \end{equation}

    Noting that the Bures distance is the measured Hellinger distance~\cite{FC94}, we have: 
    \begin{itemize}
        \item For the lower bound, since the first inequality in \Cref{eq:measQTD} holds for arbitrary POVM $\calE$, we choose $\calE'$ that maximizes the measured Hellinger distance, which indicates: 
    \[\measQTD(\rho_0,\rho_1) \geq  \frac{1}{2} \sum_x \left(\sqrt{p_0^{(\calE')}(x)}\!-\!\sqrt{p_1^{(\calE')}(x)}\right)^2 = \sup\limits_{{\rm POVM}~\calE} \!\Hsquare\left( p_0^{(\calE)}, p_1^{(\calE)} \right) = \frac{1}{2}\Bsquare(\rho_0,\rho_1).\]
        \item For the upper bound, let $\calE'$ be the POVM measurement that maximizes the measured Hellinger distance. By the second inequality in \Cref{eq:measQTD}, we deduce: 
    \[\measQTD(\rho_0,\rho_1) \leq \sum_x \left(\sqrt{p_0^{(\calE')}(x)}\!-\!\sqrt{p_1^{(\calE')}(x)}\right)^2 = \sup\limits_{{\rm POVM}~\calE} \!2\Hsquare\left( p_0^{(\calE)}, p_1^{(\calE)} \right) = \Bsquare(\rho_0,\rho_1). \qedhere\]
    \end{itemize}
\end{proof}

Next, we present the inequalities between \QTD{} and the Bures distance, as detailed in \Cref{prop:QTD-leq-B}. It is noteworthy that the upper bound in these inequalities is as weak as the trace distance, and we further provide an example (in \Cref{remark:measQTD-vs-QTD}) to distinguish these two quantum analogs of TD in terms of the Bures distance.

\begin{proposition}
\label{prop:QTD-leq-B}
For any quantum states $\rho_0$ and $\rho_1$, $\frac{1}{2}\Bsquare(\rho_0,\rho_1) \leq \QTD(\rho_0,\rho_1) \leq \B(\rho_0,\rho_1)$. 
\end{proposition}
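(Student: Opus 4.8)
The plan is to prove the two inequalities in \Cref{prop:QTD-leq-B} separately, in each case reducing to facts already established in the excerpt about the trace distance.

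\textbf{Lower bound.} For the inequality $\frac{1}{2}\Bsquare(\rho_0,\rho_1) \leq \QTD(\rho_0,\rho_1)$, the cleanest route is to chain together results we already have. By \Cref{prop:measQTD-leq-QTD} we know $\QTD(\rho_0,\rho_1) \geq \measQTD(\rho_0,\rho_1)$, and by \Cref{lemma:B-leq-measQTD-leq-2B} we know $\measQTD(\rho_0,\rho_1) \geq \frac{1}{2}\Bsquare(\rho_0,\rho_1)$. Composing these gives the claim immediately, so there is essentially nothing to do here beyond citing the two prior results.

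\textbf{Upper bound.} For $\QTD(\rho_0,\rho_1) \leq \B(\rho_0,\rho_1)$, the plan is to pass through the trace distance. By \Cref{lemma:QTD-leq-QSD} we have $\QTD(\rho_0,\rho_1) \leq \td(\rho_0,\rho_1)$, and by the right-hand inequality of \Cref{prop:traceDist-vs-B} we have $\td(\rho_0,\rho_1) \leq \B(\rho_0,\rho_1)$. Chaining these two bounds yields $\QTD(\rho_0,\rho_1) \leq \B(\rho_0,\rho_1)$, as desired. This matches the commentary in the excerpt, which explicitly remarks that the $\QTD$ bounds are "established using inequalities involving the trace distance" and that "the upper bound in these inequalities is as weak as the trace distance."

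\textbf{Anticipated obstacle.} Frankly, there is no serious obstacle: both directions are two-line compositions of lemmas proved earlier in \Cref{sec:triangular-discrimination} (namely \Cref{lemma:QTD-leq-QSD}, \Cref{prop:measQTD-leq-QTD}, \Cref{lemma:B-leq-measQTD-leq-2B}) together with the standard trace-distance/Bures-distance inequality \Cref{prop:traceDist-vs-B} from \cite{FvdG99}. The only mild subtlety is bookkeeping: one should make sure the lower bound uses the $\measQTD \geq \frac{1}{2}\Bsquare$ half of \Cref{lemma:B-leq-measQTD-leq-2B} (not the upper half), and that the $\td \leq \B$ direction of \Cref{prop:traceDist-vs-B} is the one invoked for the upper bound. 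A short remark afterward could note that the upper bound cannot in general be improved to $\Bsquare$ — unlike the measured variant in \Cref{lemma:B-leq-measQTD-leq-2B} — which is presumably the content of the forthcoming \Cref{remark:measQTD-vs-QTD} distinguishing the two quantum analogs.
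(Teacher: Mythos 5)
Your proposal is correct and coincides exactly with the paper's own proof: the lower bound is obtained by chaining \Cref{prop:measQTD-leq-QTD} with the left-hand inequality of \Cref{lemma:B-leq-measQTD-leq-2B}, and the upper bound by chaining \Cref{lemma:QTD-leq-QSD} with $\td \leq \B$ from \Cref{prop:traceDist-vs-B}. Your closing remark about the upper bound not improving to $\Bsquare$ is also precisely the content of \Cref{remark:measQTD-vs-QTD}.
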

\begin{proof}
We establish the left-hand side inequality by plugging \Cref{prop:measQTD-leq-QTD} into \Cref{lemma:B-leq-measQTD-leq-2B}. The right-hand side inequality follows from combining \Cref{prop:traceDist-vs-B} and \Cref{lemma:QTD-leq-QSD}.
\end{proof}

\begin{remark}[$\measQTD$ vs.~\QTD{}]
\label{remark:measQTD-vs-QTD}
The squared Bures distance is an example that separates between $\measQTD$ and \QTD{}: 
Utilizing the counterexample $\rho_0^*$ and $\rho_1^*$ defined in \Cref{remark:QTD-vs-traceDist}, we can obtain $\measQTD(\rho^*_0,\rho^*_1) \leq \Bsquare(\rho^*_0,\rho^*_1) < \QTD_{\alpha=1/2}(\rho^*_0,\rho^*_1) = \td(\rho^*_0,\rho^*_1) < \B(\rho^*_0,\rho^*_1).$
\end{remark}

\subsection{\QTD{} vs.~\QJS{}}
\label{subsec:QTD-vs-QJS}
We now establish the inequalities between \QTD{} and \QJS{}. It is worth noting that the corresponding classical distance, the triangular discrimination (TD), serves as a constant multiplicative-error approximation of the Jensen-Shannon divergence (JS), as illustrated by the inequalities $\frac{1}{2}\TD(p_0,p_1) \leq \JS(p_0,p_1) \leq \ln{2} \cdot \TD(p_0,p_1)$ in~\cite[Theorem 2]{Top00}. However, such a property does not extend to \QTD{} and \QJS{}.\footnote{For further details, please refer to \Cref{footnote:QTD-vs-QJS-tight-bound}.}

\vspace{1em}
We start by establishing the lower bound of \QJS{} in terms of \QTD{}, as stated in \Cref{lemma:QTDsquare-leq-QJS}. The proof straightforwardly follows from inequalities concerning the trace distance.

\begin{lemma}
\label{lemma:QTDsquare-leq-QJS}
For any quantum states $\rho_0$ and $\rho_1$, 
$\frac{1}{2}\QTD^2(\rho_0,\rho_1) \leq \QJS(\rho_0,\rho_1)$.
\end{lemma}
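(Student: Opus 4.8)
The plan is to chain together the two inequalities already placed at our disposal: the upper bound $\QTD(\rho_0,\rho_1) \leq \td(\rho_0,\rho_1)$ from \Cref{lemma:QTD-leq-QSD}, and the lower bound on $\QJS$ in terms of the trace distance from \Cref{lemma:traceDist-leq-QJS}. Concretely, write $t \coloneqq \td(\rho_0,\rho_1) \in [0,1]$. \Cref{lemma:traceDist-leq-QJS} (after multiplying through by $\ln 2$) gives
\[
\QJS(\rho_0,\rho_1) \;\geq\; \ln 2 \cdot \sum_{v=1}^{\infty} \frac{t^{2v}}{\ln 2 \cdot 2v(2v-1)} \;=\; \sum_{v=1}^{\infty} \frac{t^{2v}}{2v(2v-1)} \;\geq\; \frac{t^2}{2},
\]
where the final step just keeps the $v=1$ term of a series with nonnegative terms. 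Since \Cref{lemma:QTD-leq-QSD} gives $\QTD(\rho_0,\rho_1) \leq t$, hence $\QTD^2(\rho_0,\rho_1) \leq t^2$, we conclude $\tfrac12 \QTD^2(\rho_0,\rho_1) \leq \tfrac12 t^2 \leq \QJS(\rho_0,\rho_1)$, which is exactly the claim.

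First I would recall the statement of \Cref{lemma:traceDist-leq-QJS} and isolate the bound $\QJS(\rho_0,\rho_1) \geq \ln 2 \cdot \big(1 - \binH(\tfrac{1-t}{2})\big)$, then use the series expansion $1 - \binH(\tfrac{1-x}{2}) = \sum_{v\geq 1} \tfrac{x^{2v}}{\ln 2 \cdot 2v(2v-1)}$ (also stated there and in \Cref{prop:JS-vs-SD}) to see that the leading term alone yields $\QJS(\rho_0,\rho_1) \geq t^2/2$. Alternatively, and perhaps cleaner to present, one can invoke \Cref{thm:QTD-vs-QJS}'s classical analogue reasoning directly: the function $g(x) = \sum_{v\geq 1} x^{2v}/(2v(2v-1))$ satisfies $g(x) \geq x^2/2$ on $[0,1]$ trivially. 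Next I would invoke \Cref{lemma:QTD-leq-QSD} to replace $t$ by (an upper bound for) $\QTD(\rho_0,\rho_1)$, being careful about the direction: we need $t \geq \QTD$, which is precisely what that lemma provides, and since $g$ is monotonically nondecreasing on $[0,1]$ we may substitute. Assembling, $\QJS(\rho_0,\rho_1) \geq g(t) \geq g(\QTD(\rho_0,\rho_1)) \geq \tfrac12 \QTD^2(\rho_0,\rho_1)$.

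There is essentially no obstacle here; the lemma is a routine corollary of two previously established, harder results. The only point requiring a moment's care is the monotonicity substitution: one must confirm that $g$ is nondecreasing on $[0,1]$ (immediate since all coefficients are nonnegative) before replacing $\td(\rho_0,\rho_1)$ by the smaller quantity $\QTD(\rho_0,\rho_1)$ inside $g$ — and if one prefers to avoid even this, one can bypass $g$ entirely and simply write $\QJS(\rho_0,\rho_1) \geq \tfrac12\td^2(\rho_0,\rho_1) \geq \tfrac12\QTD^2(\rho_0,\rho_1)$, using only the $v=1$ term and \Cref{lemma:QTD-leq-QSD}. I would present this shorter version. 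The companion upper bound $\QJS \leq \QTD$ in \Cref{thm:QTD-vs-QJS} is presumably handled in the following lemma (\Cref{lemma:QJS-leq-QTD}) and is not part of this statement.
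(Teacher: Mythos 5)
Your proposal is correct and follows essentially the same route as the paper: the paper's proof also plugs \Cref{lemma:QTD-leq-QSD} into \Cref{lemma:traceDist-leq-QJS}, obtaining $\QJS(\rho_0,\rho_1) \geq \sum_{v\geq 1} \td(\rho_0,\rho_1)^{2v}/(2v(2v-1)) \geq \sum_{v\geq 1} \QTD(\rho_0,\rho_1)^{2v}/(2v(2v-1)) \geq \tfrac12\QTD^2(\rho_0,\rho_1)$, which is the same chain of inequalities you give (merely substituting before, rather than after, truncating to the $v=1$ term). Your handling of the $\ln 2$ normalization and the monotonicity of the series is also consistent with the paper.
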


\begin{proof}
Combining \Cref{lemma:QTD-leq-QSD,lemma:traceDist-leq-QJS}, we obtain that: for any quantum states $\rho_0$ and $\rho_1$,
\[\QJS(\rho_0,\rho_1) \geq \sum_{v=1}^{\infty} \frac{\td(\rho_0,\rho_1)^{2v}}{2v(2v-1)} \geq \sum_{v=1}^{\infty} \frac{\QTD(\rho_0,\rho_1)^{2v}}{2v(2v-1)} \geq \frac{1}{2} \QTD^2(\rho_0,\rho_1),\]
where the last inequality uses the first-order approximation. This completes the proof. 
\end{proof}

Next, we present the upper bound of \QJS{} in terms of \QTD{}, as detailed in \Cref{lemma:QJS-leq-QTD}. The proof strategies is analogous to the proof of~\cite[Theorem 8]{TKRWV10}.

\begin{lemma}
\label{lemma:QJS-leq-QTD}
For any quantum states $\rho_0$ and $\rho_1$, 
$\QJS(\rho_0,\rho_1) \leq \QTD(\rho_0,\rho_1)$.
\end{lemma}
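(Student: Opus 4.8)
The plan is to reduce the claim to the corresponding inequality between single-pair divergences and then average over the midpoint state. Write $\bar\rho \coloneqq \tfrac{\rho_0+\rho_1}{2}$; since $\supp{\rho_z} \subseteq \supp{\bar\rho}$ for $z \in \binset$, all quantities below are finite. By \Cref{def:QJS} we have $\QJS(\rho_0,\rho_1) = \tfrac{1}{2}\bigl[\D(\rho_0\|\bar\rho) + \D(\rho_1\|\bar\rho)\bigr]$, and the relation $\QTD_{\alpha}(\rho_0,\rho_1)=\chi^2_{\alpha}(\rho_z\|\bar\rho)$ for $z\in\binset$ noted after \Cref{def:QTD}, specialized to $\alpha=1/2$ and combined with the symmetry $\rho_0-\bar\rho = -(\rho_1-\bar\rho)$, gives $\QTD(\rho_0,\rho_1) = \tfrac{1}{2}\bigl[\chi^2_{1/2}(\rho_0\|\bar\rho) + \chi^2_{1/2}(\rho_1\|\bar\rho)\bigr]$. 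Hence it suffices to prove the single-pair inequality $\D(\rho\|\sigma) \leq \chi^2_{1/2}(\rho\|\sigma)$ for all states $\rho,\sigma$ with $\supp{\rho}\subseteq\supp{\sigma}$, and then apply it with $(\rho,\sigma)=(\rho_0,\bar\rho)$ and $(\rho,\sigma)=(\rho_1,\bar\rho)$ and average.

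To establish $\D(\rho\|\sigma) \leq \chi^2_{1/2}(\rho\|\sigma)$ I would follow the argument of~\cite[Theorem~8]{TKRWV10}. Unfolding the definition, $\chi^2_{1/2}(\rho\|\sigma) = \Tr\bigl((\rho-\sigma)\sigma^{-1/2}(\rho-\sigma)\sigma^{-1/2}\bigr) = \Tr\bigl((\sigma^{-1/4}\rho\sigma^{-1/4})^2\bigr) - 1$, so the inequality is the noncommutative promotion of the scalar bound $\ln t \leq t-1$ — which, plugged into the ratios $\rho(x)/\sigma(x)$, is exactly what yields $\KL\leq\chi^2$ in the classical commuting case. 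The noncommutative step uses operator monotonicity and concavity properties of the logarithm as in~\cite{TKRWV10}; a short self-contained alternative is the chain $\D(\rho\|\sigma) = \widetilde{\D}_1(\rho\|\sigma) \leq \widetilde{\D}_2(\rho\|\sigma) = \ln\Tr\bigl((\sigma^{-1/4}\rho\sigma^{-1/4})^2\bigr) \leq \Tr\bigl((\sigma^{-1/4}\rho\sigma^{-1/4})^2\bigr) - 1 = \chi^2_{1/2}(\rho\|\sigma)$, where $\widetilde{\D}_{\alpha}$ is the sandwiched R\'enyi divergence of order $\alpha$, the first inequality is its monotonicity in the order, and the last step is again $\ln t \leq t-1$.

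The main obstacle is precisely this single-pair divergence inequality: classically $\KL\leq\chi^2$ is a one-line computation, but $\D(\rho\|\sigma)\leq\chi^2_{1/2}(\rho\|\sigma)$ genuinely requires operator-theoretic input, and it is crucial that the symmetrized $\chi^2$-divergence in play has parameter exactly $\alpha=1/2$ — the one underlying $\QTD$ by \Cref{def:QTD} — for the bound to hold with constant $1$; the larger members of the $\chi^2_{\alpha}$ family only give weaker (still valid) bounds via the $\alpha$-monotonicity behind \Cref{prop:measQTD-leq-QTD}, and this choice is what makes the upper bound in \Cref{thm:QTD-vs-QJS} as tight as $\QTD$ rather than some coarser quantity. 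Once the single-pair inequality is available, the averaging step is immediate and yields $\QJS(\rho_0,\rho_1)\leq\QTD(\rho_0,\rho_1)$, which, combined with \Cref{lemma:QTDsquare-leq-QJS}, completes \Cref{thm:QTD-vs-QJS}.
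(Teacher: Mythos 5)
Your proposal is correct, and its skeleton coincides with the paper's: both reduce the claim to the single-pair inequality $\D(\rho\|\sigma)\leq\chi^2_{1/2}(\rho\|\sigma)$ (with $\sigma=\bar\rho$ the midpoint, using that each of $\chi^2_{1/2}(\rho_0\|\bar\rho)$ and $\chi^2_{1/2}(\rho_1\|\bar\rho)$ equals $\QTD(\rho_0,\rho_1)$) and then average over $z\in\binset$. Where you diverge is in how that single-pair inequality is established. The paper invokes the Ruskai--Stillinger bound $\D(\rho\|\sigma)\leq\frac{1}{\gamma}\sbra{\Tr(\rho^{1+\gamma}\sigma^{-\gamma})-1}$ at $\gamma=1/2$, which lands on the ``Petz-type'' quantity $\Tr(\rho^{3/2}\sigma^{-1/2})$, and then needs an extra Cauchy--Schwarz-style positivity step, $\Tr\sbra[\big]{(\rho^{1/2}\sigma^{-1/2}\rho^{1/2}-\rho^{1/2})^{\dagger}(\rho^{1/2}\sigma^{-1/2}\rho^{1/2}-\rho^{1/2})}\geq 0$, to convert it into $\frac{1}{2}\sbra{\Tr(\rho\sigma^{-1/2}\rho\sigma^{-1/2})-1}=\frac{1}{2}\chi^2_{1/2}(\rho\|\sigma)$. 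Your self-contained alternative, $\D=\widetilde{\D}_1\leq\widetilde{\D}_2=\ln\Tr\rbra[\big]{(\sigma^{-1/4}\rho\sigma^{-1/4})^2}\leq\Tr\rbra[\big]{(\sigma^{-1/4}\rho\sigma^{-1/4})^2}-1=\chi^2_{1/2}(\rho\|\sigma)$, bypasses that conversion entirely because the sandwiched R\'enyi divergence of order $2$ is already built from exactly the operator appearing in $\chi^2_{1/2}$; the price is importing the (standard but nontrivial) monotonicity of $\widetilde{\D}_{\alpha}$ in $\alpha$ together with $\widetilde{\D}_1=\D$, whereas the paper's route rests on a single elementary trace inequality plus positivity. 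Both arguments are sound and give the same constant; your observation that the parameter must be $\alpha=1/2$ for the bound to hold with constant $1$ is also consistent with the paper's \Cref{remark:QTD-vs-traceDist}. One minor caveat: your first suggested route (``follow \cite[Theorem 8]{TKRWV10}'') is essentially what the paper does, so only the sandwiched-R\'enyi chain constitutes a genuinely different proof.
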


\begin{proof}
We begin by noting that an upper bound for the quantum relative entropy in~\cite{RS90}: 
\begin{equation}
    \label{eq:QRS-upper-bound}
    \D(\rho_0 \| \rho_1) \leq \frac{1}{\gamma} \Tr\left(\rho_0^{1+\gamma} \rho_1^{-\gamma}-\rho_0\right) 
    = \frac{1}{\gamma} \left[ \Tr\left(\rho_0^{1+\gamma}\rho_1^{-\gamma}\right) -1 \right] \text{ for } 0 < \gamma \leq 1. 
\end{equation}

Since the quantum Jensen-Shannon divergence is a symmetrized version of the quantum relative entropy, we deduce the following by setting $\gamma = 1/2$ in \Cref{eq:QRS-upper-bound}:
\begin{align*}
\QJS(\rho_0,\rho_1) &= \frac{1}{2} \sum_{z\in\binset} \D\!\left( \rho_z \bigg\| \frac{\rho_0+\rho_1}{2} \right)\\
&\leq \sum_{z\in\binset} \left[\Tr\!\left(\rho_z^{3/2} \left(\frac{\rho_0+\rho_1}{2}\right)^{-1/2}\right)-1\right]\\
&\leq \frac{1}{2}\sum_{z\in\binset} \left[ \Tr\!\left( \rho_z \left(\frac{\rho_0+\rho_1}{2}\right)^{-1/2} \rho_z \left(\frac{\rho_0+\rho_1}{2}\right)^{-1/2} \right) -1 \right]\\
&= \QTD(\rho_0,\rho_1),
\end{align*}
where the third line follows from $\Tr\!\left[ \left(\rho_z^{1/2}\rho^{-1/2}\rho_z^{1/2} -\rho_z^{1/2} \right)^{\dagger} \left(\rho_z^{1/2}\rho^{-1/2}\rho_z^{1/2} -\rho_z^{1/2} \right) \right] \geq 0$, since $\rho_z^{1/2}\rho^{-1/2}\rho_z^{1/2}$ is positive semi-definite and thus $\rho_z^{1/2}\rho^{-1/2}\rho_z^{1/2} -\rho_z^{1/2}$ is Hermitian. 
\end{proof}


\section{Complete problems for \QSZK{} on the quantum state testing}
\label{sec:new-QSZK-complete-problems}

In this section, we introduce three new \QSZK{} complete problems: the \textsc{Quantum Jensen-Shannon Divergence Problem} (\QJSP{}), the \textsc{Measured Quantum Triangular Discrimination Problem} (\measQTDP{}), and the \textsc{Quantum Triangular Discrimination Problem} (\QTDP{}). The promise problems \QJSP{} and \measQTDP{} establish the \textit{proper} quantum analogs of the classical problems investigated in~\cite{BDRV19} and exhibit how their behavior differs from the classical counterparts. 

\begin{theorem}[\QJSP{} is \QSZK{}-complete]
\label{thm:QJSP-is-QSZK-complete}
Let $\alpha(n)$ and $\beta(n)$ be efficiently computable functions such that $0 \leq \beta < \alpha \leq 1$, where $n$ denotes the number of qubits used by quantum states $\rho_0$ and $\rho_1$. Then, the following holds\emph{:}
\[ \text{For any } \alpha(n)-\beta(n) \geq 1/\poly(n), \QJSP[\alpha,\beta] \text{ is in } \QSZK. \]
Furthermore, $\QJSP[\alpha,\beta]$ is \QSZK{}-hard if $\alpha(n) \leq 1-2^{-n^{1/2-\epsilon}}$ and $\beta(n) \geq 2^{-n^{1/2-\epsilon}}$ for some constant $\epsilon\in(0,1/2)$ and sufficiently large $n$. 
\end{theorem}

\begin{theorem}[\measQTDP{} and \QTDP{} are \QSZK{}-complete]
\label{thm:measQTDP-is-QSZK-complete}
Let $\alpha(n)$ and $\beta(n)$ be efficiently computable functions such that $0 \leq \beta < \alpha \leq 1$, where $n$ denotes the number of qubits used by quantum states $\rho_0$ and $\rho_1$. Then, it holds that\emph{:}
\begin{enumerate}[label={\upshape(\arabic*)}, topsep=0.33em, itemsep=0.33em, parsep=0.33em]
    \item For any $\alpha(n)-\beta(n) \geq 1/O(\log{n})$, $\measQTDP[\alpha,\beta]$ is in \QSZK{}; 
    \item \label{thmitem:QTDP-in-QSZK} For any $\alpha^2(n)-\beta(n) \geq 1/O(\log{n})$, $\QTDP[\alpha,\beta]$ is in \QSZK{}.
\end{enumerate}
Furthermore, $\measQTDP[\alpha,\beta]$ and $\QTDP[\alpha,\beta]$ are \QSZK{}-hard if $\alpha(n) \leq 1-2^{-n^{1/2-\epsilon}}$ and $\beta(n) \geq 2^{-n^{1/2-\epsilon}}$ for some constant $\epsilon\in(0,1/2)$ and sufficiently large $n$. 
\end{theorem}

Furthermore, by using the reductions used to establish \Cref{thm:QJSP-is-QSZK-complete}, we achieve a simple \QSZK{}-hardness proof for the \textsc{Quantum Entropy Difference Problem} (\QEDP{}) introduced by Ben-Aroya, Schwartz, and Ta-Shma~\cite{BASTS10}: 

\begin{corollary}[Simple \QSZK{}-hardness for \QEDP{}]
    \label{corr:simple-QSZKhardness-QEDP}    
    There exists a constant $\epsilon\in(0,1/2)$ such that 
    $\QEDP[g(n)]$ is \QSZK{}-hard when $g(n) \leq \frac{\ln{2}}{2} \big( 1-2^{(n-3)^{1/2-\epsilon}+1} \big)$ for sufficiently large $n$. 
\end{corollary}

Subsequently, we proceed to demonstrate the proof of these theorems. 

\subsection{\QSZK{} containment using the quantum entropy extraction}

Along the line of~\cite{BDRV19}, we implicitly employ the quantum entropy extraction approach to polarize quantum distances~\cite{BASTS10}. This approach leads to the \QSZK{} containment of \QJSP{}, as stated in \Cref{lemma:QJSP-is-in-QSZK}, with a promise gap that is \textit{inverse-polynomial}. This containment is accomplished through establishing a reduction from \QJSP{} to \QEDP{}. For a concise overview of \QEDP{}, please refer to \Cref{subsubsec:QEDP}. 

\begin{lemma}[\QJSP{} is in \QSZK{}]
    \label{lemma:QJSP-is-in-QSZK} 
    For any $0 \leq \beta(n) < \alpha(n) \leq 1$ such that $\alpha(n)-\beta(n) \geq 1/p(n)$ where $p(n)$ is some polynomial of $n$, we have $\QJSP[\alpha,\beta]$ is in $\QSZK$.
\end{lemma}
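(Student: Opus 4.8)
The plan is to reduce $\QJSP[\alpha,\beta]$ to $\QEDP[g(n)]$ for a suitable inverse-polynomial threshold $g(n)$, and then invoke \Cref{thm:QEDP-is-SZK-complete} to conclude membership in \QSZK{}. The key idea is the mutual-information characterization of the Jensen-Shannon divergence: if $T$ is a uniform bit and $X$ is drawn from $\rho_T$, then $\QJS(\rho_0,\rho_1)$ equals the Holevo $\chi$ quantity of the ensemble $\{1/2,\rho_0;1/2,\rho_1\}$, which is itself an \emph{entropy difference} of classical-quantum states. Concretely, given circuits $Q_0,Q_1$ preparing $\rho_0,\rho_1$ on $k$ output qubits, I would build two new circuits. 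The first prepares the cq-state $\sigma_{TX} \coloneqq \tfrac12\ketbra{0}{0}_T\otimes\rho_0 + \tfrac12\ketbra{1}{1}_T\otimes\rho_1$ (flip a fair coin into register $T$ using a Hadamard, then run $Q_T$ controlled on $T$, outputting $T$ together with the output register). The second prepares $\tfrac12(\rho_0+\rho_1)$ in register $X$ together with a maximally mixed qubit in register $T$ (i.e. the product state $\tfrac12 I_T \otimes \tfrac{\rho_0+\rho_1}{2}$, which is easy: Hadamard on $T$, run $Q_T$ controlled on $T$, then trace out $T$ and re-append a fresh maximally mixed qubit — or equivalently discard $T$ altogether and pad). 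Using additivity $S(\sigma_{TX}) = \binH(1/2) + \tfrac12(S(\rho_0)+S(\rho_1))$ for a cq-state and $S\big(\tfrac12 I_T\otimes\tfrac{\rho_0+\rho_1}{2}\big) = 1 + S\big(\tfrac{\rho_0+\rho_1}{2}\big)$, the entropy \emph{difference} between the second and the first state is exactly $S\big(\tfrac{\rho_0+\rho_1}{2}\big) - \tfrac12(S(\rho_0)+S(\rho_1)) = \QJS(\rho_0,\rho_1)$ (in nats; dividing by $\ln 2$ gives $\QJS_2$).

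With this reduction in hand, a \textit{yes} instance of $\QJSP[\alpha,\beta]$ (where $\QJS_2(\rho_0,\rho_1)\geq\alpha$) maps to a pair of states whose entropy difference is at least $\alpha\ln 2$, and a \textit{no} instance (where $\QJS_2(\rho_0,\rho_1)\leq\beta$) maps to a pair with entropy difference at most $\beta\ln 2$. This is not yet in the $\QEDP$ format of \Cref{def:QEDP}, which demands a symmetric gap $S(\tau_0)-S(\tau_1)\geq g(n)$ versus $S(\tau_1)-S(\tau_0)\geq g(n)$. The standard fix is a shift-and-scale trick: since $\alpha(n)-\beta(n)\geq 1/p(n)$ and $\alpha,\beta$ are efficiently computable, one tensors in a fixed known reference state of appropriate entropy (e.g. an appropriate number of maximally mixed qubits, possibly together with a ``fractional'' qubit realized by a biased coin, approximated to exponential precision — or more cleanly, take $t$ fresh copies of the reduced pair and subtract off a known multiple of the single-qubit maximally mixed entropy) to re-center the threshold $\tfrac{\alpha+\beta}{2}\ln 2$ to zero and then amplify. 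After re-centering, \textit{yes} instances have entropy difference $\geq \tfrac{\alpha-\beta}{2}\ln 2 \geq \tfrac{\ln 2}{2p(n)}$ and \textit{no} instances $\leq -\tfrac{\ln 2}{2p(n)}$, i.e. an instance of $\QEDP[g(n)]$ with $g(n)=\tfrac{\ln 2}{2p(n)}$, which is inverse-polynomial. Then \Cref{thm:QEDP-is-SZK-complete} (its proof internally amplifies $\QEDP[g(n)]$ to $\QEDP[1/2]\in\QSZK$ by taking $\poly$ many copies, using additivity of von Neumann entropy) finishes the job.

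The step I expect to be the main obstacle is making the re-centering of the entropy threshold rigorous while staying inside the quantum-circuit model of \Cref{def:QSDP} and \Cref{def:QEDP}. The difficulty is that $\tfrac{\alpha+\beta}{2}\ln 2$ is generally \emph{irrational} and need not be a half-integer multiple of $\ln 2$, so one cannot literally tensor in maximally mixed qubits to cancel it; one must either (i) use circuits that prepare states of entropy $\approx$ an arbitrary efficiently computable target value up to exponentially small error (realizable by preparing a classical distribution close to a two-point distribution with the right binary entropy, which a $\poly$-size circuit can do to $2^{-\poly}$ accuracy), and then argue the $\QSZK$ machinery tolerates such exponentially small perturbations because the gap is only inverse-polynomial; or (ii) avoid shifting altogether by observing that $\QEDP$ as defined is invariant under swapping the roles of the two circuits, and instead reduce via a two-sided construction that directly produces the antisymmetric gap — e.g. compare $\sigma_{TX}$ against its ``twin'' with $\rho_0,\rho_1$ interchanged is useless since $\QJS$ is symmetric, so one really does need the reference state. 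I would go with route (i): carefully track that all error terms are $2^{-\poly(n)}$, hence negligible against the $1/\poly(n)$ promise gap, invoke continuity of von Neumann entropy (Fannes–Audenaert) to bound the entropy perturbation, and conclude. The remaining bookkeeping — verifying that the constructed circuits are polynomial size, that the purifications are prepared correctly by controlled applications of $Q_0,Q_1$, and that the additivity identities for cq-states hold — is routine.
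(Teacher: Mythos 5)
Your proposal is correct and matches the paper's proof: the paper also reduces to \QEDP{} by comparing the cq-state $\tfrac12\ketbra{0}{0}\otimes\rho_0+\tfrac12\ketbra{1}{1}\otimes\rho_1$ against a product state whose classical register is a biased bit with binary entropy $1-\tfrac12[\alpha(n)+\beta(n)]$ (your route (i)), so that the joint entropy theorem yields an entropy difference of exactly $\QJS_2(\rho_0,\rho_1)-\tfrac12[\alpha+\beta]$, giving a symmetric $\pm\tfrac{\ln 2}{2}(\alpha-\beta)$ gap that feeds into \Cref{thm:QEDP-is-SZK-complete}. The re-centering step you flag as the main obstacle is handled in the paper simply by a rotation gate $R_\theta$ preparing the biased coin, without further comment on gate-set precision.
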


With inequalities between the trace distance and $\QJS_2$, we further derive a \QSZK{} containment with an inverse-polynomial promise gap for \QSDP{} on some parameter regime: 
\begin{theorem}
    \label{thm:QSDP-in-QSZK-inversepoly}
    For any $0 \leq \sqrt{2\ln{2}}\beta(n) < \alpha^2(n) \leq 1$ such that $\alpha^2(n)-\sqrt{2\ln{2}}\beta(n) \geq 1/p(n)$ where $p(n)$ is some polynomial of $n$, we know that $\QSDP[\alpha^2,\sqrt{2\ln{2}}\beta]$ is in \QSZK{}. 
\end{theorem}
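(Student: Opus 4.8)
The plan is to reduce $\QSDP[\alpha^2,\sqrt{2\ln 2}\,\beta]$ to $\QJSP[\alpha',\beta']$ for suitable parameters with an inverse-polynomial promise gap, and then invoke \Cref{lemma:QJSP-is-in-QSZK}. The key observation is that \emph{no preprocessing of the circuits is needed at all}: the \emph{same} pair of circuits $(Q_0,Q_1)$ that witnesses a $\QSDP$ instance is simultaneously a $\QJSP$ instance, and we only need to track how the trace-distance promise translates into a $\QJS_2$ promise via the two-sided bounds recorded in \Cref{table:comparing-distances}, namely
\[
1-\binH\!\left(\tfrac{1-\td(\rho_0,\rho_1)}{2}\right) \;\leq\; \QJS_2(\rho_0,\rho_1) \;\leq\; \td(\rho_0,\rho_1),
\]
which are exactly \Cref{lemma:traceDist-leq-QJS} and \Cref{lemma:QJS-leq-traceDist} (the latter divided by $\ln 2$).

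\textbf{Step 1 (translate the \emph{yes} case).} If $(Q_0,Q_1)$ is a \emph{yes} instance of $\QSDP[\alpha^2,\sqrt{2\ln 2}\,\beta]$, then $\td(\rho_0,\rho_1)\geq \alpha^2(n)$. Using the lower bound $\QJS_2 \geq 1-\binH(\tfrac{1-\td}{2})$ together with the series expansion $1-\binH(\tfrac{1-x}{2}) = \sum_{v\geq 1} \tfrac{x^{2v}}{\ln 2\cdot 2v(2v-1)} \geq \tfrac{x^2}{2\ln 2}$ (the first-order term, exactly as used in the proof of \Cref{lemma:QTDsquare-leq-QJS}), I get $\QJS_2(\rho_0,\rho_1) \geq \tfrac{\alpha^4(n)}{2\ln 2}$. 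Hmm — I should double-check the intended bookkeeping here: the cleanest route is to set the $\QJSP$ threshold $\alpha'(n) \coloneqq \tfrac{\alpha^4(n)}{2\ln 2}$ for \emph{yes} instances. Actually, the statement's scaling ``$\alpha^2$ vs.\ $\sqrt{2\ln 2}\,\beta$'' strongly suggests that the reduction is applied after one round of the direct-product amplification of the trace distance (squaring $\td$ into $\td$ of tensor-squared states), so that the relevant $\QJS_2$ lower bound becomes $\tfrac{\alpha^2}{2\ln 2}$ rather than $\tfrac{\alpha^4}{2\ln 2}$; I would carry out that one amplification step first and then apply the first-order bound, yielding $\QJS_2 \geq \tfrac{\alpha^2(n)}{2\ln 2}$ in the \emph{yes} case.

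\textbf{Step 2 (translate the \emph{no} case).} If $\td(\rho_0,\rho_1)\leq \sqrt{2\ln 2}\,\beta(n)$, then by $\QJS_2 \leq \td$ I get $\QJS_2(\rho_0,\rho_1) \leq \sqrt{2\ln 2}\,\beta(n)$. Set $\beta'(n) \coloneqq \sqrt{2\ln 2}\,\beta(n)$.

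\textbf{Step 3 (verify the promise gap and apply the lemma).} It remains to check $\alpha'(n) - \beta'(n) \geq 1/\poly(n)$ under the hypothesis $\alpha^2(n) - \sqrt{2\ln 2}\,\beta(n) \geq 1/p(n)$. With $\alpha'(n) = \tfrac{\alpha^2(n)}{2\ln 2}$ this is \emph{not} immediate because of the $\tfrac{1}{2\ln 2}$ factor shrinking the \emph{yes} threshold relative to the \emph{no} threshold; the honest fix is to first amplify the original $\QSDP$ promise gap (via tensoring / the trace-distance polarization of \cite{Wat02}) so that $\alpha^2(n)$ is pushed close to $1$ and $\sqrt{2\ln 2}\,\beta(n)$ close to $0$ while keeping the gap inverse-polynomial, after which $\tfrac{\alpha^2(n)}{2\ln 2} - \sqrt{2\ln 2}\,\beta(n) \geq 1/\poly(n)$ holds comfortably. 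Once an inverse-polynomial $\QJSP$ gap is secured, \Cref{lemma:QJSP-is-in-QSZK} gives $\QJSP[\alpha',\beta'] \in \QSZK$, and since the reduction is just ``run the same circuits (possibly tensored),'' it is polynomial-time computable, so $\QSDP[\alpha^2,\sqrt{2\ln 2}\,\beta] \in \QSZK$.

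\textbf{Main obstacle.} The delicate point is entirely in the constant bookkeeping of Step 3: the factor $\tfrac{1}{2\ln 2}$ (from using only the first-order term of the $\QJS_2$ lower bound) and the factor $\sqrt{2\ln 2}$ (from the crude $\QJS_2 \leq \td$ upper bound) do \emph{not} combine to preserve the gap on the nose, which is precisely why the theorem states a weaker regime ($\alpha^2$ vs.\ $\sqrt{2\ln 2}\,\beta$, rather than $\alpha^2$ vs.\ $\beta$) than its classical analog in \Cref{thm:informal-SDP-in-SZK}\ref{thmitem:SDPinSZK-polarizing} — the paper flags exactly this loss as the cost of quantum triangular discrimination ``behaving differently.'' I would therefore spend most of the write-up making the chain of inequalities and the preliminary amplification explicit enough that the claimed parameter regime drops out cleanly, rather than worrying about the (routine) \QSZK{} membership itself.
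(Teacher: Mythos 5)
Your overall plan --- keep the circuits $(Q_0,Q_1)$ unchanged, translate the trace-distance promise into a $\QJS_2$ promise via $1-\binH\big(\tfrac{1-\td}{2}\big) \leq \QJS_2 \leq \td$, and then invoke \Cref{lemma:QJSP-is-in-QSZK} --- is the same as the paper's. The genuine gap is in your Step 3. The ``honest fix'' of first amplifying the \QSDP{} instance via the trace-distance polarization of \cite{Wat02} is circular: that polarization applies only when the \QSDP{} thresholds $(A,B)$ satisfy $A^2 > B$ (and, per the paper's own footnote, even its extension to an inverse-polynomial gap $A^2-B \geq 1/\poly$ is problematic), whereas here $A=\alpha^2$ and $B=\sqrt{2\ln 2}\,\beta$ are only promised to satisfy $A-B \geq 1/p(n)$; the regime $A^2 \leq B \leq A$ is precisely the one this theorem is meant to reach, and if the instance were polarizable the containment would already follow from \cite{Wat02} with no need for \QJSP{}. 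Likewise, ``one round of direct-product amplification squaring $\td$'' does not convert $\td \geq \alpha^2$ into $\td \geq \alpha$ (tensoring multiplies fidelities, not trace distances), and any tensoring also inflates the \emph{no}-instance trace distance, which your Step 2 silently assumes remains $\sqrt{2\ln 2}\,\beta$. So the inverse-polynomial \QJSP{} gap is never actually established in your write-up.

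The paper closes the bookkeeping differently and without any amplification: it pairs the \QSDP{} thresholds $(\alpha^2,\sqrt{2\ln 2}\,\beta)$ with the \QJSP{} thresholds $(\alpha^2,\,2\ln 2\cdot\beta^2)$, so that the \QJSP{} promise gap is $\alpha^2 - 2\ln 2\cdot\beta^2 = \alpha^2 - (\sqrt{2\ln 2}\,\beta)^2 \geq \alpha^2 - \sqrt{2\ln 2}\,\beta \geq 1/p(n)$, using only $x^2 \leq x$ on $[0,1]$. That parameter pairing --- not a preliminary polarization --- is the missing idea. (A caveat: the paper's two displayed implications run in the direction ``$\QJS_2$ promise $\Rightarrow$ $\td$ promise,'' which is what one needs to transfer \emph{hardness} from \QJSP{} to \QSDP{}; for the containment the reverse implications are required, and on the \emph{yes} side $\td \geq \alpha^2$ only yields $\QJS_2 \geq \tfrac{\alpha^4}{2\ln 2}$, exactly as you computed, not $\QJS_2 \geq \alpha^2$. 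So your instinct that the constants ``do not combine on the nose'' is correct and the step deserves more scrutiny than either write-up gives it --- but the resolution the paper intends is the threshold pairing above, and your proposed fix does not supply a valid substitute.)
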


\begin{proof}
The reduction from \QSDP{} to \QJSP{} directly follows from the inequalities on \QJS{}:
    For \textit{yes} instances, $\QJS_2(\rho_0,\rho_1)\geq \alpha^2$ implies that $\td(\rho_0,\rho_1)\geq \alpha^2$ due to \Cref{lemma:QJS-leq-traceDist}; 
    whereas for \textit{no} instances, $\QJS_2(\rho_0,\rho_1) \leq 2\ln{2}\cdot\beta^2$ implies that $2\ln{2}\cdot\beta^2 \geq \sum_{v=1}^{\infty}\frac{\td(\rho_0,\rho_1)^{2v}}{v(2v-1)} \geq \td(\rho_0,\rho_1)^2$ as desired, where the last inequality utilizes the first-order approximation.
\end{proof}

It is noteworthy that $\TDP[\alpha,\beta]$ is in \SZK{} when $\alpha(n)-\beta(n)$ is at least some inverse polynomial~\cite{BDRV19}. However, we are unlikely to have a similar reduction from \QTDP to \QJSP{} since these distances behave differently from their classical counterpart: 

\begin{remark}[An obstacle to a reduction from \QTDP to \QJSP{}]
The \SZK{} containment of \TDP{} follows from a tailor-made (Karp) reduction from \TDP{} to \JSP{}. The key observation is that $\TD(p_0,p_1)$ is a \textit{constant multiplicative-error approximation} of $\JS_2(p_0,p_1)$, and specifically, the lower bound $\TD(p_0,p_1)/2$ is exactly the first-order approximation of the series used in the upper bound $\ln{2}\cdot\TD(p_0,p_1)$. 
Utilizing this fact, \cite[Lemma 4.5]{BDRV19} establishes that $\lambda^2\TD(p_0,p_1)$ is a $1/\poly(n)$-\textit{additive error approximation} of $\JS_2(q_0,q_1)$, where $\lambda$ is some specific $1/\poly(n)$ factor and $q_0$ (also $q_1$) is a convex combination of $p_0$ and $p_1$ parameterized by $\lambda$.
However, $\QTD(\rho_0,\rho_1)$ is \textit{not} a constant multiplicative error approximation of $\QJS_2(\rho_0,\rho_1)$.\footnote{Numerical simulations suggest that the tight bound is $\QTD^2(\rho_0,\rho_1) \leq \QJS_2(\rho_0,\rho_1) \leq \QTD(\rho_0,\rho_1)$, while we only managed to prove a slightly weaker bound $\frac{1}{2}\QTD^2(\rho_0,\rho_1) \leq \QJS_2(\rho_0,\rho_1) \leq \QTD(\rho_0,\rho_1)$ in \Cref{thm:QTD-vs-QJS}.\label{footnote:QTD-vs-QJS-tight-bound}} 
\end{remark}

\subsection{\QJSP{} is in \QSZK{}} 
For any given \QJSP{} instance and its corresponding states $\rho_0$ and $\rho_1$, the \QSZK{} containment of \QJSP{} essentially follows from an equality concerning $\S(\rho'_0)-\S(\rho'_1)$ and $\QJS(\rho_0,\rho_1)$, where the preparation of $\rho'_0$ and $\rho'_1$ requires additional gadgets using $\rho_0$ and $\rho_1$ as building blocks.
This approach resembles the classical proof from~\cite[Proposition 4.1 and Lemma 4.2]{BDRV19}. 

However, several modifications are required due to discrepancies between classical and quantum probabilities. In particular, the classical proof relies on a probability that conditions on distributions are supposed to be distinguished, whereas its quantum counterpart -- quantum conditional probability -- is not well-defined in general. To address the challenge, we circumvent this issue by instead considering a conditional entropy of classical-quantum states conditioned on a classical register.

\begin{proof}[Proof of \Cref{lemma:QJSP-is-in-QSZK}]
    Since $\QEDP[g]$ is in \QSZK{} for any $g(n)$ that is at least $1/\poly(n)$, as stated in \Cref{thm:QEDP-is-SZK-complete}, the proof is primarily a reduction from $\QJSP[\alpha,\beta]$ to $\QEDP[g]$, where $0 \leq \beta(n) < \alpha(n) \leq 1$ and $\alpha(n)-\beta(n)$ is at least an inverse polynomial of $n$. We will specify the function $g$ later. 
 
    Let $Q_0$ and $Q_1$ be the given quantum circuits acting on $m(n)$ all-zero qubits, which produce $n$-qubits quantum states $\rho_0$ and $\rho_1$, respectively, by tracing out the non-output qubits. 
    
	Now, consider a classical-quantum mixed state on a classical register $\sfB$ and a quantum register $\sfY$, denoted as $\rho'_1 = \frac{1}{2}\ket{0}\bra{0}\otimes \rho_0 + \frac{1}{2} \ket{1}\bra{1}\otimes \rho_1$. 
    We apply our reduction to produce quantum circuits $Q'_0$ and $Q'_1$, which prepare classical-quantum mixed states $\rho'_0$ and $\rho'_1$, respectively.
    In particular, $\rho'_0=(p_0\ket{0}\bra{0}+p_1\ket{1}\bra{1})\otimes(\frac{1}{2}\rho_0+\frac{1}{2}\rho_1)$, and $\sfB'=(p_0,p_1)$ is an independent random bit with $\H(\sfB')=1-\frac{1}{2}[\alpha(n)+\beta(n)]$. 
    
    By using a rotation gate  $R_{\theta}$ such that $R_{\theta} \ket{0} = \sqrt{p_0} \ket{0} + \sqrt{p_1} \ket{1}$, we provide the quantum circuit description of $Q'_0$ and $Q'_1$ in \Cref{fig:QJSP-in-QSZK-Qzero} and \Cref{fig:QJSP-in-QSZK-Qone}, respectively. Here, $\sfA$ and $\sfA'$ are ancillary single-qubit registers, and quantum registers $\sfY$ and $\sfZ$ collectively act on $m(n)$ qubits. 
    \begin{figure}[ht!]
    \centering
    \begin{minipage}[t]{0.48\textwidth}
    \centering
    \begin{quantikz}[column sep = 0.6em, row sep = 0em, inner sep = 0em]
	   \lstick{$\ket{0}_{\sfB'}$} & \gate{H} & \ctrl{1} & \gate{R_{\theta}} & \qw & \qw & \qw\\
	   \lstick{$\ket{0}_{\sfA'}$} & \qw & \targ{} & \qw & \qw & \qw & \trash{\text{trace}} \\
	   \lstick{$\ket{0}_{\sfA}$} & \gate{H} & \ctrl{1} & \gate{X} & \ctrl{1} & \gate{X} & \trash{\text{trace}}\\
	   \lstick{$\ket{\bar{0}}_{\sfY}$} & \qwbundle[]{} & \gate[2]{Q_1} & \qwbundle[]{} & \gate[2]{Q_0} & \qwbundle[]{} & \qw\\
	   \lstick{$\ket{\bar{0}}_{\sfZ}$} & \qwbundle[]{} & & \qwbundle[]{} & & \qwbundle[]{} & \trash{\text{trace}}
    \end{quantikz}
    \caption{Quantum circuit $Q'_0$. }
    \label{fig:QJSP-in-QSZK-Qzero}
    \end{minipage}
    \begin{minipage}[t]{0.48\textwidth}
    \centering
    \begin{quantikz}[column sep = 0.5em, row sep = 0.66em]
	   \lstick{$\ket{0}_{\sfB}$} & \gate{H} & \ctrl{1} & \ctrl{2} & \gate{X} & \ctrl{2} & \gate{X} & \qw\\
	   \lstick{$\ket{0}_{\sfA}$} & \qw & \targ{} & \qw & \qw & \qw & \qw & \trash{\text{trace}} \\
	   \lstick{$\ket{\bar{0}}_{\sfY}$} & \qw & \qwbundle[]{} & \gate[2]{Q_1} & \qwbundle[]{} & \gate[2]{Q_0} & \qwbundle[]{} & \qw\\
	   \lstick{$\ket{\bar{0}}_{\sfZ}$} & \qw & \qwbundle[]{} & & \qwbundle[]{} & & \qwbundle[]{} & \trash{\text{trace}}
    \end{quantikz}
    \caption{Quantum circuit $Q'_1$. }
    \label{fig:QJSP-in-QSZK-Qone}
    \end{minipage}
    \end{figure}	
    
    We then obtain the following: 
	\begin{equation}
	\label{eq:QJS-to-QED}
	\begin{aligned}
		\S_2(\rho'_0)-\S_2(\rho'_1) &= \S_2(\sfB',\sfY)_{\rho'_0} - \S_2(\sfB,\sfY)_{\rho'_1}\\
		&= \rbra*{\H(\sfB')+\S_2(\sfY|\sfB')_{\rho'_0}}-\rbra*{\H(\sfB)+\S_2(\sfY|\sfB)_{\rho'_1}}\\
		&= \S_2(\sfY)_{\rho'_0} - \S_2(\sfY|\sfB)_{\rho'_1} + \H(\sfB') - \H(\sfB)\\
		&= \S_2(\sfY)_{\rho'_0}-\S_2(\sfY|\sfB)_{\rho'_1} -\frac{\alpha(n)+\beta(n)}{2}\\
		&= \S_2\rbra*{ \frac{\rho_0+\rho_1}{2} } - \frac{\S_2(\rho_0)+\S_2(\rho_1)}{2} -\frac{\alpha(n)+\beta(n)}{2}\\
		&= \QJS_2(\rho_0,\rho_1) - \frac{\alpha(n)+\beta(n)}{2}, 
	\end{aligned}
	\end{equation}
	where the second line is due to the definition of quantum conditional entropy and both $\sfB$ and $\sfB'$ are classical registers, the third line owes to the fact that $\sfB'$ is an independent random bit, the fifth line follows from the joint entropy theorem, such as~\cite[Theorem 11.8(5)]{NC10}.
 	
	Plugging \Cref{eq:QJS-to-QED} into the promise of $\QJSP[\alpha,\beta]$,  we obtain the following and choose $g(n')=\frac{\ln{2}}{2}(\alpha(n)-\beta(n))$: 
    \begin{itemize}[topsep=0.33em, itemsep=0.33em, parsep=0.33em]
        \item If $\QJS_2(\rho_0,\rho_1) \geq \alpha(n)$, then $\S(\rho'_0)-\S(\rho'_1) \geq \frac{\ln{2}}{2}(\alpha(n)-\beta(n)) = g(n')$; 
		\item If $\QJS_2(\rho_0,\rho_1) \leq \beta(n)$, then $\S(\rho'_0)-\S(\rho'_1) \leq -\frac{\ln{2}}{2}(\alpha(n)-\beta(n)) = -g(n')$. 
    \end{itemize}

    By inspecting the description of quantum circuits $Q'_0$ and $Q'_1$, we know that the number of output qubit is $n' \coloneqq n+1$ and these circuits act on at most $m'(n') = m(n)+3$ qubits. Therefore, $\QJSP[\alpha,\beta]$ is Karp reducible to $\QEDP[g(n)]$ by mapping $(Q_0,Q_1)$ to $(Q'_0,Q'_1)$. 
\end{proof}

\subsection{\QSZK{} containments via the polarization lemma}
\label{subsec:polarization-lemma-QTD}

We introduce new polarization lemmas for the measured quantum triangular discrimination ($\measQTD$) and the quantum triangular discrimination (\QTD{}), as stated in \Cref{lemma:measQTD-polarization,lemma:QTD-polarization}, respectively. These techniques can also be used to establish \QSZK{} containments of both \measQTDP{} and \QTDP{}. A notable feature of this approach is that the polarization lemma for $\measQTD$ requires only the condition $\alpha > \beta$, in contrast to the parameter requirements for the trace distance and \QTD{}, which demand the stronger condition $\alpha^2 > \beta$. 

\begin{lemma}[A polarization lemma for $\measQTD$]
\label{lemma:measQTD-polarization}
Given quantum circuits $Q_0$ and $Q_1$ that prepare quantum states $\rho_0$ and $\rho_1$, respectively, there exists a deterministic procedure that takes as input $(Q_0,Q_1,\alpha,\beta,k)$, where $\alpha > \beta$, and outputs quantum circuits $\tilde{Q}_0$ and $\tilde{Q}_1$, which prepare quantum states $\tilde{\rho}_0$ and $\tilde{\rho}_1$, respectively. The resulting states satisfy\emph{:} 
\begin{align*}
    \measQTD(\rho_0,\rho_1) \geq \alpha  \quad &\Longrightarrow \quad  \measQTD(\trho_0,\trho_1) \geq 1-2^{-k},\\
    \measQTD(\rho_0,\rho_1) \leq \beta  \quad &\Longrightarrow \quad  \measQTD(\trho_0,\trho_1) \leq 2^{-k}. 
\end{align*}
Here, the states $\trho_0$ and $\trho_1$ are defined over $\widetilde{O}\Big( n k^{O\big(\frac{\beta \ln(2/\alpha)}{\alpha-\beta}\big)} \Big)$ qubits. Furthermore, when $k \leq O(1)$ or $\alpha - \beta \geq \Omega(1)$, the time complexity of the procedure is polynomial in the size of $Q_0$ and $Q_1$, $k$, and $\exp\big(\frac{\beta \log(1/\alpha)}{\alpha-\beta}\big)$. 
\end{lemma}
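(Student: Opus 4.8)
The plan is to adapt the iterative polarization of Sahai--Vadhan~\cite{SV97} and Berman--Degwekar--Rothblum--Vasudevan~\cite{BDRV19}, alternating a tensoring step with a parity step, but tailored to the measured quantum triangular discrimination using the inequality chains $\tfrac12\Bsquare \leq \measQTD \leq \QTD \leq \td$ and $\tfrac12\Bsquare \leq \measQTD \leq \Bsquare$ (\Cref{thm:QTD-vs-td,thm:QTD-vs-B}), together with the multiplicativity $\F(\rho_0^{\otimes m},\rho_1^{\otimes m})=\F(\rho_0,\rho_1)^m$. Crucially, the procedure maintains two invariants along the way: a lower bound on $\measQTD$ for \emph{yes} instances and an upper bound on the (coherent) $\QTD$ for \emph{no} instances; tracking $\QTD$ rather than $\measQTD$ on the \emph{no} side is what makes the parity step loss-free.

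\paragraph{The tensoring (``AND'') step.} From circuits for $\rho_0,\rho_1$ produce circuits for $\rho_0^{\otimes m},\rho_1^{\otimes m}$. If $\measQTD(\rho_0,\rho_1)\geq\alpha$ then $\F(\rho_0,\rho_1)\leq 1-\alpha/2$ (since $\alpha\leq\measQTD\leq\Bsquare=2(1-\F)$), hence $\measQTD(\rho_0^{\otimes m},\rho_1^{\otimes m})\geq\tfrac12\Bsquare(\rho_0^{\otimes m},\rho_1^{\otimes m})=1-\F(\rho_0,\rho_1)^m\geq 1-e^{-m\alpha/2}$; this \emph{linear}-in-$\alpha$ amplification rate, in place of the rate $m\alpha^2$ forced by the trace distance, is exactly \Cref{lemma:measQTD-yes-error-reduction} and is the source of the relaxed hypothesis $\alpha>\beta$. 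If $\measQTD(\rho_0,\rho_1)\leq\beta$ then $\F(\rho_0,\rho_1)\geq 1-\beta$, so $\Bsquare(\rho_0^{\otimes m},\rho_1^{\otimes m})=2(1-\F^m)\leq 2m\beta$, which gives both $\measQTD(\rho_0^{\otimes m},\rho_1^{\otimes m})\leq 2m\beta$ and $\QTD(\rho_0^{\otimes m},\rho_1^{\otimes m})\leq\B(\rho_0^{\otimes m},\rho_1^{\otimes m})\leq\sqrt{2m\beta}$.

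\paragraph{The parity (``XOR'') step.} From circuits for $\rho_0,\rho_1$ and a parameter $\ell$, prepare in an ancilla register the uniform mixture over even-parity strings $s\in\binset^\ell$ (for $\tilde Q_0$), resp.\ odd-parity strings (for $\tilde Q_1$), run $Q_{s_i}$ controlled on the $i$-th ancilla bit into the $i$-th output block, and trace out the ancillas and all non-output qubits. Writing $\rho_+=\tfrac{\rho_0+\rho_1}{2}$, one has $\tilde\rho_0-\tilde\rho_1=2^{-(\ell-1)}(\rho_0-\rho_1)^{\otimes\ell}$ and $\tfrac{\tilde\rho_0+\tilde\rho_1}{2}=\rho_+^{\otimes\ell}$, so a short computation yields the \emph{exact} tensorization $\QTD(\tilde\rho_0,\tilde\rho_1)=\QTD(\rho_0,\rho_1)^\ell$ -- the quantum counterpart of ``$\TD$ has an XOR lemma by definition''. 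Moreover, measuring each block with the $\measQTD$-optimal POVM of a single copy realizes the classical parity construction on the induced pair, for which $\TD$ tensorizes exactly; combining with $\measQTD\leq\QTD$ (\Cref{prop:measQTD-leq-QTD}) gives $\measQTD(\rho_0,\rho_1)^\ell\leq\measQTD(\tilde\rho_0,\tilde\rho_1)\leq\QTD(\tilde\rho_0,\tilde\rho_1)=\QTD(\rho_0,\rho_1)^\ell$. Hence a \emph{yes} pair with $\measQTD\geq 1-\epsilon$ becomes one with $\measQTD\geq 1-\ell\epsilon$, while a \emph{no} pair with $\QTD\leq\delta$ becomes one with $\QTD\leq\delta^\ell$ (so also $\measQTD\leq\delta^\ell$).

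\paragraph{Interleaving, parameters, and the main obstacle.} One round performs a tensoring step (pushing \emph{yes} towards $1$ at rate $\approx\alpha$ and keeping $\QTD_{\mathrm{no}}\leq\sqrt{2m\beta}$) followed by a parity step (driving $\QTD_{\mathrm{no}}$ down to $\leq(2m\beta)^{\ell/2}$ at the cost of only a factor $\ell$ on the \emph{yes} side). Choosing the copy counts $m$ and $\ell$ in each round appropriately -- roughly $\ell=\Theta(k/\ln(\alpha/\beta))$ with $m$ polynomial in $k$ -- and iterating $T=O\!\big(\tfrac{\beta\ln(2/\alpha)}{\alpha-\beta}\big)$ times suffices to reach $\measQTD\geq 1-2^{-k}$ on \emph{yes} instances and $\measQTD\leq\QTD\leq 2^{-k}$ on \emph{no} instances, using $\ln(\alpha/\beta)=\Omega\!\big(\min\{1,(\alpha-\beta)/\beta\}\big)$. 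Since each round multiplies the qubit count by $\poly(k)$ and the circuit size by $\poly(|Q_0|,|Q_1|,k)$, the output circuits act on $\widetilde{O}\!\big(n\,k^{O(T)}\big)$ qubits and the procedure is polynomial-time once $T=O(1)$, i.e.\ once $k\leq O(1)$ or $\alpha-\beta\geq\Omega(1)$. The main obstacle is precisely this bookkeeping: the loss-free tensorization holds only for the coherent $\QTD$, but the conclusion concerns $\measQTD$, which the parity step can bound only \emph{through} $\QTD$; and while tensoring inflates $\measQTD$ only linearly (via $\measQTD\leq\Bsquare$), it inflates $\QTD$ like $\sqrt{m}$ (only $\QTD\leq\B$ is available), so one must carry the $\QTD$ invariant on \emph{no} instances and balance $m$ against $\ell$ in every round. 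This $\sqrt{\cdot}$-versus-linear mismatch is exactly what degrades the promise gap from the classical $1/\poly(n)$ enjoyed by $\TDP$ to $1/O(\log n)$, and verifying that the loss accumulated over all $T$ rounds still fits inside the stated qubit and running-time bounds is the technical heart of the argument.
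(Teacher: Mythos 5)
Your high-level ingredients match the paper's: an XOR (parity) step, a direct-product (tensoring) step controlled through the Bures sandwich $\tfrac12\Bsquare\leq\measQTD\leq\Bsquare$, and the observation that the linear-in-$\alpha$ amplification rate is what relaxes the hypothesis to $\alpha>\beta$. But there is a genuine gap in your \emph{no}-side bookkeeping, and it is fatal to the stated parameter regime. You upper-bound $\measQTD$ after the parity step only \emph{through} $\QTD$ (via $\measQTD\leq\QTD$ and the exact tensorization $\QTD(\tilde\rho_0,\tilde\rho_1)=\QTD(\rho_0,\rho_1)^{\ell}$), so you must carry a $\QTD$ invariant on \emph{no} instances. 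Converting a $\measQTD$ (equivalently $\Bsquare$) bound into a $\QTD$ bound costs a square root: the only available inequality is $\QTD\leq\B$, and there is no linear bound of $\QTD$ by $\measQTD$ (cf.\ \Cref{remark:measQTD-vs-QTD}, where $\QTD$ is as large as $\td$ while $\measQTD\leq\Bsquare$). Concretely, your tensoring step yields $\QTD(\rho_0^{\otimes m},\rho_1^{\otimes m})\leq\sqrt{2m\beta^{\ell}}$ (after an initial XOR with $\ell$), which forces $m\lesssim\beta^{-\ell}$, while the \emph{yes} side needs $m\gtrsim k/\alpha^{\ell}$; balancing these requires $(\alpha/\sqrt{\beta})^{\ell}\gtrsim k$, i.e.\ effectively $\alpha^2>\beta$ --- the very regime the lemma is supposed to escape. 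The paper's proof avoids this entirely by first proving (\Cref{lemma:measQTD-no-error-reduction}) that $\measQTD$ \emph{itself} is exactly multiplicative under the XOR construction, $\measQTD(\tilde\rho_0,\tilde\rho_1)=\measQTD(\rho_0,\rho_1)^{\ell}$, so the \emph{no}-side invariant stays on $\measQTD$ throughout and the upper bound $2m\beta^{\ell}\leq 1/2$ from the tensoring step (which is linear, via $\measQTD\leq\Bsquare$) passes to $2^{-k}$ in the final XOR with no square-root loss. If you do not believe the exact multiplicativity of $\measQTD$ over entangled POVMs, you at minimum need the one-sided bound $\measQTD(\tilde\rho_0,\tilde\rho_1)\leq\measQTD(\rho_0,\rho_1)^{\ell}$; your proposal supplies neither.

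A secondary structural issue: you open each round with a tensoring step and iterate $T$ rounds, whereas the paper (following the Sahai--Vadhan template) runs exactly three stages, XOR($\ell$) with $\ell=\lceil\log_{\lambda}8k\rceil$ and $\lambda=\min\{\alpha/\beta,2\}$, then tensor($m$) with $m=\lambda^{\ell}/(4\alpha^{\ell})\leq 1/(4\beta^{\ell})$, then XOR($k$). The initial XOR is essential: it boosts the ratio $\alpha/\beta$ to $\lambda^{\ell}\geq 8k$ before any tensoring, whereas starting with a tensoring step when $\alpha/\beta$ is close to $1$ forces $m=O(1)$ and accomplishes nothing on the \emph{yes} side. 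Finally, your closing attribution of the $1/O(\log n)$ gap to the ``$\sqrt{\cdot}$-versus-linear mismatch'' is off: that gap comes from requiring the blow-up factor $\exp\bigl(O\bigl(\tfrac{\beta\ln k}{\alpha-\beta}\ln(2/\alpha)\bigr)\bigr)$ in the circuit size to remain polynomial, exactly as in the classical polarization of \cite{BDRV19}; the mismatch you describe is instead what blocks the quantum analog of the reduction from \TDP{} to \JSP{}.
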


\begin{lemma}[A polarization lemma for \QTD{}]
\label{lemma:QTD-polarization}
Given quantum circuits $Q_0$ and $Q_1$ that prepare quantum states $\rho_0$ and $\rho_1$, respectively, there exists a deterministic procedure that takes as input $(Q_0,Q_1,\alpha,\beta,k)$, where $\alpha^2 > \beta$, and outputs quantum circuits $\tilde{Q}_0$ and $\tilde{Q}_1$, which prepare quantum states $\tilde{\rho}_0$ and $\tilde{\rho}_1$, respectively. The resulting states satisfy\emph{:}  
\begin{align*}
    \QTD(\rho_0,\rho_1) \geq \alpha  \quad &\Longrightarrow \quad  \QTD(\trho_0,\trho_1) \geq 1-2^{-k},\\
    \QTD(\rho_0,\rho_1) \leq \beta  \quad &\Longrightarrow \quad  \QTD(\trho_0,\trho_1) \leq 2^{-k}. 
\end{align*}
Here, the states $\trho_0$ and $\trho_1$ are defined over $\widetilde{O}\Big( n k^{O\big(\frac{\beta \ln(2/\alpha^2)}{\alpha^2-\beta}\big)} \Big)$ qubits. Furthermore, when $k \leq O(1)$ or $\alpha^2 - \beta \geq \Omega(1)$, the time complexity of the procedure is polynomial in the size of $Q_0$ and $Q_1$, $k$, and $\exp\big(\frac{\beta \log(1/\alpha^2)}{\alpha^2-\beta}\big)$.  
\end{lemma}

Analogous to the \QSZK{} containment of \QSDP{}, we can establish \QSZK{} containments of \measQTDP{} and \QTDP{} by leveraging their respective polarization lemmas: 
\begin{lemma}[\measQTDP{} and \QTDP{} are in \QSZK{}]
\label{lemma:measQTDP-in-QSZK}
Let $\alpha(n)$ and $\beta(n)$ be efficiently computable functions satisfying $0 \leq \beta < \alpha \leq 1$. Then, the following holds\emph{:}
\begin{enumerate}[label={\upshape(\roman*)}, topsep=0.33em, itemsep=0.33em, parsep=0.33em]
    \item For any $\alpha(n) - \beta(n) \geq 1/O(\log{n})$,  $\measQTDP[\alpha,\beta]$ is in \QSZK{}.
    \item For any $\alpha^2(n) - \beta(n) \geq 1/O(\log{n})$, $\QTDP[\alpha,\beta]$ is in \QSZK{}.
\end{enumerate}
\end{lemma}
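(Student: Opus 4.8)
The plan is to reduce both problems, in polynomial time, to \QSDP{} in the \emph{constant} polarizing regime, which is in \QSZK{} by Watrous's polarization lemma for the trace distance~\cite{Wat02}. The two ingredients are the polarization lemmas \Cref{lemma:measQTD-polarization,lemma:QTD-polarization} (which we take as given) together with the sandwich inequalities $\td^2\le\measQTD\le\QTD\le\td$ of \Cref{thm:QTD-vs-td}.

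For \measQTDP{}, I would fix a small absolute constant, say $k=3$, and run the procedure of \Cref{lemma:measQTD-polarization} on $(Q_0,Q_1,\alpha,\beta,k)$, obtaining circuits $\tilde{Q}_0,\tilde{Q}_1$ that prepare states $\trho_0,\trho_1$ with $\measQTD(\trho_0,\trho_1)\ge 1-2^{-k}$ in the \textit{yes} case and $\measQTD(\trho_0,\trho_1)\le 2^{-k}$ in the \textit{no} case. This step is efficient: since $0\le\beta<\alpha\le 1$, the map $x\mapsto x\ln(2/x)$ is bounded by $2/e$ on $(0,1]$, so $\beta\ln(2/\alpha)\le\alpha\ln(2/\alpha)=O(1)$ and hence $\beta\ln(2/\alpha)/(\alpha-\beta)=O(\log n)$ under the hypothesis $\alpha(n)-\beta(n)\ge 1/O(\log n)$; as $k=O(1)$, both the qubit count $\widetilde{O}\big(nk^{O(\beta\ln(2/\alpha)/(\alpha-\beta))}\big)$ and the running time (polynomial in $\exp(\beta\log(1/\alpha)/(\alpha-\beta))=\poly(n)$ when $k\le O(1)$) are polynomial in $n$. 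Then, by $\td^2\le\measQTD\le\td$ from \Cref{thm:QTD-vs-td}, the pair $(\tilde{Q}_0,\tilde{Q}_1)$ is a \QSDP{} instance with $\td(\trho_0,\trho_1)\ge 1-2^{-k}$ (\textit{yes}) and $\td(\trho_0,\trho_1)\le 2^{-k/2}$ (\textit{no}); with $k=3$ we have $(1-2^{-k})^2=49/64>2^{-1/2}>2^{-k/2}$, so this instance lies in the constant polarizing regime and is in \QSZK{}~\cite{Wat02}. Since the whole construction is a polynomial-time many-one reduction and \QSZK{} is closed under such reductions, $\measQTDP[\alpha,\beta]\in\QSZK{}$.

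The argument for \QTDP{} is identical except that \Cref{lemma:QTD-polarization} requires $\alpha^2>\beta$, which is exactly why the hypothesis here is $\alpha^2(n)-\beta(n)\ge 1/O(\log n)$, and the analogous bound $\beta\ln(2/\alpha^2)\le 2/e$ keeps the procedure polynomial-time and polynomial-qubit for $k=O(1)$. The output circuits give $\QTD(\trho_0,\trho_1)\ge 1-2^{-k}$ or $\QTD(\trho_0,\trho_1)\le 2^{-k}$, and the full chain $\td^2\le\measQTD\le\QTD\le\td$ converts this into a \QSDP{} instance with $\td(\trho_0,\trho_1)\ge 1-2^{-k}$ (from $\td\ge\QTD$) or $\td(\trho_0,\trho_1)\le 2^{-k/2}$ (from $\td^2\le\QTD$), again in the constant polarizing regime; hence $\QTDP[\alpha,\beta]\in\QSZK{}$ as well.

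The substantive content lives entirely in the two polarization lemmas, which we only invoke here; given those, the single point requiring care is the parameter bookkeeping — that a \emph{constant} $k$ already suffices (so a merely inverse-logarithmically small gap does not blow the reduction up super-polynomially) and that the post-polarization trace-distance gap, $1-2^{-k}$ versus $2^{-k/2}$, still obeys $(\alpha')^2>\beta'$. The asymmetry between the two claims — $\alpha-\beta\ge 1/O(\log n)$ for \measQTDP{} versus $\alpha^2-\beta\ge 1/O(\log n)$ for \QTDP{} — is inherited directly from the $\alpha>\beta$ versus $\alpha^2>\beta$ requirements of \Cref{lemma:measQTD-polarization,lemma:QTD-polarization}, i.e.\ from the quadratic improvement in the direct-product step that $\measQTD$ enjoys over $\td$ and $\QTD$.
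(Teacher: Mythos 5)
Your proposal is correct and rests on the same two pillars as the paper's proof: the polarization lemmas (\Cref{lemma:measQTD-polarization,lemma:QTD-polarization}) followed by the sandwich $\td^2\le\measQTD\le\QTD\le\td$ from \Cref{thm:QTD-vs-td} to convert the polarized instance into a \QSDP{} instance. The one substantive difference is where you stop polarizing: the paper pushes all the way to a $\measQTDP[1-2^{-l(n)},2^{-l(n)}]$ instance with $2^{-l(n)}$ negligible and then invokes Watrous's protocol directly, whereas you stop at a fixed constant $k=3$ and hand the resulting constant-gap \QSDP{} instance (checking $(1-2^{-k})^2>2^{-k/2}$) to the known \QSZK{} containment of \QSDP{} in the constant polarizing regime. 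Your variant is arguably the more careful one: the polarization lemmas only guarantee polynomial cost when $k\le O(1)$ or $\alpha-\beta\ge\Omega(1)$, and with $\alpha-\beta=1/\Theta(\log n)$ the qubit count $\widetilde{O}\big(nk^{O(\log n)}\big)$ forces $k=O(1)$ anyway — so your explicit bookkeeping (bounding $\beta\ln(2/\alpha)$ by $2/e$ and noting $k^{O(\log n)}=\poly(n)$ for constant $k$) justifies the polynomial-time claim more transparently than the paper's phrasing, at the mild cost of one extra appeal to Watrous's trace-distance polarization inside the \QSDP{} containment.
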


\begin{proof}
For any $\measQTDP[\alpha,\beta]$ instance satisfying $\alpha(n)-\beta(n) \geq 1/O(\log{n})$, the polarization lemma for $\measQTD$ (\Cref{lemma:measQTD-polarization}) enables mapping it to a $\measQTDP[1-2^{-l(n)},2^{-l(n)}]$ instance, where $2^{-l(n)}$ is a negligible function. 
Similarly, for any $\QTDP[\alpha,\beta]$ instance with $\alpha^2(n) - \beta(n) \geq 1/O(\log{n})$, the polarization lemmas for \QTD{} (\Cref{lemma:QTD-polarization}) allows mapping it to a $\measQTDP[1-2^{-l(n)},2^{-l(n)}]$ instance. 
Using the inequalities in \Cref{thm:QTD-vs-td}, we establish reductions from \measQTDP{} and \QTDP{} to \QSDP{}:
\begin{itemize}[topsep=0.33em, itemsep=0.33em, parsep=0.33em, leftmargin=2em]
    \item For \textit{yes} instances, it holds that 
    \[\td(\rho_0,\rho_1) \geq \measQTD(\rho_0, \rho_1) \geq 1-2^{-l} \quad \text{and} \quad \td(\rho_0,\rho_1) \geq \QTD(\rho_0, \rho_1) \geq 1-2^{-l}.\]
    \item For \textit{no} instances, the inequality $\td(\rho_0,\rho_1) \leq 2^{-l/2}$ is guaranteed by 
    \[ \td^2(\rho_0,\rho_1) \leq \measQTD(\rho_0,\rho_1) \leq 2^{-l} \quad \text{and} \quad \td^2(\rho_0,\rho_1) \leq \QTD(\rho_0,\rho_1) \leq 2^{-l}. \]    
\end{itemize}

Finally, by following \cite[Theorem 10]{Wat02}, specifically the protocol in \cite[Figure 2]{Wat02}, we conclude that $\measQTDP[1-2^{-l(n)},2^{-l(n)}]$ and $\QTDP[1-2^{-l(n)},2^{-l(n)}]$ are indeed contained in \QSZK{}. 
\end{proof}

\subsubsection{Polarization lemmas for \texorpdfstring{$\measQTD$}{} and \texorpdfstring{\QTD{}}{}}
We now establish the polarization lemmas for $\measQTD$ (\Cref{lemma:measQTD-polarization}) and \QTD{} (\Cref{lemma:QTD-polarization}). The proofs rely on two independent one-sided error reduction techniques: one for \textit{no} instances and another for \textit{yes} instances, which are applied separately and in alternation.

\paragraph{No-instance error reduction for \measQTDP{} and \QTDP{}.} We begin with \textit{no}-instance error reduction, which is referred to as the XOR lemma in the polarization lemma for \SDP{}. 
It is worth noting that the corresponding statements for both $\measQTD$ and \QTD{} involve the same type of identity.

\begin{lemma}[No-instance error reduction for \measQTDP{} and \QTDP{}]
\label{lemma:measQTD-no-error-reduction}
Given quantum circuits $Q_0$ and $Q_1$ that prepare the quantum states $\rho_0$ and $\rho_1$, respectively, there exists a deterministic procedure that, on input $(Q_0,Q_1,l)$, produces new quantum circuits $\tilde{Q}_0$ and $\tilde{Q}_1$ preparing the states $\trho_0$ and $\trho_1$, respectively. 
These states are defined as $\trho_b=2^{-l+1}\sum_{b_1\oplus\cdots\oplus b_l=b} \rho_{b_1}\otimes \cdots \otimes \rho_{b_l} \text{ for } b\in\binset$, and satisfy the following identities\emph{:}
\[\measQTD(\trho_0,\trho_1) = \measQTD(\rho_0,\rho_1)^l \quad \text{and} \quad \QTD(\trho_0,\trho_1) = \QTD(\rho_0,\rho_1)^l.  \]
\end{lemma}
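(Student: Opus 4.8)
The plan is to reduce both claimed equalities to a single structural fact about the XOR gadget, and then handle $\QTD$ and $\measQTD$ separately using the characterizations available for each; the $\measQTD$ case is where the real work lies.

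First I would pin down the construction and the two resulting states. The circuits $\tilde{Q}_0,\tilde{Q}_1$ are the ``randomized XOR'' gadget of \cite{SV97,Wat02} lifted to circuits: $\tilde{Q}_b$ runs $l$ independent blocks, block $i$ carrying a fresh copy of the inputs of $Q_0$ and $Q_1$, together with an $l$-qubit control register prepared in the uniform superposition over all $(b_1,\dots,b_l)\in\binset^l$ with $b_1\oplus\cdots\oplus b_l=b$ (Hadamards on $l-1$ control qubits, a $\mathrm{CNOT}$ cascade onto the last, and a conditional flip when $b=1$); block $i$ applies $Q_0$ controlled on control qubit $i$ being $0$ and $Q_1$ controlled on it being $1$; finally the control register and all non-output qubits are traced out. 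Since distinct control strings are orthogonal, tracing out the control register leaves exactly the classical mixture, so the output of $\tilde{Q}_b$ is $\trho_b=2^{-(l-1)}\sum_{b_1\oplus\cdots\oplus b_l=b}\rho_{b_1}\otimes\cdots\otimes\rho_{b_l}$. Writing $\rho_+=\tfrac12(\rho_0+\rho_1)$, $\rho_-=\tfrac12(\rho_0-\rho_1)$, and $\trho_\pm=\tfrac12(\trho_0\pm\trho_1)$, the only point at which the XOR structure enters is
\begin{equation*}
\trho_0+\trho_1 = 2^{-(l-1)}\!\!\sum_{(b_1,\dots,b_l)\in\binset^l}\!\!\rho_{b_1}\otimes\cdots\otimes\rho_{b_l} = 2^{-(l-1)}(\rho_0+\rho_1)^{\otimes l} = 2\,\rho_+^{\otimes l},
\end{equation*}
together with the observation that $(-1)^{b_1\oplus\cdots\oplus b_l}=\prod_{k}(-1)^{b_k}$ makes the signed sum factor through tensor products, so that $\trho_0-\trho_1 = 2^{-(l-1)}(\rho_0-\rho_1)^{\otimes l}=2\,\rho_-^{\otimes l}$. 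Hence $\trho_+=\rho_+^{\otimes l}$ and $\trho_-=\rho_-^{\otimes l}$, the common input to both parts.

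For $\QTD$, rewriting \Cref{def:QTD} with $\rho_0\pm\rho_1=2\rho_\pm$ and $(2\rho_+)^{-1/2}=2^{-1/2}\rho_+^{-1/2}$ makes the powers of $2$ cancel, leaving $\QTD(\rho_0,\rho_1)=\Tr\big(\rho_-\rho_+^{-1/2}\rho_-\rho_+^{-1/2}\big)$ with the inverse square root taken on $\supp{\rho_+}$. Because a power function is multiplicative, $(\rho_+^{\otimes l})^{-1/2}=(\rho_+^{-1/2})^{\otimes l}$, and since $\supp{\rho_+^{\otimes l}}$ contains $\supp{\rho_-^{\otimes l}}$ the generalized-inverse bookkeeping is consistent; combining this with $(A\otimes B)(C\otimes D)=AC\otimes BD$ gives $\trho_-\trho_+^{-1/2}\trho_-\trho_+^{-1/2}=(\rho_-\rho_+^{-1/2}\rho_-\rho_+^{-1/2})^{\otimes l}$, and the trace being multiplicative over tensor factors yields $\QTD(\trho_0,\trho_1)=\QTD(\rho_0,\rho_1)^{l}$.

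For $\measQTD$ I would use the measured characterization $\measQTD(\rho_0,\rho_1)=\sup_{\mathrm{POVM}\ \calE}\TD\big(p_0^{(\calE)},p_1^{(\calE)}\big)$. The easy direction: apply the product POVM $\calE^{\star\otimes l}$ (with $\calE^\star$ optimal for a single copy) to $\trho_0,\trho_1$; since $\Tr\big((\rho_{b_1}\otimes\cdots\otimes\rho_{b_l})(E^\star_{x_1}\otimes\cdots\otimes E^\star_{x_l})\big)=\prod_k p_{b_k}^{(\calE^\star)}(x_k)$, the induced distributions are precisely the classical $l$-fold XOR of $p_0^{(\calE^\star)},p_1^{(\calE^\star)}$, and the classical XOR identity for triangular discrimination --- which is the same $\mu_\pm=\tfrac12(q_0\pm q_1)$ computation applied to distributions, using that $\sum_x\mu_-(x)^2/\mu_+(x)$ multiplies across product distributions --- gives $\measQTD(\trho_0,\trho_1)\ge\measQTD(\rho_0,\rho_1)^{l}$. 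The reverse inequality, i.e.\ that entangling measurements across the $l$ blocks cannot help, is the main obstacle. The plan is to pass to the closed form $\measQTD(\rho_0,\rho_1)=\Tr\big(\rho_-\,\Omega_{\rho_+}(\rho_-)\big)$ with $\Omega_{\rho}^{-1}(A)=\tfrac12(\rho A+A\rho)$ and to evaluate $\Tr\big(\rho_-^{\otimes l}\,\Omega_{\rho_+^{\otimes l}}(\rho_-^{\otimes l})\big)$; the subtlety is that, unlike the inverse square root used for $\QTD$, the anticommutator-inverse operator $\Omega$ does not obviously commute with tensor powers, so one must exploit the explicit product form $\trho_+=\rho_+^{\otimes l}$, $\trho_-=\rho_-^{\otimes l}$ directly --- for instance, after diagonalizing $\rho_+$ with eigenvalues $(\beta_i)$ and writing $\measQTD$ via the weights $\tfrac{2}{\beta_i+\beta_j}$, tracking how those weights combine with the block decomposition of $\rho_-^{\otimes l}$ --- to recover the multiplicative identity. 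This step carries essentially all the difficulty of the lemma; everything else above is bookkeeping.
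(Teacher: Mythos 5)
Your $\QTD$ argument is complete and is essentially the paper's: everything reduces to the factorizations $\trho_0-\trho_1=2^{-(l-1)}(\rho_0-\rho_1)^{\otimes l}$ and $\trho_0+\trho_1=2^{-(l-1)}(\rho_0+\rho_1)^{\otimes l}$ (the paper states the two‑fold case and inducts; you state the $l$‑fold case directly), after which multiplicativity of the trace over tensor factors finishes the job. The genuine gap is in the $\measQTD$ part: you only establish $\measQTD(\trho_0,\trho_1)\geq\measQTD(\rho_0,\rho_1)^l$ via product POVMs, explicitly defer the converse, and the plan you sketch for it does not close. If you diagonalize $\rho_+=\tfrac{1}{2}(\rho_0+\rho_1)=\diag(\beta_1,\dots,\beta_d)$ and expand both sides through the closed form $\Tr\big(\rho_-\,\Omega_{\rho_+}(\rho_-)\big)$, the weight attached to $|(\rho_-)_{i_1j_1}|^2|(\rho_-)_{i_2j_2}|^2$ for $l=2$ is $\tfrac{2}{\beta_{i_1}\beta_{i_2}+\beta_{j_1}\beta_{j_2}}$ on the tensor‑power side but $\tfrac{4}{(\beta_{i_1}+\beta_{j_1})(\beta_{i_2}+\beta_{j_2})}$ on the product side; symmetrizing over $(i_1,j_1)\mapsto(j_1,i_1)$ and using $\tfrac{1}{a}+\tfrac{1}{b}\geq\tfrac{4}{a+b}$ shows the first weights dominate whenever $\rho_-$ has off‑diagonal entries linking distinct eigenvalues of $\rho_+$. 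So the ``bookkeeping'' you propose yields again the inequality you already have, not the upper bound you need; tracking how the weights combine cannot, by itself, recover the claimed equality.

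For comparison, the paper's proof of the $\measQTD$ identity is a short chain that reduces to the two‑fold case and then writes $\sup_{\calE}\TD(\tilde p_0^{(\calE)},\tilde p_1^{(\calE)})=\sup_{\calE}\TD(p_0^{(\calE)},p_1^{(\calE)})\cdot\TD({p'_0}^{(\calE)},{p'_1}^{(\calE)})$, ``mirroring'' the classical XOR identity of Proposition 4.12 in \cite{BDRV19}, before splitting the supremum. That middle step is exactly the assertion that entangled measurements across blocks do not help --- the point you correctly single out as carrying all the difficulty --- and it is stated without further justification (for a non‑product $\calE$ the marginal notation $p_b^{(\calE)}$ is not even defined). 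So your diagnosis of where the hard content lies is right, but a complete proof must supply an argument there, e.g.\ a reduction showing the optimal POVM for $\measQTD(\trho_0,\trho_1)$ can be taken in product form; neither your proposal nor the eigenbasis computation you sketch provides one.
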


\begin{proof}
It suffices to prove that for quantum states $\rho_0$, $\rho_1$, $\rho'_0$, and $\rho'_1$, defining 
\[\trho_0 \coloneqq \frac{1}{2}(\rho'_0\otimes \rho_0 + \rho'_1\otimes \rho_1) \text{ and } \trho_1 \coloneqq \frac{1}{2}(\rho'_0\otimes \rho_1 + \rho'_1\otimes \rho_0),\]
the following identities hold:   
\begin{align}
    \measQTD(\trho_0,\trho_1) &= \measQTD(\rho'_0,\rho'_1) \cdot \measQTD(\rho_0,\rho_1), \label{eq:measQTD-no-error-reduction}\\
    \QTD(\trho_0,\trho_1) &= \QTD(\rho'_0,\rho'_1) \cdot \QTD(\rho_0,\rho_1). \label{eq:QTD-no-error-reduction}
\end{align}
Hence, we can conclude the proof by inductively applying \Cref{eq:measQTD-no-error-reduction} to $\measQTD\big(\trho^{(l)}_0,\trho^{(l)}_1\big)$, and \Cref{eq:QTD-no-error-reduction} to $\QTD\big(\trho^{(l)}_0,\trho^{(l)}_1\big)$.

It remains to demonstrate the identities in \Cref{eq:measQTD-no-error-reduction,eq:QTD-no-error-reduction}. For \Cref{eq:measQTD-no-error-reduction}, mirroring the approach of~\cite[Proposition 4.12]{BDRV19}, we obtain:
\begin{align*}
\measQTD(\trho_0,\trho_1) 
=& \sup_{{\rm POVM}~\calE} \TD\left(\tilde{p}_0^{(\calE)}, \tilde{p}_1^{(\calE)}\right)\\
=& \sup_{{\rm POVM}~\calE} \TD\left(p_0^{(\calE)}, p_1^{(\calE)}\right) \cdot \TD\left({p'_0}^{(\calE)}, {p'_1}^{(\calE)}\right)\\
=& \sup_{{\rm POVM}~\calE_1} \TD\left(p_0^{(\calE_1)}, p_1^{(\calE_1)}\right) \cdot \sup_{{\rm POVM}~\calE_2} \TD\left({p'_0}^{(\calE_2)}, {p'_1}^{(\calE_2)}\right)\\
=& \measQTD(\rho_0,\rho_1) \cdot \measQTD(\rho'_0,\rho'_1). 
\end{align*}

For \Cref{eq:QTD-no-error-reduction}, the identity follows from the identities: 
\[ \trho_0-\trho_1 = \frac{1}{2} \rbra*{\rho'_0-\rho'_1}\otimes\rbra*{\rho_0-\rho_1} \quad \text{and} \quad \trho_0+\trho_1 = \frac{1}{2} \rbra*{\rho'_0+\rho'_1}\otimes\rbra*{\rho_0+\rho_1}. \qedhere\]
\end{proof}

\paragraph{Yes-instance error reduction for \measQTDP{} and \QTDP{}.}
We then proceed with \textit{yes}-instance error reduction, known as the direct product lemma in polarization lemma for \SDP{}. 
Notably, the $\measQTD$ case (\Cref{lemma:measQTD-yes-error-reduction}) achieves a lower bound with a \textit{quadratic} improvement compared to both the trace distance case~\cite[Lemma 9]{Wat02} and the \QTD{} case (\Cref{lemma:QTD-yes-error-reduction}), whereas the upper bound is slightly worse than the trace distance case.\footnote{This difference arises from the fact that the trace distance and statistical distance are metrics, while the triangular discrimination and its quantum analogs ($\measQTD$ and \QTD{}) are (conjectured to be) the \textit{squares} of a metric.}

\begin{lemma}[Yes-instance error reduction for \measQTDP{}]
\label{lemma:measQTD-yes-error-reduction}
Given quantum circuits $Q_0$ and $Q_1$ that prepare the quantum states $\rho_0$ and $\rho_1$, respectively, there exists a deterministic procedure that, on input $(Q_0,Q_1,l)$, produces new quantum circuits $\tilde{Q}_0$ and $\tilde{Q}_1$ preparing the states $\trho_0$ and $\trho_1$. These states are defined as $\trho_b \coloneqq \rho_b^{\otimes l}$ for $b\in\binset$, and satisfy the inequalities\emph{:} 
\[1-\exp\rbra*{-\frac{l}{2} \cdot \measQTD(\rho_0,\rho_1)} \leq \measQTD(\trho_0, \trho_1)  \leq 2l \cdot \measQTD(\rho_0,\rho_1).\] 
\end{lemma}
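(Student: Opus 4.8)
The plan is to route everything through the squared Bures distance, using two ingredients: the fidelity is multiplicative under tensor products, and $\measQTD$ is sandwiched between $\tfrac12\Bsquare$ and $\Bsquare$ by \Cref{lemma:B-leq-measQTD-leq-2B}. First I would fix the construction: $\tilde Q_b$ simply runs $l$ independent copies of $Q_b$ in parallel and traces out the non-output qubits of each copy, so that $\trho_b=\rho_b^{\otimes l}$; this is plainly a deterministic procedure running in time polynomial in $|Q_0|$, $|Q_1|$, and $l$. Writing $F\coloneqq\F(\rho_0,\rho_1)$, multiplicativity of the fidelity gives $\F(\trho_0,\trho_1)=F^{l}$, hence $\Bsquare(\trho_0,\trho_1)=2(1-F^{l})$, and \Cref{lemma:B-leq-measQTD-leq-2B} applied to $(\trho_0,\trho_1)$ yields $1-F^{l}\le\measQTD(\trho_0,\trho_1)\le 2(1-F^{l})$. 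The same lemma applied to $(\rho_0,\rho_1)$ gives $1-F\le\measQTD(\rho_0,\rho_1)\le 2(1-F)$. These four inequalities are the only ``quantum'' input the argument needs; the rest is an elementary one-variable estimate in $F\in[0,1]$.

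For the lower bound I would combine $\measQTD(\trho_0,\trho_1)\ge 1-F^{l}$ with the scalar inequality $F\le e^{-(1-F)}$ (i.e.\ $1+x\le e^{x}$ at $x=-(1-F)$), which gives $F^{l}\le e^{-l(1-F)}$; since $1-F\ge\tfrac12\measQTD(\rho_0,\rho_1)$ follows from the upper bound $\measQTD(\rho_0,\rho_1)\le 2(1-F)$, this yields $F^{l}\le e^{-\frac{l}{2}\measQTD(\rho_0,\rho_1)}$ and hence $\measQTD(\trho_0,\trho_1)\ge 1-e^{-\frac{l}{2}\measQTD(\rho_0,\rho_1)}$. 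For the upper bound I would combine $\measQTD(\trho_0,\trho_1)\le 2(1-F^{l})$ with the factorization $1-F^{l}=(1-F)\sum_{j=0}^{l-1}F^{j}\le l(1-F)$ (each $F^{j}\le 1$), and then use $1-F\le\measQTD(\rho_0,\rho_1)$, which is the lower bound $\measQTD(\rho_0,\rho_1)\ge\tfrac12\Bsquare(\rho_0,\rho_1)=1-F$, to conclude $\measQTD(\trho_0,\trho_1)\le 2l(1-F)\le 2l\,\measQTD(\rho_0,\rho_1)$.

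There is no serious obstacle once the Bures-distance detour is in place; the main point to get right is \emph{not} to try to argue directly at the level of optimal POVMs, since the measurement optimizing $\measQTD(\trho_0,\trho_1)$ need not be a tensor product of copies of the measurement optimizing $\measQTD(\rho_0,\rho_1)$, so the required multiplicativity is genuinely a property of the fidelity rather than of the measured $\chi^2$-divergence. For context, the reason this lemma beats the trace-distance direct product lemma \cite[Lemma 9]{Wat02} quadratically on the \textit{yes} side, while being mildly worse ($2l\cdot\measQTD$ instead of an $O(l)\cdot\td$-type bound) on the \textit{no}-side slack, is precisely that $\sqrt{\measQTD}$ rather than $\measQTD$ is the metric-like quantity, so the amplification is governed by $\Bsquare=2(1-F)$ and the scalar estimates $F\le e^{-(1-F)}$ and $1-F^{l}\le l(1-F)$ above.
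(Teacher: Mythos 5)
Your proposal is correct and follows essentially the same route as the paper: reduce both bounds to the squared Bures distance via \Cref{lemma:B-leq-measQTD-leq-2B}, use multiplicativity of the fidelity on tensor products, and finish with the scalar estimates $1-F^l\le l(1-F)$ and $F^l\le e^{-l(1-F)}$ (the paper phrases these as $(1-x)^l\ge 1-lx$ and $1-x\le e^{-x}$ with $x=\tfrac12\Bsquare$, which is the same thing). The only cosmetic difference is your geometric-sum factorization in place of Bernoulli's inequality.
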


\begin{proof}
The proof follows the approach of~\cite[Lemma 4.10]{BDRV19}, utilizing a key property of the Bures distance on tensor-product states $\rho_0^{\otimes l}$ and $\rho_1^{\otimes l}$: 
\begin{equation}
    \label{eq:Bures-tensor-product}
    \frac{1}{2}\Bsquare\big(\rho_0^{\otimes l},\rho_1^{\otimes l}\big) = 1-\F\rbra*{\rho_0^{\otimes l},\rho_1^{\otimes l}} = 1-\F\rbra*{\rho_0,\rho_1}^l =1-\left(1-\frac{1}{2}\Bsquare(\rho_0,\rho_1)\right)^l.
\end{equation}

By utilizing \Cref{lemma:B-leq-measQTD-leq-2B}, we obtain the following upper bound: 
\begin{align*}
\measQTD(\rho_0^{\otimes l},\rho_1^{\otimes l}) 
&\leq \Bsquare(\rho_0^{\otimes l},\rho_1^{\otimes l})\\
&=2\left(1-\left(1-\frac{1}{2}\Bsquare(\rho_0,\rho_1)\right)^l\right)\\
&\leq l\Bsquare(\rho_0,\rho_1)\\ 
&\leq 2l\measQTD(\rho_0,\rho_1),
\end{align*}
where the third line is because $(1-x)^k \geq 1-kx$ for any $x$ and integer $k$. Likewise, we can also deduce the following lower bound: 
\begin{align*}
\measQTD(\rho_0^{\otimes l},\rho_1^{\otimes l}) 
&\geq \frac{1}{2}\Bsquare(\rho_0^{\otimes l},\rho_1^{\otimes l}) \\
&= \left(1-\left(1-\frac{1}{2}\Bsquare(\rho_0,\rho_1)\right)^l\right)\\ 
&\geq \left(1-\left(1-\frac{1}{2}\measQTD(\rho_0,\rho_1)\right)^l\right)\\
&\geq 1-\exp\left(-\frac{l}{2}\measQTD(\rho_0,\rho_1)\right),
\end{align*}
where the last equality owes to $1-x\leq e^{-x}$ for any $x$. These bounds complete the proof. 
\end{proof}

Interestingly, the lower bound in \Cref{lemma:QTD-yes-error-reduction} matches that of the trace distance case, even though the proof techniques differ. The trace distance case relies on the triangle inequality, which is only \textit{conjectured} to hold for $\sqrt{\QTD}$. In contrast, our proof circumvents this barrier by leveraging the inequalities between \QTD{} and the Bures distance. 

\begin{lemma}[Yes-instance error reduction for \QTDP{}]
\label{lemma:QTD-yes-error-reduction}
Given quantum circuits $Q_0$ and $Q_1$ that prepare the quantum states $\rho_0$ and $\rho_1$, respectively, there exists a deterministic procedure that, on input $(Q_0,Q_1,l)$, produces new quantum circuits $\tilde{Q}_0$ and $\tilde{Q}_1$ preparing the states $\trho_0$ and $\trho_1$. These states are defined as $\trho_b \coloneqq \rho_b^{\otimes l}$ for $b\in\binset$, and satisfy the inequalities\emph{:} 
\[1-\exp\rbra*{-\frac{l}{2} \cdot \QTD(\rho_0,\rho_1)^2} \leq \QTD(\trho_0, \trho_1)  \leq \sqrt{2l} \cdot \sqrt{\QTD(\rho_0,\rho_1)}.\] 
\end{lemma}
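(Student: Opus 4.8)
The plan is to follow the proof of \Cref{lemma:measQTD-yes-error-reduction} almost verbatim, substituting the sandwich bound $\frac{1}{2}\Bsquare \leq \measQTD \leq \Bsquare$ with the weaker one $\frac{1}{2}\Bsquare \leq \QTD \leq \B$ from \Cref{prop:QTD-leq-B}; the quadratic loss in the lower bound is exactly the price of this weakening. The procedure is the obvious one: $\tilde{Q}_b$ runs $l$ independent copies of $Q_b$ so that $\trho_b = \rho_b^{\otimes l}$, which is plainly deterministic and polynomial-time in $|Q_0|$, $|Q_1|$, and $l$. Everything then rests on the multiplicativity of the fidelity under tensor products, $\F(\rho_0^{\otimes l},\rho_1^{\otimes l}) = \F(\rho_0,\rho_1)^l$, equivalently $\frac{1}{2}\Bsquare(\rho_0^{\otimes l},\rho_1^{\otimes l}) = 1 - \big(1 - \frac{1}{2}\Bsquare(\rho_0,\rho_1)\big)^l$, exactly as recorded in \Cref{eq:Bures-tensor-product}.

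For the upper bound, I would chain $\QTD(\trho_0,\trho_1) \leq \B(\trho_0,\trho_1) = \sqrt{2\big(1 - (1 - \tfrac{1}{2}\Bsquare(\rho_0,\rho_1))^l\big)}$, then apply Bernoulli's inequality $1 - (1-x)^l \leq lx$ (valid for $x\in[0,1]$, and $\tfrac12\Bsquare\in[0,1]$) to reach $\sqrt{l\,\Bsquare(\rho_0,\rho_1)}$, and finally invoke $\Bsquare(\rho_0,\rho_1) \leq 2\QTD(\rho_0,\rho_1)$ (again \Cref{prop:QTD-leq-B}) to land on $\sqrt{2l}\,\sqrt{\QTD(\rho_0,\rho_1)}$. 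For the lower bound, I would start from $\QTD(\trho_0,\trho_1) \geq \tfrac{1}{2}\Bsquare(\trho_0,\trho_1) = 1 - \big(1 - \tfrac{1}{2}\Bsquare(\rho_0,\rho_1)\big)^l$, use that squaring $\QTD(\rho_0,\rho_1) \leq \B(\rho_0,\rho_1)$ gives $\QTD(\rho_0,\rho_1)^2 \leq \Bsquare(\rho_0,\rho_1)$, so that replacing $\Bsquare(\rho_0,\rho_1)$ by the smaller quantity $\QTD(\rho_0,\rho_1)^2$ inside the decreasing map $x \mapsto (1 - x/2)^l$ only shrinks the subtracted term, yielding $\geq 1 - \big(1 - \tfrac{1}{2}\QTD(\rho_0,\rho_1)^2\big)^l$, and close with $1 - y \leq e^{-y}$.

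No step is genuinely hard; the only care needed is tracking the directions of the Bures inequalities. The conceptual point --- and the reason the lower bound is quadratically worse than in the $\measQTD$ case --- is that \QTD{} is pinched between $\frac{1}{2}\Bsquare$ and $\B$ rather than between $\frac{1}{2}\Bsquare$ and $\Bsquare$, so the quantity one can feed into the fidelity-power expansion is $\QTD^2$ and not $\QTD$. This is precisely how the argument sidesteps the barrier noted before the statement: the trace-distance proof of the analogous bound uses the triangle inequality, whose counterpart for $\sqrt{\QTD}$ is only conjectured, whereas here the inequalities between \QTD{} and the Bures distance do the job --- at the cost of that quadratic factor.
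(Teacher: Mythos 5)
Your proposal is correct and follows essentially the same route as the paper: both bounds are obtained by sandwiching $\QTD$ between $\tfrac{1}{2}\Bsquare$ and $\B$ via \Cref{prop:QTD-leq-B}, applying the fidelity multiplicativity identity of \Cref{eq:Bures-tensor-product}, and closing with Bernoulli's inequality on one side and $\QTD^2 \leq \Bsquare$ plus $1-y\leq e^{-y}$ on the other. Your remark explaining why the lower bound degrades quadratically relative to the $\measQTD$ case matches the paper's own discussion preceding the lemma.
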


\begin{proof}
    Our proof strategy closely follows the approach used in \Cref{lemma:measQTD-yes-error-reduction}. For the upper bound, we use the inequalities from \Cref{prop:QTD-leq-B} and \Cref{eq:Bures-tensor-product}, which give
    \[ \QTD\rbra*{\rho_0^{\otimes l}, \rho_1^{\otimes l}} 
    \leq \B\rbra*{\rho_0^{\otimes l}, \rho_1^{\otimes l}}
    \leq \sqrt{l} \cdot \B(\rho_0,\rho_1)
    \leq \sqrt{2l} \cdot \sqrt{\QTD(\rho_0,\rho_1)}.\]
    
    For the lower bound, we again apply \Cref{prop:QTD-leq-B} and \Cref{eq:Bures-tensor-product}, obtaining
    \begin{align*}
        \QTD\rbra*{\rho_0^{\otimes l}, \rho_1^{\otimes l}} 
        \geq \frac{1}{2} \Bsquare\rbra*{\rho_0^{\otimes l}, \rho_1^{\otimes l}} 
        &= 1 - \rbra*{1 - \frac{1}{2} \Bsquare(\rho_0,\rho_1)}^l \\
        &\geq 1 - \rbra*{1 - \frac{1}{2} \QTD(\rho_0,\rho_1)^2}^l \\
        &\geq 1-\exp\rbra*{-\frac{l}{2} \cdot \QTD(\rho_0,\rho_1)^2}. \qedhere
    \end{align*}
\end{proof}

\paragraph{Putting everything together.} We can now establish \Cref{lemma:measQTD-polarization,lemma:QTD-polarization} by selecting appropriate parameters based on the polarization lemma for the triangular discrimination, as established in~\cite[Lemma 4.9]{BDRV19}. 
Specifically, we first apply \textit{no}-instance error reduction (\Cref{lemma:measQTD-no-error-reduction}), then use \textit{yes}-instance error reduction (\Cref{lemma:measQTD-yes-error-reduction} or \Cref{lemma:QTD-yes-error-reduction}) to ensure that the soundness parameter is at most $1/2$, and finally apply \textit{no}-instance error reduction (\Cref{lemma:measQTD-no-error-reduction}) again. The time complexity analysis aligns with~\cite[Lemma 38]{CCKV08}. 

\begin{proof}[Proof of \Cref{lemma:measQTD-polarization}]
    Let $\lambda \coloneqq \min\cbra*{\alpha/\beta,2} \in (1,2]$, and choose $l \coloneqq \lceil \log_{\lambda} 8k \rceil$. 
    Applying the \textit{no}-instance error reduction for \measQTDP{} (\Cref{lemma:measQTD-no-error-reduction}) to the input $(Q_0,Q_1,l)$, where the quantum circuits $Q_0$ and $Q_1$ prepare the states $\rho_0$ and $\rho_1$, respectively, produces new quantum circuits $(Q'_0,Q'_1)$ with corresponding states $(\rho'_0,\rho'_1)$ such that:
    \begin{align*}
    \measQTD(\rho_0,\rho_1) \geq \alpha \quad &\Longrightarrow \quad \measQTD(\rho'_0,\rho'_1)\geq \alpha^l; \\
    \measQTD(\rho_0,\rho_1) \leq \beta \quad &\Longrightarrow \quad \measQTD(\rho'_0,\rho'_1)\leq \beta^l.
    \end{align*}

    Let $m \coloneqq \lambda^l/(4\alpha^l) \leq 1/(4\beta^l)$, and define the states $\rho''_0 \coloneqq (\rho'_0)^{\otimes m}$ and $\rho''_1 \coloneqq (\rho'_1)^{\otimes m}$, along with the corresponding circuits $Q''_0$ and $Q''_1$. Applying the \textit{yes}-instance error reduction for \measQTDP{} (\Cref{lemma:measQTD-yes-error-reduction}) to the input $(Q'_0, Q'_1,m)$ yields that: 
    \begin{align*}
    \measQTD(\rho_0,\rho_1) \geq \alpha \quad &\Longrightarrow \quad \measQTD(\rho''_0,\rho''_1)\geq 1-\exp\rbra*{-\alpha^l m/2} \geq 1-e^{-k}; \\
    \measQTD(\rho_0,\rho_1) \leq \beta \quad &\Longrightarrow \quad \measQTD(\rho''_0,\rho''_1)\leq 2m\beta^l \leq 1/2. 
    \end{align*}

    Finally, applying the \textit{no}-instance error reduction for \measQTDP{} (\Cref{lemma:measQTD-no-error-reduction}) again to the input $(Q''_0, Q''_1, k)$ produces new quantum circuits $(\tilde{Q}_0,\tilde{Q}_1)$ with the corresponding quantum states $(\trho_0,\trho_1)$, satisfying:
    \begin{align*}
    \measQTD(\rho_0,\rho_1) \geq \alpha \quad &\Longrightarrow \quad \measQTD(\trho_0,\trho_1)\geq \rbra*{1-e^{-k}}^k \geq 1-ke^{-k} \geq 1-2^{-k}; \\
    \measQTD(\rho_0,\rho_1) \leq \beta \quad &\Longrightarrow \quad \measQTD(\trho_0,\trho_1)\leq 2^{-k}.
    \end{align*}
    
    The last step holds for sufficiently large $k$, which we can be determined by selecting an appropriate value at the beginning of our construction.  

    The time complexity analysis follows a similar approach to~\cite[Lemma 38]{CCKV08}. Specifically, noting that $\lambda \in (1,2]$, we have $\ln(\lambda) = \ln(1+(\lambda-1))\geq (\lambda-1)/2 \geq \Omega\big(\frac{\alpha-\beta}{\beta}\big)$, where the first inequality is due to $\ln(1+x)\geq x/2$ for all $x\in[0,1]$. Then, we obtain $l=O\big(\frac{\ln k}{\ln \lambda}\big)=O\big(\frac{\beta \ln{k}}{\alpha-\beta}\big)$ and further conclude that $m \leq (2/\alpha)^l/4 = \exp\big( O\big( \frac{\beta \ln k}{\alpha-\beta} \cdot \ln (2/\alpha) \big) \big)$. 
\end{proof}

\begin{proof}[Proof of \Cref{lemma:QTD-polarization}]
    Our proof strategy closely mirrors the approach used in \Cref{lemma:measQTD-polarization}, but with different parameters $\lambda,l,m$, and some intermediate steps are omitted for brevity. 
    
    We set $\lambda \coloneqq \min\cbra*{\alpha^2/\beta,2} \in (1,2]$, and choose $l \coloneqq \ceil*{\log_{\lambda}(16k)}$. By applying the \textit{no}-instance error reduction for \QTDP{} (\Cref{lemma:measQTD-no-error-reduction}) to the input $(Q_0,Q_1,l)$, we obtain the circuits $(Q'_0,Q'_1)$ and the corresponding states $(\rho'_0,\rho'_1)$, satisfying: 
    \begin{align*}
        \QTD(\rho_0,\rho_1) \geq \alpha \quad &\Longrightarrow \quad \QTD(\rho'_0,\rho'_1)\geq \alpha^l; \\
        \QTD(\rho_0,\rho_1) \leq \beta \quad &\Longrightarrow \quad \QTD(\rho'_0,\rho'_1)\leq \beta^l.
    \end{align*}

    Next, let $m \coloneqq \lambda^l/(8 \alpha^{2l}) \leq 1/(8\beta^l)$. Applying the \textit{yes}-instance error reduction for \QTDP{} (\Cref{lemma:QTD-yes-error-reduction}) to the input $(Q'_0, Q'_1,m)$, where the resulting circuits and states are denoted by $(Q''_0,Q''_1)$ and $(\rho''_0,\rho''_1)$, respectively, yields the following: 
    \begin{align*}
        \QTD(\rho_0,\rho_1) \geq \alpha \quad &\Longrightarrow \quad \QTD(\rho''_0,\rho''_1)\geq 1-\exp\rbra*{-\alpha^{2l} m/2} \geq 1-e^{-k}; \\
        \QTD(\rho_0,\rho_1) \leq \beta \quad &\Longrightarrow \quad \QTD(\rho''_0,\rho''_1)\leq \sqrt{2m}\beta^{l/2} \leq 1/2. 
    \end{align*}

    Lastly, applying the \textit{no}-instance error reduction for \QTDP{} (\Cref{lemma:measQTD-no-error-reduction}) again to the input $(Q''_0, Q''_1, k)$ results in the circuits $(\tilde{Q}_0,\tilde{Q}_1)$ and the corresponding quantum states $(\trho_0,\trho_1)$, where the following holds: 
    \begin{align*}
        \QTD(\rho_0,\rho_1) \geq \alpha \quad &\Longrightarrow \quad \QTD(\trho_0,\trho_1)\geq \rbra*{1-e^{-k}}^k \geq 1-ke^{-k} \geq 1-2^{-k}; \\
        \QTD(\rho_0,\rho_1) \leq \beta \quad &\Longrightarrow \quad \QTD(\trho_0,\trho_1)\leq 2^{-k}.
    \end{align*}

    The time complexity analysis follows similarly to the proof of \Cref{lemma:measQTD-polarization}. 
    Since $\lambda \in (1,2]$, we obtain $\ln{\lambda} \geq \Omega\rbra[\big]{\frac{\alpha^2-\beta}{\beta}}$, and thus $l = O\rbra[\big]{\frac{\ln{k}}{\ln{\lambda}}} = O\rbra[\big]{\frac{\beta\ln{k}}{\alpha^2-\beta}}$. Consequently, we conclude that 
    $m \leq \rbra*{2/\alpha^2}^l/8 \leq \exp\rbra[\big]{O\rbra[\big]{\frac{\beta \ln{k}}{\alpha^2-\beta} \cdot \ln(2/\alpha^2)}}$.
\end{proof}

\subsection{\QSZK{}-hardness for \QJSP{}, \QEDP{}, \measQTDP{}, and \QTDP{}}
\label{subsec:QSZK-hardness}

\paragraph{\QJSP{} is \QSZK{}-hard.} We begin by establishing the \QSZK{}-hardness of the \textsc{Quantum Jensen-Shannon Divergence Problem} (\QJSP{}):

\begin{lemma}[\QJSP{} is \QSZK{}-hard]
    \label{lemma:QJSP-is-QSZK-hard}
    Let $\alpha(n)$ and $\beta(n)$ be efficiently computable functions, there exists a constant $\epsilon\in(0,1/2)$ such that $\QJSP[\alpha,\beta]$ is \QSZK{}-hard when $\alpha(n) \leq 1-2^{-n^{1/2-\epsilon}}$ and $\beta(n)\geq 2^{-n^{1/2-\epsilon}}$ for sufficiently large $n$. 
\end{lemma}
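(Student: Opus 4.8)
The plan is to give a polynomial-time many-one reduction from $\QSDP$, taken in the \QSZK{}-hard parameter regime supplied by \Cref{thm:QSD-is-QSZKhard}, to $\QJSP$, using the two-sided comparison between the quantum Jensen--Shannon divergence and the trace distance: \Cref{lemma:QJS-leq-traceDist} gives $\QJS_2(\rho_0,\rho_1) \leq \td(\rho_0,\rho_1)$, and \Cref{lemma:traceDist-leq-QJS} gives $\QJS_2(\rho_0,\rho_1) \geq 1-\binH\!\big(\tfrac{1-\td(\rho_0,\rho_1)}{2}\big)$. The reduction will simply be the identity on the pair of circuits $(Q_0,Q_1)$ — only the threshold functions change — so the circuits and new thresholds stay efficiently computable and \QSZK{}-hardness transfers because many-one reductions compose. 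Fix any constant $\epsilon \in (0,1/2)$ (say $\epsilon = 1/4$) and set an auxiliary constant $\epsilon'' \coloneqq \epsilon/2 \in (0,\epsilon)$. I invoke \Cref{thm:QSD-is-QSZKhard} with the extremal source parameters $\alpha'(n) \coloneqq 1 - 2^{-n^{1/2-\epsilon''}}$ and $\beta'(n) \coloneqq 2^{-n^{1/2-\epsilon''}}$; these satisfy $(\alpha'(n))^{2}-\beta'(n) \geq 1 - 3\cdot 2^{-n^{1/2-\epsilon''}} \geq 1/O(\log n)$ for large $n$, so $\QSDP[\alpha',\beta']$ is \QSZK{}-hard.

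Next I carry out the analysis of this identity reduction. For a \emph{yes} instance of $\QSDP[\alpha',\beta']$, i.e. $\td(\rho_0,\rho_1) \geq \alpha'(n)$, monotonicity of $x\mapsto 1-\binH(\tfrac{1-x}{2})$ on $[0,1]$ together with \Cref{lemma:traceDist-leq-QJS} yields $\QJS_2(\rho_0,\rho_1) \geq 1 - \binH\!\big(\tfrac{1-\alpha'(n)}{2}\big)$; for a \emph{no} instance, $\td(\rho_0,\rho_1) \leq \beta'(n)$, whence $\QJS_2(\rho_0,\rho_1) \leq \beta'(n)$ by \Cref{lemma:QJS-leq-traceDist}. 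Hence the identity map takes $\QSDP[\alpha',\beta']$ instances to $\QJSP\big[\,1-\binH(\tfrac{1-\alpha'}{2}),\,\beta'\,\big]$ instances, and it remains to check that for every $\alpha, \beta$ with $\alpha(n)\leq 1-2^{-n^{1/2-\epsilon}}$ and $\beta(n)\geq 2^{-n^{1/2-\epsilon}}$ and all sufficiently large $n$ we have $\alpha(n) \leq 1 - \binH(\tfrac{1-\alpha'(n)}{2})$ and $\beta(n) \geq \beta'(n)$. The second inequality is immediate: since $\epsilon'' < \epsilon$, we get $\beta'(n) = 2^{-n^{1/2-\epsilon''}} \leq 2^{-n^{1/2-\epsilon}} \leq \beta(n)$ for large $n$. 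The first reduces, using $\alpha(n) \leq 1-2^{-n^{1/2-\epsilon}}$, to the single estimate $\binH\!\big(2^{-n^{1/2-\epsilon''}-1}\big) \leq 2^{-n^{1/2-\epsilon}}$.

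That estimate is the only quantitative step, and I expect it to be the main (if routine) obstacle; everything else — the $\QJS_2$-versus-$\td$ inequalities, efficiency of the reduction, and closure of \QSZK{}-hardness under many-one reductions — is immediate. I would bound the binary entropy near zero by $\binH(p) \leq p\,(\log_2(1/p)+3/2)$ for $p\in(0,1/2]$, which follows from $-\ln(1-p)\leq p/(1-p)\leq 2p$. Writing $p = 2^{-t}$ with $t \coloneqq n^{1/2-\epsilon''}+1$, the claim becomes $(t+3/2)/2 \leq 2^{\,t-n^{1/2-\epsilon}}$; and since $\epsilon'' < \epsilon$, we have $t - n^{1/2-\epsilon} = n^{1/2-\epsilon''}+1-n^{1/2-\epsilon} \geq \tfrac12 n^{1/2-\epsilon''}$ for all large $n$, so the right-hand side grows like $2^{\Theta(n^{1/2-\epsilon''})}$ while the left-hand side is polynomial in $n$, and the inequality holds eventually. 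Assembling the pieces, the identity reduction witnesses that $\QJSP[\alpha,\beta]$ is \QSZK{}-hard for every $(\alpha,\beta)$ in the stated regime and all large enough $n$, which is the claim.
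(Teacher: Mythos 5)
Your proposal is correct and matches the paper's argument: an identity (Karp) reduction from $\QSDP\big[1-2^{-n^{1/2-\epsilon/2}},\,2^{-n^{1/2-\epsilon/2}}\big]$ (hard by \Cref{thm:QSD-is-QSZKhard}) to $\QJSP[\alpha,\beta]$, using \Cref{lemma:traceDist-leq-QJS} for \emph{yes} instances and \Cref{lemma:QJS-leq-traceDist} for \emph{no} instances. The only cosmetic difference is the bound on the binary entropy near zero (you use $\binH(p)\leq p(\log_2(1/p)+3/2)$ where the paper uses $\binH(x)\leq 2\sqrt{x}$), and both suffice for the required asymptotic estimate.
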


Following the approach for proving \SZK{}-hardness in \JSP{}~\cite[Lemma 4.3]{BDRV19}, we prove \Cref{lemma:QJSP-is-QSZK-hard} by utilizing inequalities between the trace distance and $\QJS_2$ (combining \Cref{lemma:traceDist-leq-QJS,lemma:QJS-leq-traceDist}), which mirror the inequalities between the statistical distance and the Jensen-Shannon divergence~\cite{FvdG99,Top00}. 

\begin{proof}[Proof of \Cref{lemma:QJSP-is-QSZK-hard}]
    Using \Cref{thm:QSD-is-QSZKhard}, it suffices to reduce $\QSDP\big[1-2^{-n^{1/2-\epsilon/2}},2^{-n^{1/2-\epsilon/2}}\big]$ to $\QJSP[\alpha,\beta]$, where $\alpha$ and $\beta$ will be specified later. 
    Consider quantum circuits $Q_0$ and $Q_1$ acting on $m(n)$ qubits, which is a \QSDP{} instance. We can obtain $\rho_i$ for $i\in\binset$ by performing $Q_i$ on $\ket{0^m}$ and tracing out the non-output qubits. This yields the following: 
    \begin{itemize}[topsep=0.33em, itemsep=0.33em, parsep=0.33em]
        \item If $\td(\rho_0,\rho_1) \geq 1-2^{-n^{1/2-\epsilon/2}}$, then \Cref{lemma:traceDist-leq-QJS} indicates that
        \begin{align*}
            \QJS_2(\rho_0,\rho_1) \geq 1-\binH\left(\frac{1-\td(\rho_0,\rho_1)}{2}\right)
            &\geq 1-\binH\left(2^{-n^{1/2-\epsilon/2}-1}\right)\\
            &\geq 1-2\cdot 2^{-(n^{1/2-\epsilon/2}+1)/2}\\
            &\geq \alpha(n),
        \end{align*}
        where the third inequality owes to $\binH(x)\leq 2\sqrt{x}$ for all $x\in[0,1]$. Then we choose a constant $n(\epsilon)$ such that the last inequality holds. Specifically, there exists a constant $n(\epsilon)$ such that $1-2\cdot 2^{-(n^{1/2-\epsilon/2}+1)/2} \geq 1-2^{-n^{1/2-\epsilon}}$ for all $n\geq n(\epsilon)$. 
        
        \item If $\td(\rho_0,\rho_1) \leq 2^{-n^{1/2-\epsilon/2}}$, then according to \Cref{lemma:QJS-leq-traceDist}, we have
        \[\QJS_2(\rho_0,\rho_1) \leq \td(\rho_0,\rho_1) \leq 2^{-n^{1/2-\epsilon/2}} \leq \beta(n).\]
        Here, the last inequality holds for any $n\geq n(\epsilon)$ since $\beta(n) \geq 2^{-n^{1/2-\epsilon/2}}$.
    \end{itemize}

    Therefore, by utilizing the same quantum circuits $Q_0$ and $Q_1$ and their corresponding states $\rho_0$ and $\rho_1$, we establish a Karp reduction from $\QSDP\big[1-2^{-n^{1/2-\epsilon/2}},2^{-n^{1/2-\epsilon/2}}\big]$ to $\QJSP[\alpha,\beta]$ for $n \geq n(\epsilon)$. 
\end{proof}

\paragraph{A simple \QSZK{}-hardness proof for \QEDP{}.}
Furthermore, we can establish a new and simple reduction from \QSDP{} to \QEDP{} via \QJSP{} by combining \Cref{lemma:QJSP-is-in-QSZK,lemma:QJSP-is-QSZK-hard}. This reduction leads to a simple \QSZK{}-hardness proof for \QEDP{}, as stated in \Cref{corr:simple-QSZKhardness-QEDP}. 
Now we present the detailed proof: 

\begin{proof}[Proof of \Cref{corr:simple-QSZKhardness-QEDP}]
Using \Cref{lemma:QJSP-is-QSZK-hard}, we obtain that $\QJSP[\alpha,\beta]$ is \QSZK{}-hard when $\alpha(n) \leq 1-2^{-n^{1/2-\epsilon}}$ and $\beta(n) \geq 2^{-n^{1/2-\epsilon}}$ for some $\epsilon\in(0,1/2)$ and $n \geq n(\epsilon)$. 
The hard instances for \QSDP{} (simultaneously hard for \QJSP{}), as specified in \Cref{lemma:QJSP-is-QSZK-hard}, consist of quantum circuits $Q_0$ and $Q_1$, acting on $m(n)$ qubits, that prepare a purification of $n$-qubit quantum states $\rho_0$ and $\rho_1$, respectively. 

Subsequently, by \Cref{lemma:QJSP-is-in-QSZK}, we construct quantum circuits $Q'_0$ and $Q'_1$ acting on $m'(n')=m(n)+3$ qubits, where $n' \coloneqq n+1$, preparing a purification of $n'$-qubit states 
$\rho'_0 = (p\ket{0}\bra{0}+(1-p)\ket{1}\bra{1})\otimes (\frac{1}{2}\rho_0 + \frac{1}{2}\rho_1)$ satisfying $\H_2(p) = 1-\frac{\ln{2}}{2}(\alpha+\beta)$ and $\rho'_1 = \frac{1}{2} \ket{0}\bra{0}\otimes \rho_0 + \frac{1}{2}\ket{1}\bra{1}\otimes \rho_1$, where $r'(n')=r(n)+1$. According to \Cref{lemma:QJSP-is-in-QSZK}, $\QEDP[g]$ is \QSZK{}-hard as long as
\[g(n') = \frac{\ln{2}}{2}(\alpha(n'-3) - \beta(n'-3)) 
\leq \frac{\ln{2}}{2} \big( 1-2^{-(n'-3)^{1/2-\epsilon}+1} \big).\]
\QSDP{} is thus Karp reducible to \QEDP{} by mapping $(Q_0,Q_1)$ to $(Q'_0,Q'_1)$. To finish the proof, we redefine $n \coloneqq n'$, replacing $n'$ with $n$ in the \QSZK{}-hardness condition for \QEDP{}. 
\end{proof}

\paragraph{\measQTDP{} and \QTDP{} are \QSZK{}-hard}
Next, we prove the \QSZK{}-hardness of both the \textsc{Measured Quantum Triangular Discrimination Problem} (\measQTDP{}) and the \textsc{Quantum Triangular Discrimination Problem} (\QTDP{}): 

\begin{lemma}[\measQTDP{} and \QTDP{} are \QSZK{}-hard]
    \label{lemma:measQTDP-is-QSZK-hard}
    Let $\alpha(n)$ and $\beta(n)$ be efficiently computable functions, there exists a constant $\epsilon\in(0,1/2)$ such that
    \[\measQTDP[\alpha,\beta] \text{ and } \QTDP[\alpha,\beta] \text{ are } \QSZK{}\text{-hard},\] 
    when $\alpha(n)\leq 1-2^{-n^{1/2-\epsilon}}$ and $\beta(n) \geq 2^{-n^{1/2-\epsilon}}$ for sufficiently large $n$. 
\end{lemma}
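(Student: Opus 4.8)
The plan is to imitate the proof of \Cref{lemma:QJSP-is-QSZK-hard} almost verbatim, substituting the inequalities between the trace distance and $\QJS_2$ by the chain $\td^2(\rho_0,\rho_1) \leq \measQTD(\rho_0,\rho_1) \leq \QTD(\rho_0,\rho_1) \leq \td(\rho_0,\rho_1)$ from \Cref{thm:QTD-vs-td}. Fixing the constant $\epsilon\in(0,1/2)$, I would start from \Cref{thm:QSD-is-QSZKhard} and reduce $\QSDP\big[1-2^{-n^{1/2-\epsilon/2}},2^{-n^{1/2-\epsilon/2}}\big]$ to $\measQTDP[\alpha,\beta]$ (and likewise to $\QTDP[\alpha,\beta]$) via the \emph{identity} map on circuits: given a \QSDP{} instance $(Q_0,Q_1)$ preparing $\rho_0,\rho_1$, output the same pair $(Q_0,Q_1)$, now read as an instance of $\measQTDP$ (resp.\ $\QTDP$). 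Since $\epsilon/2\in(0,1/2)$ and $\alpha^2(n)-\beta(n)\to 1 \geq 1/O(\log n)$ for these endpoints, \Cref{thm:QSD-is-QSZKhard} guarantees the source problem is \QSZK{}-hard.

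For the \emph{yes}-instance analysis I would use the weakest link of the chain, $\td^2 \leq \measQTD \leq \QTD$: if $\td(\rho_0,\rho_1)\geq 1-2^{-n^{1/2-\epsilon/2}}$ then
\[
  \measQTD(\rho_0,\rho_1) \;\geq\; \td^2(\rho_0,\rho_1) \;\geq\; \big(1-2^{-n^{1/2-\epsilon/2}}\big)^2 \;\geq\; 1-2^{-n^{1/2-\epsilon/2}+1},
\]
and the same lower bound holds for $\QTD(\rho_0,\rho_1)$. It then suffices to choose a threshold $n(\epsilon)$ so that $1-2^{-n^{1/2-\epsilon/2}+1}\geq 1-2^{-n^{1/2-\epsilon}}\geq\alpha(n)$ for all $n\geq n(\epsilon)$, which is possible because $n^{1/2-\epsilon/2}-n^{1/2-\epsilon}\to\infty$. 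For \emph{no} instances I would use the other end, $\measQTD \leq \QTD \leq \td$: if $\td(\rho_0,\rho_1)\leq 2^{-n^{1/2-\epsilon/2}}$ then both $\measQTD(\rho_0,\rho_1)$ and $\QTD(\rho_0,\rho_1)$ are at most $2^{-n^{1/2-\epsilon/2}}\leq 2^{-n^{1/2-\epsilon}}\leq\beta(n)$ for every $n\geq 1$, so the promise is preserved. Hence the identity map on $(Q_0,Q_1)$ is a Karp reduction.

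I do not expect a genuine obstacle here; the only delicate bookkeeping is the handling of the two exponents. Because the reduction on \emph{yes} instances loses a quadratic factor (passing from $\td$ to $\measQTD\geq\td^2$), the source \QSDP{} instance must be taken with the smaller deficit $1-2^{-n^{1/2-\epsilon/2}}$ so that squaring still leaves completeness above $\alpha(n)\leq 1-2^{-n^{1/2-\epsilon}}$; and one must check that these source parameters still meet the hypotheses of \Cref{thm:QSD-is-QSZKhard} with the constant $\epsilon/2\in(0,1/2)$ (which they do, since $\alpha^2-\beta$ stays bounded below by a constant and the endpoints sit in the prescribed exponential window). Since the construction does not touch the circuits, it is trivially polynomial time, and the argument applies simultaneously to $\measQTDP$ and $\QTDP$, completing the proof.
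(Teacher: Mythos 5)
Your proposal is correct and follows essentially the same route as the paper: reduce from $\QSDP\big[1-2^{-n^{1/2-\epsilon/2}},2^{-n^{1/2-\epsilon/2}}\big]$ (which is \QSZK{}-hard by \Cref{thm:QSD-is-QSZKhard}) via the identity map on circuits, using $\td^2 \leq \measQTD \leq \QTD$ for \emph{yes} instances and $\measQTD \leq \QTD \leq \td$ for \emph{no} instances from \Cref{thm:QTD-vs-td}, with the same choice of threshold $n(\epsilon)$. The parameter bookkeeping matches the paper's as well, so there is nothing to add.
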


The proof parallels the approach to show the \SZK{}-hardness for \TDP{}~\cite[Lemma 4.4]{BDRV19}. We employ the inequalities between the trace distance and $\measQTD$ presented in \Cref{thm:QTD-vs-td}, analogous to the inequalities between the counterpart classical distances in~\cite{Top00}. 

\begin{proof}[Proof of \Cref{lemma:measQTDP-is-QSZK-hard}]
    Since the inequalities between the trace distance and \QTD{} coincides with those of $\measQTD$, we focus on proving that \measQTDP{} is \QSZK{}-hard in the desired regime. The proof can then be straightforwardly extended to the \QTDP{} case.
    
    By \Cref{thm:QSD-is-QSZKhard}, it suffices to reduce $\QSDP\big[1-2^{-n^{1/2-\epsilon/2}},2^{-n^{1/2-\epsilon/2}}\big]$ to $\measQTDP[\alpha,\beta]$, where $\alpha$ and $\beta$ will be specified later. 
    Consider quantum circuits $Q_0$ and $Q_1$ acting on $m(n)$ qubits, which is a \QSDP{} instance. 
    We can obtain $n$-qubit quantum states $\rho_i$ for $i\in\binset$ by performing $Q_i$ on $\ket{0^m}$ and tracing out the non-output qubits. This yields the following: 
    \begin{itemize}[topsep=0.33em, itemsep=0.33em, parsep=0.33em]
        \item If $\td(\rho_0,\rho_1) \geq 1-2^{-n^{1/2-\epsilon/2}}$, then \Cref{lemma:QSDsquare-leq-QTD} indicates that
        \[\measQTD(\rho_0,\rho_1) \geq \td(\rho_0,\rho_1)^2 \geq \left(1-2^{-n^{1/2-\epsilon/2}}\right)^2 
        \geq 1-2^{-n^{1/2-\epsilon/2}+1} \geq \alpha(n).\]
        We can choose a constant $n(\epsilon)$ such that $1-2^{-n^{1/2-\epsilon/2}+1} \geq 1-2^{-n^{1/2-\epsilon}}$ for all $n\geq n(\epsilon)$.

        \item If $\td(\rho_0,\rho_1) \leq 2^{-n^{1/2-\epsilon/2}}$, then  according to \Cref{lemma:QTD-leq-QSD} and \Cref{prop:measQTD-leq-QTD}, we have
        \[\measQTD(\rho_0,\rho_1) \leq \td(\rho_0,\rho_1) \leq 2^{-n^{1/2-\epsilon/2}} \leq \beta(n).\] 
        Here, the last inequality holds for any $n\geq n(\epsilon)$ because $\beta(n) \geq 2^{-n^{1/2-\epsilon/2}}$.
    \end{itemize}

    Therefore, by employing the same quantum circuits $Q_0$ and $Q_1$ and their corresponding states $\rho_0$ and $\rho_1$, we establish a Karp reduction from $\QSDP\big[1-2^{-n^{1/2-\epsilon/2}},2^{-n^{1/2-\epsilon/2}}\big]$ to $\measQTDP[\alpha,\beta]$ for $n \geq n(\epsilon)$. 
\end{proof}


\section{Easy regimes for the class \QSZK{}}
\label{sec:easy-regimes-QSZK}

We begin with the main results in this section: 

\begin{theorem}[Easy regimes for \QSZK{}]
\label{thm:easy-regimes-for-QSZK}
For any efficiently computable functions $\alpha$ and $\beta$, we have the following easy regimes for \QSZK{} in terms of \coQSDP{}: 
\begin{enumerate}[label={\upshape(\roman*)}, topsep=0.33em, itemsep=0.33em, parsep=0.33em]
    \item \label{thmitem:QSZK-easy-regimes-PP} $\coQSDP[\alpha,\beta]$ is in \PP{} when $1-2^{-n/2-1} \leq \alpha(n) \leq 1$ and $0 \leq \beta(n) \leq 2^{-n/2-1}$. 
    \item \label{thmitem:QSZK-easy-regimes-NQP} $\coQSDP[1,0]$ is in \NQP{}. 
\end{enumerate}
\end{theorem}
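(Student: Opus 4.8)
The plan is to derive both containments from the SWAP test~\cite{BCWdW01}, which on a pair $(\sigma,\tau)$ of states on the same register passes with probability $\tfrac12+\tfrac12\Tr(\sigma\tau)$, together with the observation $\tfrac12\HS^2(\rho_0,\rho_1)=\tfrac12\rbra*{\Tr(\rho_0^2)+\Tr(\rho_1^2)}-\Tr(\rho_0\rho_1)$ and two Schatten-norm facts: $\HS^2(\rho_0,\rho_1)\le 4\td^2(\rho_0,\rho_1)$, and, by Cauchy--Schwarz applied to the at most $2^n$ eigenvalues of $\rho_0-\rho_1$, $\HS^2(\rho_0,\rho_1)\ge 4\td^2(\rho_0,\rho_1)/\rank(\rho_0-\rho_1)\ge 4\td^2(\rho_0,\rho_1)/2^n$.

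For part~\ref{thmitem:QSZK-easy-regimes-PP}, I would use the following hybrid procedure: flip a fair coin; on heads, sample $b\in\binset$ uniformly, run $Q_b$ twice to obtain two independent copies of $\rho_b$, perform the SWAP test on them, and \emph{reject} iff it passes; on tails, prepare one copy each of $\rho_0$ and $\rho_1$, perform the SWAP test on them, and \emph{accept} iff it passes. Expanding with the SWAP-test acceptance probability and the $\HS^2$ identity, the overall acceptance probability of this procedure equals exactly $\tfrac12-\tfrac18\HS^2(\rho_0,\rho_1)$. Feeding in the promise via the Schatten bounds: a \textit{yes} instance has $\td\le\beta\le 2^{-n/2-1}$, so $\HS^2\le 4\beta^2\le 2^{-n}$ and the acceptance probability is $\ge\tfrac12-2^{-n-3}$; a \textit{no} instance has $\td\ge\alpha\ge 1-2^{-n/2-1}$, so $\HS^2\ge 4\alpha^2/2^n\ge 4(1-2^{-n/2})/2^n$ (using $(1-x)^2\ge 1-2x$) and the acceptance probability is $\le\tfrac12-2^{-n-1}+2^{-3n/2-1}$; hence for all sufficiently large $n$ the two probability windows are separated by at least $2^{-n-3}$ (the remaining constantly many values of $n$ retain a constant promise gap on $\td$ and are decidable in $\BQP\subseteq\PP$ by SWAP-test amplitude estimation). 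Since the hybrid is a polynomial-size quantum circuit, the standard translation of quantum acceptance probabilities into \textsf{GapP} functions (the $\BQP\subseteq\PP$ argument, cf.~\cite{ADH97}) makes $2^{q(n)}$ times this acceptance probability a \textsf{GapP} function for a suitable polynomial $q$; subtracting from it the polynomial-time-computable integer $\lfloor 2^{q(n)}(\tfrac12-2^{-n-3})\rfloor$, shifted by one unit so that the inequality is strict on \textit{yes} instances, yields a \textsf{GapP} function that is positive on \textit{yes} instances and nonpositive on \textit{no} instances. This is exactly the \textsf{GapP}-certificate characterization of $\PP$ for promise problems, so $\coQSDP[\alpha,\beta]\in\PP$.

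For part~\ref{thmitem:QSZK-easy-regimes-NQP}, observe that an instance of $\coQSDP[1,0]$ is a \textit{yes} instance iff $\td(\rho_0,\rho_1)=0$ (that is, $\rho_0=\rho_1$) and a \textit{no} instance iff $\td(\rho_0,\rho_1)=1$ (that is, $\supp{\rho_0}\perp\supp{\rho_1}$). For positive semidefinite $\rho_0,\rho_1$ one has $\Tr(\rho_0\rho_1)=\norm{\rho_1^{1/2}\rho_0^{1/2}}_2^2\ge 0$, which vanishes precisely when the supports are orthogonal, whereas $\Tr(\rho_0\rho_1)=\Tr(\rho_0^2)>0$ when $\rho_0=\rho_1$; therefore an instance lies in $\coQSDP[1,0]$ iff $\Tr(\rho_0\rho_1)\ne 0$. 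Since $\Tr(\rho_0\rho_1)=2p-1$ where $p$ is the acceptance probability of the SWAP test on $(\rho_0,\rho_1)$ — an affine image of the acceptance probability of a polynomial-size quantum circuit — the same \textsf{GapP} translation shows that $g(x):=2^{q(n)}\Tr(\rho_0\rho_1)$ is a \textsf{GapP} function; it is nonnegative everywhere and equals $0$ exactly on the \textit{no} instances, so $x\in\coQSDP[1,0]\iff g(x)\ne 0$. This is the defining property of $\coCeP$, and as $\NQP=\coCeP$~\cite{FGHP99,YY99}, we conclude $\coQSDP[1,0]\in\NQP$.

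The bulk of the difficulty lies in part~\ref{thmitem:QSZK-easy-regimes-PP}: one has to verify that after substituting parameters as extreme as $\alpha\approx 1$ and $\beta\approx 0$ the rank-dependent inequality $\HS^2\ge 4\td^2/2^n$ still leaves a genuine $2^{-\poly(n)}$ (essentially a factor-$4$) gap between the \textit{yes} and \textit{no} values of $\HS^2$, and that the comparison threshold $\tfrac12-2^{-n-3}$ is efficiently computable and can be beaten strictly by a \textsf{GapP} function — the boundary case where $\HS^2\le 4\td^2$ is tight is what forces the one-unit shift. Part~\ref{thmitem:QSZK-easy-regimes-NQP} is then comparatively routine, the only genuine point being the recognition that $\Tr(\rho_0\rho_1)$ is the quantity whose vanishing exactly detects orthogonality of the supports.
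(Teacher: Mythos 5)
Your proof is correct. Part~\ref{thmitem:QSZK-easy-regimes-PP} is essentially the paper's own argument: the identical hybrid SWAP-test procedure with acceptance probability exactly $\tfrac12-\tfrac18\HSsquare(\rho_0,\rho_1)$, rank-dependent inequalities between $\HS$ and $\td$ (your Cauchy--Schwarz bounds are equivalent, up to a factor of~$2$, to the inequalities from~\cite{CCC19} that the paper quotes), and the conclusion via the \textsf{GapP} characterization of \PP{}, which is just the $\PreciseBQP\subseteq\PP$ step the paper cites spelled out.

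Part~\ref{thmitem:QSZK-easy-regimes-NQP} reaches the same conclusion by a genuinely different route. The paper builds an \emph{explicit} \NQP{} verifier: it runs the SWAP test as a unitary, attaches a Hadamarded flag qubit so that the amplitude on the good subspace is $\sqrt{p/2}$ with $p=\tfrac12(1+\Tr(\rho_0\rho_1))$, and applies a single Grover iteration (exact amplitude amplification). Because $p=\tfrac12$ exactly on \textit{no} instances, the rotation angle is exactly $\pi/6$ and one iteration drives the rejection probability to $\sin^2(\pi/2)=1$, giving perfect soundness by construction; on \textit{yes} instances the key observation $\Tr(\rho_0\rho_1)\geq 2^{-n}$ keeps the acceptance probability at least $2^{-2n-2}>0$. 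You instead observe that $2^{q(n)}\Tr(\rho_0\rho_1)$ is a \textsf{GapP} function (being an integer affine image of the SWAP-test acceptance probability) that vanishes exactly on \textit{no} instances, and invoke $\NQP=\coCeP$. Your argument is shorter, avoids the trigonometric bookkeeping, and does not even need the lower bound $\Tr(\rho_0\rho_1)\geq 2^{-n}$ (only positivity); what the paper's construction buys is a self-contained quantum verifier with one-sided error, independent of the $\coCeP$ characterization, and the explicit inverse-exponential completeness bound. Both are valid; the algebraic facts you rely on ($\NQP=\coCeP$ and the \textsf{GapP} translation of acceptance probabilities) are exactly the ones the paper already cites from~\cite{FGHP99,YY99,ADH97}.
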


\Cref{thm:easy-regimes-for-QSZK} aligns with classical counterparts in terms of \SZK{}. 
In particular, \Cref{thm:easy-regimes-for-QSZK}\ref{thmitem:QSZK-easy-regimes-PP} is a quantum analog of~\cite[Theorem 7.1]{BCHTV19} which states that $\overline{\SDP}$ with some inverse-exponential errors is in \PP{}. Meanwhile, \Cref{thm:easy-regimes-for-QSZK}\ref{thmitem:QSZK-easy-regimes-NQP} parallels a folklore result that  $\overline{\SDP}$ without error is in \NP{}, as \NQP{} can be regarded as a quantum analog of \NP{}.

Furthermore, \Cref{thm:easy-regimes-for-QSZK}\ref{thmitem:QSZK-easy-regimes-PP} suggests that achieving a \textit{dimension-preserving} polarization for the \textsc{Quantum State Distinguishability Problem} (\QSDP{}) demands non-black-box techniques due to the existing oracle separation~\cite{BCHTV19}. This is because the existence of such a polarization would imply, by \Cref{thm:easy-regimes-for-QSZK}\ref{thmitem:QSZK-easy-regimes-PP}, that $\QSZK \subseteq \PP$. 

\subsection{\coQSDP{} without error is in \NQP{}}

As a prelude to \Cref{thm:easy-regimes-for-QSZK}\ref{thmitem:QSZK-easy-regimes-PP}, we will first establish \Cref{thm:easy-regimes-for-QSZK}\ref{thmitem:QSZK-easy-regimes-NQP}. Specifically, by making a crucial observation involving $\td(\rho_0,\rho_1)$ and $\Tr(\rho_0\rho_1)$, we can devise a unitary quantum algorithm $\calA$ based on the SWAP test (\Cref{lemma:swap-test}), which was  originally proposed for pure states in~\cite{BCWdW01} and later shown to be applicable to mixed states~\cite{KMY09}: 

\begin{lemma}[SWAP test for mixed states, adapted from~{\cite[Proposition 9]{KMY09}}]
\label{lemma:swap-test}
    Let $\rho_0$ and $\rho_1$ be two $n$-qubit quantum states, which may be mixed. There exists a $(2n+1)$-qubit quantum circuit that outputs $0$ with probability $\rbra*{1+\Tr(\rho_0\rho_1)}/2$, using a single copy of each $\rho_0$ and $\rho_1$ and $O(n)$ one- and two-qubit elementary quantum gates. 
\end{lemma}

The acceptance probability of $\calA$ is at least slightly higher than $1/2$ for \textit{yes} instances, while exactly $1/2$ for \textit{no} instances. 
We then apply exact amplitude amplification (\Cref{lemma:exact-amplitude-amplification}) on $\calA$ to construct another algorithm $\calA'$ that achieves one-sided error.

\begin{lemma} [Exact amplitude amplification, adapted from {\cite[Equation 8]{BHMT02}}]
\label{lemma:exact-amplitude-amplification}
    Suppose $U$ is a unitary operator such that $U\ket{\bar 0} = \sin(\theta) \ket{\psi_0} + \cos(\theta) \ket{\psi_1}$, where $\ket{\psi_0}$ and $\ket{\psi_1}$ are normalized pure states and $\innerprod{\psi_0}{\psi_1} = 0$.
    Let $G  \coloneqq  - U (I - 2\ket{\bar 0}\bra{\bar 0}) U^\dag (I - 2\ket{\psi_0}\bra{\psi_0})$ be the Grover operator. Then, for every integer $j \geq 0$, we have
    \[G^j U \ket{\bar 0} = \sin((2j+1)\theta) \ket{\psi_0} + \cos((2j+1)\theta) \ket{\psi_1}.\]
    Specifically, with a single iteration of $G$, we get $G U \ket{\bar 0} = \sin(3\theta) \ket{\psi_0} + \cos(3\theta) \ket{\psi_1}$.
\end{lemma}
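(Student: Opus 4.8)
The plan is to reduce everything to a computation inside the two-dimensional subspace $P \coloneqq \mathrm{span}\{\ket{\psi_0},\ket{\psi_1}\}$, on which $G$ will turn out to act as a plane rotation by angle $2\theta$; the stated identity then follows by a one-line induction on $j$ using the angle-addition formulas for sine and cosine.

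First I would rewrite the two reflections in $G$ in a form adapted to $P$. Setting $\ket{\Psi} \coloneqq U\ket{\bar 0} = \sin(\theta)\ket{\psi_0} + \cos(\theta)\ket{\psi_1}$, we have $U(I - 2\ket{\bar 0}\bra{\bar 0})U^\dagger = I - 2\ket{\Psi}\bra{\Psi}$, so
\[
G = -\bigl(I - 2\ket{\Psi}\bra{\Psi}\bigr)\bigl(I - 2\ket{\psi_0}\bra{\psi_0}\bigr).
\]
Since $\ket{\psi_0},\ket{\Psi} \in P$, both reflections map $P$ into $P$ (and the global sign does too), so $P$ is $G$-invariant; as $U\ket{\bar 0} = \ket{\Psi} \in P$, every iterate $G^j U\ket{\bar 0}$ stays in $P$, and it suffices to understand $G$ restricted to $P$.

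Next I would compute $G|_P$ in the orthonormal basis $(\ket{\psi_0},\ket{\psi_1})$ — legitimate because $\innerprod{\psi_0}{\psi_1} = 0$ and all relevant amplitudes are real. In this basis $I - 2\ket{\psi_0}\bra{\psi_0}$ has matrix $\mathrm{diag}(-1,1)$, while, using the double-angle identities,
\[
I - 2\ket{\Psi}\bra{\Psi} \;\longleftrightarrow\; \begin{pmatrix} \cos 2\theta & -\sin 2\theta \\ -\sin 2\theta & -\cos 2\theta \end{pmatrix}.
\]
Multiplying and negating gives
\[
G|_P \;\longleftrightarrow\; \begin{pmatrix} \cos 2\theta & \sin 2\theta \\ -\sin 2\theta & \cos 2\theta \end{pmatrix},
\]
which, by angle addition, sends $\sin(\phi)\ket{\psi_0} + \cos(\phi)\ket{\psi_1}$ to $\sin(\phi + 2\theta)\ket{\psi_0} + \cos(\phi + 2\theta)\ket{\psi_1}$ for every $\phi$. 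Since $U\ket{\bar 0}$ is the case $\phi = \theta$, an induction on $j$ (base case $j = 0$ immediate; inductive step one application of the displayed identity with $\phi = (2j+1)\theta$) yields $G^j U\ket{\bar 0} = \sin((2j+1)\theta)\ket{\psi_0} + \cos((2j+1)\theta)\ket{\psi_1}$, and specializing to $j = 1$ gives the last display.

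I do not expect a genuine obstacle here — the argument is routine. The only points needing mild care are keeping track of the global minus sign in the definition of $G$ (it is precisely what turns the product of the two reflections into the rotation by $+2\theta$, so that the angle grows with $j$), and noting that all the amplitudes in play are real so that the $2\times 2$ real-matrix computation above is justified; both are immediate from the hypotheses.
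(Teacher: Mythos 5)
Your proof is correct. Note that the paper does not actually prove this lemma — it is imported verbatim from \cite[Equation 8]{BHMT02} as a known result — so there is no in-paper argument to compare against; your derivation (both reflections preserve the plane $\mathrm{span}\{\ket{\psi_0},\ket{\psi_1}\}$, the product of the two reflections composed with the global sign is a rotation by $2\theta$ in that plane, then induct) is exactly the standard argument behind the cited equation, and your $2\times 2$ matrix computations check out, including the sign bookkeeping that makes the angle advance by $+2\theta$ per iteration.
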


\begin{proof}[Proof of \Cref{thm:easy-regimes-for-QSZK}\ref{thmitem:QSZK-easy-regimes-NQP}]
For any states $\rho_0$ and $\rho_1$, we can observe the following: 
\begin{itemize}[topsep=0.33em, itemsep=0.33em, parsep=0.33em]
    \item For \textit{yes} instances where $\td(\rho_0,\rho_1)=0$, we have $\rho_0=\rho_1$ due to the trace distance being a metric. This equality leads to $\Tr(\rho_0\rho_1)\geq 2^{-n}$, with equality achieved when both $\rho_0$ and $\rho_1$ correspond to the maximally mixed state $2^{-n} I_n$, where $I_n$ denotes the identity matrix on $n$ qubits.
    \item For \textit{no} instances where $\td(\rho_0,\rho_1)=1$, we know that $\rho_0$ and $\rho_1$ have orthogonal supports because of the triangle inequality, leading to $\Tr(\rho_0\rho_1)=0$. 
\end{itemize}

\vspace{0.75em}
\noindent\textbf{Unitary construction using the SWAP test.} 
We utilize the SWAP test to test the closeness of quantum (mixed) states $\rho_0$ and $\rho_1$. Our approach involves a single-qubit quantum register $\sfC$, along with quantum registers $\sfA=(\sfA_0,\sfA_1)$ and $\sfS=(\sfS_0,\sfS_1)$, all initialized to the state $\ket{0}$. Subsequently, we apply state-preparation circuits $Q_i$ on registers $\sfA_i$ and $\sfS_i$ for $i \in \binset$. Then, we perform the SWAP test on registers $\sfC$, $\sfS_0$, and $\sfS_1$, where $\sfC$ serves as the control qubit. Leveraging \Cref{lemma:swap-test} (the SWAP test), we obtain the following unitary (i.e., algorithm $\calA$):
\begin{equation}
    \label{eq:NQP-unitary}
    U\ket{0}_{\sfC}\ket{\bar{0}}_{\sfA,\sfS} = \sqrt{p} \ket{0}_{\sfC} \ket{\psi_0}_{\sfA,\sfS} + \sqrt{1-p} \ket{1}_{\sfC} \ket{\psi_1}_{\sfA,\sfS}, \text{ where } p=\frac{1+\Tr(\rho_0\rho_1)}{2}.
\end{equation}

Next, we introduce another single-qubit register $\sfF$, initialized to zero, leading to:
\begin{equation}
    \label{eq:NQP-unitary-with-flag-qubit}
    \begin{aligned}
    &(H\otimes U) \ket{0}_{\sfF}\ket{0}_{\sfC}\ket{\bar{0}}_{\sfA,\sfS}\\
    =&\sum\nolimits_{k_0\in\binset}\sqrt{\frac{p}{2}} \ket{0}_{\sfF} \ket{k_0}_{\sfC} \ket{\psi_0}_{\sfA,\sfS}
    + \sum\nolimits_{k_1\in\binset}\sqrt{\frac{1-p}{2}} \ket{1}_{\sfF} \ket{k_1}_{\sfC} \ket{\psi_1}_{\sfA,\sfS}\\
     \coloneqq & \sqrt{\frac{p}{2}} \ket{0}_{\sfF} \ket{0}_{\sfC} \ket{\psi_0}_{\sfA,\sfS} + \sqrt{1-\frac{p}{2}} \ket{\perp}_{\sfF,\sfC,\sfA,\sfS}.
    \end{aligned}
\end{equation}

\vspace{0.75em}
\noindent\textbf{Making the error one-sided through exact amplitude amplification.}
Now we devise a one-sided error algorithm $\calA'$ by utilizing $\calA$ as a building block.
Let us consider the Grover operator $G \coloneqq  -(H\otimes U)(I-2\ket{\bar{0}}\bra{\bar{0}}_{\sfF,\sfC,\sfA,\sfS})(H\otimes U^{\dagger})(I-2\Pi_0)$ where $\Pi_0$ is the projector onto the subspace spanned by $\{\ket{0}_{\sfF}\ket{0}_{\sfC}\ket{\phi}_{\sfA,\sfS}\}$ over all $\ket{\phi}$. 
By utilizing the exact amplitude amplification (\Cref{lemma:exact-amplitude-amplification}), we know that $G(H\otimes U)\ket{0}_{\sfF}\ket{0}_{\sfC}\ket{\bar{0}}_{\sfA,\sfS} = \sin(3\theta) \ket{0}_{\sfF}\ket{0}_{\sfC}\ket{\psi_0}_{\sfA,\sfS} + \cos(3\theta) \ket{\perp}_{\sfF,\sfC,\sfA,\sfS}$ where $\theta \in [0,\pi/4]$. 
According to \Cref{eq:NQP-unitary-with-flag-qubit}, $p$ satisfies $\sin^2{\theta} = p/2$. 
Let $x_{\sfF}$ and $x_{\sfC}$ be the measurement outcomes of the registers $\sfF$ and $\sfC$, respectively, after a single iteration of $G$. 
The resulting algorithm $\calA'$ rejects if $x_{\sfF} = x_{\sfC} = 0$; otherwise, it accepts. 
Therefore, the acceptance probability of $\calA'$ is $p_{\rm acc} = 1-\Pr{x_{\sfF} = x_{\sfC} = 0}$ where $\Pr{x_{\sfF} = x_{\sfC} = 0}$ satisfies: 
\begin{equation}
    \label{eq:NQP-acceptance-probability}
    \begin{aligned}
    \Pr{x_{\sfF} = x_{\sfC} = 0} 
    = \sin^2(3\theta)
    = \sin^6{\theta} - 6\cos^2{\theta} \sin^4{\theta} + 9 \cos^4{\theta} \sin^2{\theta}
    = 2p^3 - 6p^2 + \frac{9}{2}p
\end{aligned}
\end{equation}
Finally, we complete the analysis of $\calA'$ as follows: 
\begin{itemize}[topsep=0.33em, itemsep=0.33em, parsep=0.33em]
    \item For \textit{yes} instances, we can plug $\Tr(\rho_0\rho_1) \geq 2^{-n}$ into \Cref{eq:NQP-unitary}, which implies $p \geq \frac{1}{2}+2^{-n-1}$. Noting that $2p^3 - p^2+\frac{9}{2}p \leq 1- (p-\frac{1}{2})^2$ for any $0\leq p \leq 1$, together with \Cref{eq:NQP-acceptance-probability}, we obtain $p_{\rm acc} \geq (p-\frac{1}{2})^2 \geq 2^{-2n-2}$.
    \item For \textit{no} instances, we can set $\Tr(\rho_0\rho_1)=0$ in \Cref{eq:NQP-unitary}, resulting in $p=\frac{1}{2}$ and $\theta = \frac{\pi}{6}$. Following \Cref{eq:NQP-acceptance-probability}, we know that $\calA'$ rejects with certainty, namely $p_{\rm acc}=0$.
\end{itemize}
We thus conclude that $\calA'$ is an \NQP{} algorithm as desired. 
\end{proof}

As mentioned earlier, \Cref{thm:easy-regimes-for-QSZK}\ref{thmitem:QSZK-easy-regimes-NQP} has a classical counterpart, namely $\overline{\SDP}[1,0]$ is in \NP{}. The proof of this folklore result is outlined below: 
We define the collision distance between $p_0$ and $p_1$ as $\Col(p_0,p_1) \coloneqq \sum_{x} p_0(x)p_1(x)$. 
It follows that $\Col(p_0,p_1)=0$ if $\SD(p_0,p_1)=1$. Conversely, when $\SD(p_0,p_1)=0$, we have $\Col(p_0,p_1) \geq 1/|\supp{p_0}\cap\supp{p_1}|$, with equality occurring when $p_0$ and $p_1$ are uniform on $\supp{p_0}\cap\supp{p_1}$.
This observation suffices for establishing the \NP{} containment of $\overline{\SDP}[1,0]$.\footnote{In particular, note that there exists $x\in\supp{p_0}\cup\supp{p_1}$ for $\SD(p_0,p_1)=0$, then the prover could provide the corresponding $w_0$ and $w_1$ as a witness such that $C_0(w_0)=C_1(w_1)=x$. Additionally, such a witness does not exist for \textit{no} instances, i.e.,  $\SD(p_0,p_1)=1$.}

\subsection{\coQSDP{} with some inverse-exponential errors is in \PP{}}
\label{subsec:negl-coQSDP-in-PP}

The crucial insight for comprehending the \PP{} containment of \coQSDP{} with tinily errors is given by the following expression 
\[\frac{1}{2}\HSsquare(\rho_0,\rho_1) = \frac{1}{2} \Tr(\rho_0-\rho_1)^2 = \frac{1}{2} \big( \Tr(\rho^2_0)+\Tr(\rho^2_1) \big) - \Tr(\rho_0\rho_1).\]

It is noteworthy that by employing the SWAP test, one can estimate these three terms: $\Tr(\rho^2_0)$, $\Tr(\rho^2_1)$, and $\Tr(\rho_0\rho_1)$. This estimation enables the development of a hybrid algorithm. 
Subsequently, we proceed to establish \Cref{thm:easy-regimes-for-QSZK}\ref{thmitem:QSZK-easy-regimes-PP}, which can be viewed as the quantum counterpart of~\cite[Theorem 7.1]{BCHTV19}.

\begin{proof}[Proof of \Cref{thm:easy-regimes-for-QSZK}\ref{thmitem:QSZK-easy-regimes-PP}]
    Consider two $n$-qubit quantum states, denoted as $\rho_0$ and $\rho_1$, defined in a finite-dimensional Hilbert space $\mathcal{H}$ according to \Cref{def:QSDP}. 
    We begin with the inequalities between the trace distance and the Hilbert-Schmidt distance~\cite[Equation 6]{CCC19}:
    \begin{equation}
        \label{eq:traceDist-vs-HS}
        \frac{1}{\sqrt{2}} \HS(\rho_0,\rho_1) \leq \td(\rho_0,\rho_1) \leq  \sqrt{\frac{\rank(\rho_0)\rank(\rho_1)}{\rank(\rho_0)+\rank(\rho_1)}} \HS(\rho_0,\rho_1) \leq \frac{\sqrt{\dim\calH}}{\sqrt{2}} \HS(\rho_0,\rho_1). 
    \end{equation}
    
    We present a hybrid classical-quantum algorithm $\calA$ as follows. First, we toss two random coins that the outcomes denoted as $r_1$ and $r_2$. Subsequently, we apply the SWAP test (\Cref{lemma:swap-test}) on the corresponding states in the following manner:
    \begin{itemize}[topsep=0.33em, itemsep=0.33em, parsep=0.33em]
        \item If the first coin lands on heads ($r_1 = 1$), we perform the SWAP test on $\rho_0$ and $\rho_1$. We accept if the final measurement outcome is $0$. 
        \item If the first coin lands on tails ($r_1 = 0$), we perform the SWAP test on two copies of $\rho_{r_2}$. We accept if the final measurement outcome is $1$. 
    \end{itemize}
    
    Let $p^{(o)}_{\SWAP}(\rho_0,\rho_1)$ be the probability of the SWAP test on $\rho_0$ and $\rho_1$ where the final measurement outcome $o$. 
    We then obtain the acceptance probability of our algorithm $\calA$: 
    \begin{equation}
        \label{eq:SWAP-test-HSdistance}
        \begin{aligned}
            \frac{1}{2}p^{(0)}_{\SWAP}(\rho_0,\rho_1) + \frac{1}{2} \sum_{i\in\binset} \frac{p^{(1)}_{\SWAP}(\rho_i,\rho_i)}{2}
            &=\frac{1+\Tr(\rho_0\rho_1)}{4}+ \sum_{i\in\binset} \frac{1-\Tr(\rho^2_i)}{8}\\
            &=\frac{1}{2} - \frac{\HSsquare(\rho_0,\rho_1)}{8}.
        \end{aligned}
    \end{equation}
    
    It suffices to show that algorithm $\calA$ is indeed a \PP{} containment distinguishing \textit{yes} instances from \textit{no} instances within an inverse-exponential gap. 
    Combining \Cref{eq:traceDist-vs-HS,eq:SWAP-test-HSdistance}, we then analyze the acceptance probability:
    \begin{itemize}
        \item For \textit{yes} instances, noting that $\td(\rho_0,\rho_1) \leq 2^{-n/2-1}$, the following holds:        
        \[p^{\sf (Y)}_{\calA}(\rho_0,\rho_1) = \frac{1}{2}-\frac{1}{4}\left(\frac{\HS(\rho_0,\rho_1)}{\sqrt{2}}\right)^2 \geq \frac{1}{2}- \frac{\td^2(\rho_0,\rho_1)}{4} \geq \frac{1}{2}-2^{-n-4}.\]
        \item For \textit{no} instances, noticing that $\td(\rho_0,\rho_1) \geq 1-2^{-n/2-1}$, it holds that:
        \[p^{\sf (N)}_{\calA}(\rho_0,\rho_1) = \frac{1}{2}-\frac{1}{4}\rbra*{ \frac{\HS(\rho_0,\rho_1)}{\sqrt{2}} }^2 \leq \frac{1}{2}-\frac{1}{4}\cdot\frac{\td(\rho_0,\rho_1)^2}{\dim\calH} \leq \frac{1}{2}-2^{-n-2}\cdot \rbra*{ 1-2^{-\frac{n}{2}-1} }^2.\]
    \end{itemize}
        
    Since $\PreciseBQP \subseteq \PP$, such as~\cite[Lemma 3.3]{GSSSY22}, we complete the proof by showing that the gap $p^{\sf (Y)}_{\calA}(\rho_0,\rho_1) - p^{\sf (N)}_{\calA}(\rho_0,\rho_1)$ is exponentially small as desired:
    \[p^{\sf (Y)}_{\calA}(\rho_0,\rho_1) - p^{\sf (N)}_{\calA}(\rho_0,\rho_1)= 2^{-2n-4} + 2^{-n-2}\cdot \rbra*{\frac{3}{4} - 2^{-n/2}} \geq 2^{-2n-4},\]
    where the last inequality holds for $n\geq 1$. 
\end{proof}


\section*{Acknowledgments}
\noindent
An earlier version of this work was included in the author's PhD thesis~\cite{Liu25}. 
The author expresses gratitude to François Le Gall for providing valuable suggestions to improve the presentation and engaging in insightful discussions. The author also thanks Qisheng Wang for proposing the use of the SWAP test in \Cref{thm:easy-regimes-for-QSZK}\ref{thmitem:QSZK-easy-regimes-NQP}. 
Moreover, the author thanks anonymous reviewers for enlightening comments on the polarization lemma, pointing out an error in \Cref{thm:easy-regimes-for-QSZK}\ref{thmitem:QSZK-easy-regimes-NQP} in an earlier version, mentioning previous uses of the quantum Jensen-Shannon divergence in quantum communication complexity, and useful suggestions for improving the presentation. 
The author was supported by JSPS KAKENHI Grants No.~JP20H04139, as well as JST, the establishment of University fellowships towards the creation of science technology innovation, Grant No.~JPMJFS2125. 
Circuit diagrams were drawn by the Quantikz package~\cite{Kay18}. 

\bibliographystyle{alpha}
\bibliography{QSZK_state_testing}

\newcommand{\etalchar}[1]{$^{#1}$}
\DeclareRobustCommand{\dutchPrefix}[2]{#2}\providecommand{\dutchPrefix}[2]{#2}\renewcommand{\dutchPrefix}[2]{#2}\newcommand{\prelimVersion}[2]{Preliminary version in \textit{\MakeUppercase{#1} #2}}\newcommand{\ECCC}[3]{\href{https://eccc.weizmann.ac.il/report/#1#2/#3/}{\texttt{ECCC:TR{#2}-{#3}}}}\newcommand{\arXiv}[1]{\href{https://arxiv.org/abs/#1}{\texttt{arXiv:#1}}}\newcommand{\IACR}[1]{\href{https://eprint.iacr.org/#1}{\texttt{IACR ePrint:#1}}}
\begin{thebibliography}{BCW{\dutchPrefix{Wolf}{d}}W01}

\bibitem[ADH97]{ADH97}
Leonard~M Adleman, Jonathan Demarrais, and Ming-Deh~A Huang.
\newblock Quantum computability.
\newblock {\em SIAM Journal on Computing}, 26(5):1524--1540, 1997.

\bibitem[AGL20]{AGL20}
Dorit Aharonov, Alex~B Grilo, and Yupan Liu.
\newblock $\mathsf{StoqMA}$ vs. $\mathsf{MA}$: the power of error reduction.
\newblock {\em arXiv preprint arXiv:2010.02835}, 2020.
\newblock \arXiv{2010.02835}.

\bibitem[AS17]{AS17}
Guillaume Aubrun and Stanis{\l}aw~J Szarek.
\newblock {\em Alice and Bob Meet Banach: The Interface of Asymptotic Geometric Analysis and Quantum Information Theory}, volume 223 of {\em Mathematical Surveys and Monographs}.
\newblock American Mathematical Society, 2017.

\bibitem[BCH{\etalchar{+}}19]{BCHTV19}
Adam Bouland, Lijie Chen, Dhiraj Holden, Justin Thaler, and Prashant~Nalini Vasudevan.
\newblock On the power of statistical zero knowledge.
\newblock {\em SIAM Journal on Computing}, 49(4):FOCS17--1, 2019.
\newblock \prelimVersion{FOCS}{2017}. \arXiv{1609.02888}.

\bibitem[BCW{\dutchPrefix{Wolf}{d}}W01]{BCWdW01}
Harry Buhrman, Richard Cleve, John Watrous, and Ronald {\dutchPrefix{Wolf}{d}}e~Wolf.
\newblock Quantum fingerprinting.
\newblock {\em Physical Review Letters}, 87(16):167902, 2001.
\newblock \arXiv{quant-ph/0102001}.

\bibitem[BDRV19]{BDRV19}
Itay Berman, Akshay Degwekar, Ron~D Rothblum, and Prashant~Nalini Vasudevan.
\newblock Statistical difference beyond the polarizing regime.
\newblock In {\em Theory of Cryptography Conference}, pages 311--332. Springer, 2019.
\newblock \ECCC{20}{19}{038}.

\bibitem[BGJ19]{BGJ19}
Rajendra Bhatia, Stephane Gaubert, and Tanvi Jain.
\newblock Matrix versions of the {Hellinger} distance.
\newblock {\em Letters in Mathematical Physics}, 109(8):1777--1804, 2019.
\newblock \arXiv{1901.01378}.

\bibitem[BH09]{BH09}
Jop Bri{\"e}t and Peter Harremo{\"e}s.
\newblock Properties of classical and quantum {Jensen-Shannon} divergence.
\newblock {\em Physical review A}, 79(5):052311, 2009.
\newblock \arXiv{0806.4472}.

\bibitem[Bha96]{Bhatia96}
Rajendra Bhatia.
\newblock {\em Matrix Analysis}, volume 169.
\newblock Springer Science \& Business Media, 1996.

\bibitem[BHMT02]{BHMT02}
Gilles Brassard, Peter H{\o}yer, Michele Mosca, and Alain Tapp.
\newblock Quantum amplitude amplification and estimation.
\newblock {\em Quantum Computation and Information}, 305:53--74, 2002.
\newblock \arXiv{quant-ph/0005055}.

\bibitem[BL13]{BL13}
Andrej Bogdanov and Chin~Ho Lee.
\newblock Limits of provable security for homomorphic encryption.
\newblock In {\em Annual Cryptology Conference}, pages 111--128. Springer, 2013.
\newblock \IACR{2013/344}.

\bibitem[BOW19]{BOW19}
Costin B{\u{a}}descu, Ryan O'Donnell, and John Wright.
\newblock Quantum state certification.
\newblock In {\em Proceedings of the 51st Annual ACM SIGACT Symposium on Theory of Computing}, pages 503--514, 2019.
\newblock \arXiv{1708.06002}.

\bibitem[BST10]{BASTS10}
Avraham {Ben-Aroya}, Oded Schwartz, and Amnon {Ta-Shma}.
\newblock Quantum expanders: Motivation and constructions.
\newblock {\em Theory of Computing}, 6:47--79, 2010.
\newblock \prelimVersion{CCC}{2008}.

\bibitem[Can20]{Canonne20}
Cl{\'e}ment~L Canonne.
\newblock A survey on distribution testing: Your data is big. but is it blue?
\newblock {\em Theory of Computing}, pages 1--100, 2020.
\newblock \ECCC{20}{15}{063}.

\bibitem[CCC19]{CCC19}
Patrick~J Coles, M~Cerezo, and Lukasz Cincio.
\newblock Strong bound between trace distance and {Hilbert-Schmidt} distance for low-rank states.
\newblock {\em Physical Review A}, 100(2):022103, 2019.
\newblock \arXiv{1903.11738}.

\bibitem[CCKV08]{CCKV08}
Andr{\'e} Chailloux, Dragos~Florin Ciocan, Iordanis Kerenidis, and Salil Vadhan.
\newblock Interactive and noninteractive zero knowledge are equivalent in the help model.
\newblock In {\em Theory of Cryptography Conference}, pages 501--534. Springer, 2008.
\newblock \IACR{2007/467}.

\bibitem[C{\dutchPrefix{Dam}{v}}DNT13]{CvDNT13}
Richard Cleve, Wim {\dutchPrefix{Dam}{v}}an~Dam, Michael Nielsen, and Alain Tapp.
\newblock Quantum entanglement and the communication complexity of the inner product function.
\newblock {\em Theoretical Computer Science}, 486:11--19, 2013.
\newblock Preliminary version in \textit{Quantum computing and quantum communication: First NASA International Conference} (1998). \arXiv{quant-ph/9708019}.

\bibitem[Cio21]{Ciosmak21}
Krzysztof~J Ciosmak.
\newblock Matrix {Hölder's} inequality and divergence formulation of optimal transport of vector measures.
\newblock {\em SIAM Journal on Mathematical Analysis}, 53(6):6932--6958, 2021.
\newblock \arXiv{2109.06588}.

\bibitem[CS20]{CS20}
Sam Cree and Jamie Sikora.
\newblock A fidelity measure for quantum states based on the matrix geometric mean.
\newblock {\em arXiv preprint arXiv:2006.06918}, 2020.
\newblock \arXiv{2006.06918}.

\bibitem[FC94]{FC94}
Christopher~A Fuchs and Carlton~M Caves.
\newblock Ensemble-dependent bounds for accessible information in quantum mechanics.
\newblock {\em Physical Review Letters}, 73(23):3047, 1994.

\bibitem[FGHP99]{FGHP99}
Stephen Fenner, Frederic Green, Steven Homer, and Randall Pruim.
\newblock Determining acceptance possibility for a quantum computation is hard for the polynomial hierarchy.
\newblock {\em Proceedings of the Royal Society of London. Series A: Mathematical, Physical and Engineering Sciences}, 455(1991):3953--3966, 1999.
\newblock \arXiv{quant-ph/9812056}.

\bibitem[F{\dutchPrefix{Graaf}{v}}dG99]{FvdG99}
Christopher~A Fuchs and Jeroen {\dutchPrefix{Graaf}{v}}an~de Graaf.
\newblock Cryptographic distinguishability measures for quantum-mechanical states.
\newblock {\em IEEE Transactions on Information Theory}, 45(4):1216--1227, 1999.
\newblock \arXiv{quant-ph/9712042}.

\bibitem[FO24]{FO23}
Steven~T Flammia and Ryan O'Donnell.
\newblock Quantum chi-squared tomography and mutual information testing.
\newblock {\em Quantum}, 8:1381, 2024.
\newblock \arXiv{2305.18519}.

\bibitem[GHMW15]{GHMW15}
Gus Gutoski, Patrick Hayden, Kevin Milner, and Mark~M Wilde.
\newblock Quantum interactive proofs and the complexity of separability testing.
\newblock {\em Theory of Computing}, 11(3):59--103, 2015.
\newblock \arXiv{1308.5788}.

\bibitem[Gol19]{Goldreich19}
Oded Goldreich.
\newblock Errata {(3-Feb-2019)}.
\newblock \url{http://www.wisdom.weizmann.ac.il/~/oded/entropy.html}, 2019.

\bibitem[GSS{\etalchar{+}}22]{GSSSY22}
Sevag Gharibian, Miklos Santha, Jamie Sikora, Aarthi Sundaram, and Justin Yirka.
\newblock Quantum generalizations of the polynomial hierarchy with applications to $\mathsf{QMA(2)}$.
\newblock {\em computational complexity}, 31(2):1--52, 2022.
\newblock \prelimVersion{MFCS}{2018}. \arXiv{1805.11139}.

\bibitem[GV99]{GV99}
Oded Goldreich and Salil Vadhan.
\newblock Comparing entropies in statistical zero knowledge with applications to the structure of {SZK}.
\newblock In {\em Proceedings of the 14th Annual IEEE Conference on Computational Complexity}, pages 54--73, 1999.
\newblock \ECCC{19}{98}{063}.

\bibitem[GV11]{GV11}
Oded Goldreich and Salil~P Vadhan.
\newblock On the complexity of computational problems regarding distributions.
\newblock {\em Studies in Complexity and Cryptography}, 6650:390--405, 2011.
\newblock \ECCC{20}{11}{004}.

\bibitem[Hia21]{Hiai21}
Fumio Hiai.
\newblock {\em Quantum $f$-divergences in {von Neumann} Algebras}.
\newblock Springer, 2021.

\bibitem[HJ12]{HJ12}
Roger~A Horn and Charles~R Johnson.
\newblock {\em Matrix analysis}.
\newblock Cambridge University Press, 2012.

\bibitem[Hol73]{Holevo73}
Alexander~S Holevo.
\newblock Bounds for the quantity of information transmitted by a quantum communication channel.
\newblock {\em Problemy Peredachi Informatsii}, 9(3):3--11, 1973.

\bibitem[JUW09]{JUW09}
Rahul Jain, Sarvagya Upadhyay, and John Watrous.
\newblock Two-message quantum interactive proofs are in $\mathsf{PSPACE}$.
\newblock In {\em Proceedings of the 50th Annual IEEE Symposium on Foundations of Computer Science}, pages 534--543. IEEE, 2009.
\newblock \arXiv{0905.1300}.

\bibitem[Kai67]{Kailath67}
Thomas Kailath.
\newblock The divergence and {Bhattacharyya} distance measures in signal selection.
\newblock {\em IEEE Transactions on Communication Technology}, 15(1):52--60, 1967.

\bibitem[Kay18]{Kay18}
Alastair Kay.
\newblock Tutorial on the quantikz package.
\newblock {\em arXiv preprint arXiv:1809.03842}, 2018.
\newblock \arXiv{1809.03842}.

\bibitem[KLN19]{KLGN19}
Hirotada Kobayashi, François {Le Gall}, and Harumichi Nishimura.
\newblock Generalized quantum {Arthur--Merlin} games.
\newblock {\em SIAM Journal on Computing}, 48(3):865--902, 2019.
\newblock \prelimVersion{CCC}{2015}. \arXiv{1312.4673}.

\bibitem[KMY09]{KMY09}
Hirotada Kobayashi, Keiji Matsumoto, and Tomoyuki Yamakami.
\newblock Quantum {Merlin-Arthur} proof systems: Are multiple {Merlins} more helpful to {Arthur}?
\newblock {\em Chicago Journal of Theoretical Computer Science}, 2009:3, 2009.
\newblock \prelimVersion{ISACC}{2003}. \arXiv{quant-ph/0306051}.

\bibitem[Kob03]{Kobayashi03}
Hirotada Kobayashi.
\newblock Non-interactive quantum perfect and statistical zero-knowledge.
\newblock In {\em Proceedings of the 14th International Symposium on Algorithms and Computation}, pages 178--188. Springer, 2003.
\newblock \arXiv{quant-ph/0207158}.

\bibitem[LC86]{LeCam86}
Lucien Le~Cam.
\newblock {\em Asymptotic methods in statistical decision theory}.
\newblock Springer Science \& Business Media, 1986.

\bibitem[Liu21]{Liu21}
Yupan Liu.
\newblock $\mathsf{StoqMA}$ meets distribution testing.
\newblock In {\em 16th Conference on the Theory of Quantum Computation, Communication and Cryptography (TQC 2021)}. Schloss Dagstuhl-Leibniz-Zentrum f{\"u}r Informatik, 2021.
\newblock \arXiv{2011.05733}.

\bibitem[Liu25]{Liu25}
Yupan Liu.
\newblock {\em Complexity-theoretic perspectives on quantum state testing}.
\newblock PhD thesis, Nagoya University, 2025.

\bibitem[LLW23]{LGLW23}
Fran{\c{c}}ois {Le Gall}, Yupan Liu, and Qisheng Wang.
\newblock Space-bounded quantum state testing via space-efficient quantum singular value transformation.
\newblock {\em arXiv preprint arXiv:2308.05079v2}, 2023.
\newblock \arXiv{2308.05079v2}.

\bibitem[LW25a]{LW25Lalpha}
Yupan Liu and Qisheng Wang.
\newblock On estimating the quantum $\ell_{\alpha}$ distance.
\newblock In {\em Proceedings of the 33rd Annual European Symposium on Algorithms ({ESA} 2025)}, volume 351 of {\em LIPIcs}, pages 105:1--105:20. Schloss Dagstuhl - Leibniz-Zentrum f{\"{u}}r Informatik, 2025.
\newblock \arXiv{2505.00457}.

\bibitem[LW25b]{LW25entropy}
Yupan Liu and Qisheng Wang.
\newblock On estimating the trace of quantum state powers.
\newblock In {\em Proceedings of the 2025 Annual ACM-SIAM Symposium on Discrete Algorithms (SODA)}, pages 947--993. SIAM, 2025.
\newblock \arXiv{2410.13559}.

\bibitem[MLP05]{MLP05}
Ana~P Majtey, Pedro~W Lamberti, and Domingo~P Prato.
\newblock {Jensen-Shannon} divergence as a measure of distinguishability between mixed quantum states.
\newblock {\em Physical Review A}, 72(5):052310, 2005.
\newblock \arXiv{quant-ph/0508138}.

\bibitem[MW05]{MW05}
Chris Marriott and John Watrous.
\newblock Quantum {Arthur--Merlin} games.
\newblock {\em Computational Complexity}, 14(2):122--152, 2005.
\newblock \prelimVersion{CCC}{2004}. \arXiv{cs/0506068}.

\bibitem[M{\dutchPrefix{Wolf}{d}}W16]{MdW16}
Ashley Montanaro and Ronald {\dutchPrefix{Wolf}{d}}e~Wolf.
\newblock A survey of quantum property testing.
\newblock {\em Theory of Computing}, pages 1--81, 2016.
\newblock \arXiv{1310.2035}.

\bibitem[NC10]{NC10}
Michael~A Nielsen and Isaac~L Chuang.
\newblock {\em Quantum computation and quantum information}.
\newblock Cambridge university press, 2010.

\bibitem[NS02]{NS02}
Ashwin Nayak and Julia Salzman.
\newblock On communication over an entanglement-assisted quantum channel.
\newblock In {\em Proceedings of the 34th Annual ACM Symposium on Theory of computing}, pages 698--704, 2002.
\newblock \arXiv{quant-ph/0206122}.

\bibitem[Pet07]{Petz07}
D{\'e}nes Petz.
\newblock {\em Quantum information theory and quantum statistics}.
\newblock Springer Science \& Business Media, 2007.

\bibitem[RS90]{RS90}
Mary~B Ruskai and Frank~H Stillinger.
\newblock Convexity inequalities for estimating free energy and relative entropy.
\newblock {\em Journal of Physics A: Mathematical and General}, 23(12):2421, 1990.

\bibitem[RW05]{RW05}
Bill Rosgen and John Watrous.
\newblock On the hardness of distinguishing mixed-state quantum computations.
\newblock In {\em Proceedings of the 20th Annual IEEE Conference on Computational Complexity}, pages 344--354. IEEE, 2005.
\newblock \arXiv{cs/0407056}.

\bibitem[Sra21]{Sra21}
Suvrit Sra.
\newblock Metrics induced by {Jensen-Shannon} and related divergences on positive definite matrices.
\newblock {\em Linear Algebra and its Applications}, 616:125--138, 2021.
\newblock \arXiv{1911.02643}.

\bibitem[SV03]{SV97}
Amit Sahai and Salil Vadhan.
\newblock A complete problem for statistical zero knowledge.
\newblock {\em Journal of the ACM}, 50(2):196--249, 2003.
\newblock \prelimVersion{FOCS}{1997}. \ECCC{20}{00}{084}.

\bibitem[TKR{\etalchar{+}}10]{TKRWV10}
Kristan Temme, Michael~James Kastoryano, Mary~Beth Ruskai, Michael~Marc Wolf, and Frank Verstraete.
\newblock The $\chi^2$-divergence and mixing times of quantum {Markov} processes.
\newblock {\em Journal of Mathematical Physics}, 51(12):122201, 2010.
\newblock \arXiv{1005.2358}.

\bibitem[Top00]{Top00}
Flemming Topsøe.
\newblock Some inequalities for information divergence and related measures of discrimination.
\newblock {\em IEEE Transactions on Information Theory}, 46(4):1602--1609, 2000.

\bibitem[TV15]{TV15}
Kristan Temme and Frank Verstraete.
\newblock Quantum chi-squared and goodness of fit testing.
\newblock {\em Journal of Mathematical Physics}, 56(1):012202, 2015.
\newblock \arXiv{1112.6343}.

\bibitem[Vad99]{Vad99}
Salil~P Vadhan.
\newblock {\em A study of statistical zero-knowledge proofs}.
\newblock PhD thesis, Massachusetts Institute of Technology, 1999.

\bibitem[Vir21]{Virosztek21}
D{\'a}niel Virosztek.
\newblock The metric property of the quantum {Jensen-Shannon} divergence.
\newblock {\em Advances in Mathematics}, 380:107595, 2021.
\newblock \arXiv{1910.10447}.

\bibitem[VW16]{VW16}
Thomas Vidick and John Watrous.
\newblock Quantum proofs.
\newblock {\em Foundations and Trends{\textregistered} in Theoretical Computer Science}, 11(1-2):1--215, 2016.
\newblock \arXiv{1610.01664}.

\bibitem[Wat02]{Wat02}
John Watrous.
\newblock Limits on the power of quantum statistical zero-knowledge.
\newblock In {\em Proceedings of the 43rd Annual IEEE Symposium on Foundations of Computer Science}, pages 459--468. IEEE, 2002.
\newblock \arXiv{quant-ph/0202111}.

\bibitem[Wat09]{Wat09}
John Watrous.
\newblock Zero-knowledge against quantum attacks.
\newblock {\em SIAM Journal on Computing}, 39(1):25--58, 2009.
\newblock \prelimVersion{STOC}{2006}. \arXiv{quant-ph/0511020}.

\bibitem[{\dutchPrefix{Wolf}{d}}W19]{deWolf19}
Ronald {\dutchPrefix{Wolf}{d}}e~Wolf.
\newblock Quantum computing: Lecture notes.
\newblock {\em arXiv preprint arXiv:1907.09415}, 2019.
\newblock \arXiv{1907.09415}.

\bibitem[WZ24]{WZ23}
Qisheng Wang and Zhicheng Zhang.
\newblock Fast quantum algorithms for trace distance estimation.
\newblock {\em IEEE Transactions on Information Theory}, 70(4):2720--2733, 2024.
\newblock \arXiv{2301.06783}.

\bibitem[Yeh20]{Yehudayoff20}
Amir Yehudayoff.
\newblock Pointer chasing via triangular discrimination.
\newblock {\em Combinatorics, Probability and Computing}, 29(4):485--494, 2020.
\newblock \ECCC{20}{16}{151}.

\bibitem[YY99]{YY99}
Tomoyuki Yamakami and Andrew~C Yao.
\newblock $\mathsf{NQP}_{\bbC} = \mathsf{coC_{=}P}$.
\newblock {\em Information Processing Letters}, 71(2):63--69, 1999.
\newblock \arXiv{quant-ph/9812032}.

\end{thebibliography}

\end{document}